\documentclass[10pt]{iopart}
\usepackage[utf8]{inputenc}
\usepackage{mathtools}
\usepackage{float}
\usepackage{graphicx}
\usepackage[margin=0.95in]{geometry}
\usepackage{physics}
\usepackage{amsmath}
\usepackage{bbm}
\usepackage{subcaption}
\usepackage{epstopdf}
\usepackage{bbold}
\usepackage{soul}


\usepackage[colorlinks=true, allcolors = blue]{hyperref}

\usepackage{fancyhdr}
\cfoot{\thepage}
\lhead[\leftmark]{}
\rhead[]{\leftmark}
\pagestyle{fancy}

\usepackage{mwe}
\usepackage{stfloats}
\usepackage{makecell}
\usepackage{amssymb}
\usepackage{animate}
\usepackage{comment}
\usepackage{listings}
\usepackage{amsthm}
\usepackage{setspace}
\usepackage[ruled,vlined]{algorithm2e}
\usepackage[outercaption]{sidecap}    
\sidecaptionvpos{figure}{t}
\usepackage{empheq}

\usepackage{tocloft}
\advance\cftsecnumwidth 6em\relax
\advance\cftsubsecindent 6em\relax
\advance\cftsubsubsecindent 5em\relax

\usepackage{afterpage}

\usepackage{color} 
\usepackage{multicol}
\usepackage{gensymb}

\usepackage{biblatex}
\bibliography{references.bib}

\usepackage[official]{eurosym}
\usepackage[framed,numbered]{matlab-prettifier}
\lstset{style=Matlab-editor}
\definecolor{mygreen}{RGB}{28,172,0} 
\definecolor{mylilas}{RGB}{170,55,241}
\definecolor{cy}{HTML}{0088A3}

\theoremstyle{definition}
\newtheorem{definition}{Definition}

\newtheorem{theorem}{Theorem}[section]

\newtheorem{lemma}[theorem]{Lemma}

\newtheorem*{remark}{Remark}
\newtheorem*{tada}{Theorem}
\newtheorem*{tada2}{Theorem}
\newtheorem{proposition}[theorem]{Proposition}

\DefineBibliographyStrings{english}{%
  mathesis = {MSc Thesis}
}


\begin{document}

\title{Spectral estimation for Hamiltonians: a comparison between classical imaginary-time evolution and quantum real-time evolution}
\author{ME Stroeks$^{1,2}$, J Helsen$^3$ and BM Terhal$^{1,2,4}$}
\address{$^1$ QuTech, Delft University of Technology, P.O. Box 5046, 2600 GA Delft, The Netherlands}
\address{$^2$ Faculty of EEMCS, Delft University of Technology, Mekelweg 4, 2628 CD Delft, The Netherlands}
\address{$^3$ CWI $\&$ QuSoft, Science Park 123, 1098 XG Amsterdam, The Netherlands}
\address{$^4$ JARA Institute for Quantum Information, Forschungszentrum Juelich, D-52425 Juelich, Germany}
\ead{m.e.h.m.stroeks@tudelft.nl, Jonas.Helsen@cwi.nl, b.m.terhal@tudelft.nl}

\begin{abstract}
We present a classical Monte Carlo (MC) scheme which efficiently estimates an imaginary-time, decaying signal for stoquastic (i.e. sign-problem-free) local Hamiltonians. The decay rates in this signal correspond to Hamiltonian eigenvalues (with associated eigenstates present in an input state) and can be classically extracted using a classical signal processing method like ESPRIT. We compare the efficiency of this MC scheme to its quantum counterpart in which one extracts eigenvalues of a general local Hamiltonian from a real-time, oscillatory signal obtained through quantum phase estimation circuits, again using the ESPRIT method. 
We prove that the ESPRIT method can resolve $S={\rm poly}(n)$ eigenvalues, assuming a $1/{\rm poly}(n)$ gap between them, with ${\rm poly}(n)$ quantum and classical effort through the quantum phase estimation circuits, assuming efficient preparation of the input state. We prove that our Monte Carlo scheme plus the ESPRIT method can resolve $S=O(1)$ eigenvalues, assuming a $1/{\rm poly}(n)$ gap between them, with ${\rm poly}(n)$ purely classical effort for stoquastic Hamiltonians, requiring some access structure to the input state. However, we also show that under these assumptions, i.e. $S=O(1)$ eigenvalues, assuming a $1/{\rm poly}(n)$ gap between them and some access structure to the input state, one can achieve this with ${\rm poly}(n)$ purely classical effort for {\em general} local Hamiltonians. These results thus quantify some opportunities and limitations of classical Monte Carlo methods for spectral estimation of Hamiltonians. We numerically compare the MC eigenvalue estimation scheme (for stoquastic Hamiltonians) and the QPE eigenvalue estimation scheme by implementing them for an archetypal stoquastic Hamiltonian system: the transverse field Ising chain. 
\end{abstract}
\noindent{\it Keywords\/}:
quantum information, stoquastic Hamiltonian, spectral estimation, Monte Carlo algorithm, quantum phase estimation, ESPRIT.\smallbreak

\maketitle

\tableofcontents

\section{Introduction}
\label{sec:intro}

In general, it is a computationally intractable task to obtain, by classical or quantum means, the eigenvalues of a Hamiltonian $H$ associated with a many-body quantum system. However, more restricted tasks related to estimating the spectrum of $H$ can be executed on a quantum computer by means of quantum phase estimation algorithms \cite{MikeIke, SHF:QPE,TomBarbara,somma:njp}, using the ability to simulate the real-time dynamics $e^{-iHt/\hbar}$ efficiently on a quantum computer via Trotterization \cite{Lloyd}.

\textit{Classical} alternatives are provided by quantum Monte Carlo methods \cite{book:QMC, QMCreview}. The efficiency of quantum Monte Carlo methods when used to simulate many-body systems is generally limited by the \textit{sign problem}. This can cause the variance of the estimator in the Monte Carlo algorithm to grow exponentially in the system size $n$, necessitating an exponential number of runs of the Monte Carlo algorithm.

A (ubiquitous) class of Hamiltonians that is sign-problem-free has been formalized under the name \textit{stoquastic} Hamiltonians \cite{Stoq}. Roughly speaking, a (real-valued) Hamiltonian is stoquastic (in a particular basis $\mathcal{B}$) if its off-diagonal elements are non-positive: $\bra{x}H\ket{y}\leq 0$, for $x \neq y$ (with $\ket{x}$, $\ket{y}$ being elements of $\mathcal{B}$). As a consequence, its associated Gibbs density matrix $e^{-\tau H}$ is an element-wise non-negative matrix (for $\tau \in \mathbb{R}_{+}$). This property makes it particularly suitable for Monte Carlo sampling as complexity results \cite{Stoq, ComplStoq, AB:stoq} and various algorithmic results \cite{BravyiGosset,SergeyQPEvsMC,CH:mixing,Crosson} have demonstrated. 

Since stoquastic Hamiltonians are sign-problem-free, it is of interest to see if one can indeed prove that (part of its) spectrum can be efficiently estimated through classical Monte Carlo methods. Conversely, can quantum algorithms, \textit{even} for stoquastic Hamiltonians, provide an advantage over Monte Carlo algorithms in carrying out this task? In this work, we address these questions by making a direct comparison between the task of estimating the spectral content of a stoquastic local Hamiltonian in an input state via a quantum circuit versus via a classical Monte Carlo scheme. In addition, we investigate to what extent this task can be efficiently carried out classically for a general local Hamiltonian. 

Central in our study is, first of all, the real-time {\em signal} 
\begin{equation}
    g_R(k) = \bra{\Phi}e^{-iHk\Delta t}\ket{\Phi} = \sum_{j=1}^{2^{n}}\bigl\lvert\braket{\psi_{j}}{\Phi}\bigr\rvert^{2}\Big(e^{-iE_{j}\Delta t}\Big)^{k},
    \label{eq:signal-QPE}
\end{equation}
for $k=0, 1\ldots, K$ and where $\ket{\Phi}$ is some pre-specified $n$-qubit input state. The estimation of $g_R(k)$ for various $k$ is a crucial step in the quantum phase estimation algorithm (QPE). In what follows we will fix $\Delta t$ so that the eigenstates $\ket{\psi_j}$ with nonzero or substantial overlap $|\bra{\psi_j} \Phi\rangle|^2 > 0$, showing up in the signal, have the property that $E_j \Delta t \in [0,2\pi)$. Thus, from now on, we assume that these $E_j$ are shifted and rescaled to lie in $[0,2\pi)$. 
We will assume that there are at most $S$ eigenvectors with nonzero $ |\bra{\psi_j} \Phi\rangle|^2$, where $S$ is desired to be ${\rm poly}(n)$ or less for overall efficiency. 
Identifying a state $\ket{\Phi}$ which has non-zero overlap on only a few ($S={\rm poly}(n)$ or $S=O(1)$) eigenstates and which obeys the assumptions in the following Theorems is not so simple, and can be considered one of the bottlenecks in using quantum phase estimation or other Monte Carlo methods to determine spectral information of the Hamiltonian.

Besides the real-time signal, one can define the imaginary-time \textit{signal}
\begin{equation}
g_I(k) = \bra{\Phi}e^{-Hk }\ket{\Phi} = \sum_{j=1}^{2^{n}}\bigl\lvert\braket{\psi_{j}}{\Phi}\bigr\rvert^{2}\Big(e^{-E_{j}}\Big)^{k},
\label{eq:signal-MC}
\end{equation}
where again we can assume that $E_j \in [0,2\pi)$. We will prove, {\em for local stoquastic Hamiltonians}, that the quantum cost of estimating $g_R(k)$ and the classical Monte Carlo cost of estimating $g_I(k)$ within error $\epsilon$ are \emph{approximately identical}, although the assumptions on our knowledge/preparation costs of $\ket{\Phi}$ are slightly different in the two cases. 

The two statements are as follows:

\bigskip
\begin{theorem}
For a local Hamiltonian acting on $n$ qubits, one can estimate $g_R(k)$ in Eq.~\eqref{eq:signal-QPE} with probability at least $1-\delta$ with sampling error $\epsilon$ and Trotter error $\epsilon_{\text{trot}}$ (and total error $\epsilon_{\rm tot}=\epsilon+\epsilon_{\text{trot}}$), using quantum circuits acting on $n+1$ qubits, 
where the depth of the quantum circuit scales as $\mathcal{O}\big(k^{1+o(1)}\big) \mathcal{O}\big(\epsilon_{\text{trot}}^{-o(1)}\big) \times {\rm poly}(n)$ and the number of times one executes the circuit is $\Theta(\epsilon^{-2} \log(4\delta^{-1}))$, under the assumption that $\ket{\Phi}$ is a state of $n$ qubits which can be generated by a ${\rm poly}(n)$-size quantum circuit. Hence to obtain $g_R(k)$ for $k=0,\ldots, K$, with error at most $\epsilon_{\rm tot}=\epsilon+\epsilon_{\rm trot}$ for all $k$, with probability $1-\delta$ requires using quantum circuits for $k=0, \ldots, K$, each acting on $n+1$ qubits, where the depth of the quantum circuit scales as $\mathcal{O}\big(k^{1+o(1)}\big) \mathcal{O}\big(\epsilon_{\text{trot}}^{-o(1)}\big)\times {\rm poly}(n)$ and each circuit is repeated $\Theta(\epsilon^{-2}\left[ \log(4\delta^{-1})+\log(K)\right])$ times.
\label{thm:re}
\end{theorem}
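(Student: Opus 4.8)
The plan is to realize $g_R(k)$ as the output of a Hadamard-test circuit on $n+1$ qubits. First I would prepare the $n$-qubit input state $\ket{\Phi}$ using the assumed ${\rm poly}(n)$-size circuit, append an ancilla qubit in state $\ket{+}$, and apply the controlled unitary $c\text{-}U^k$ where $U$ is a Trotter approximation of $e^{-iH\Delta t}$. Measuring the ancilla in the $X$ (resp.\ $Y$) basis yields a $\pm 1$ random variable whose expectation is $\mathrm{Re}\,\bra{\Phi}U^k\ket{\Phi}$ (resp.\ $\mathrm{Im}\,\bra{\Phi}U^k\ket{\Phi}$); combining the two gives an unbiased estimator of $\bra{\Phi}U^k\ket{\Phi}$. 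The first order of business is to control the Trotter error: using a high-order ($2p$-th order) product formula for $e^{-iH\Delta t}$ over $k$ steps, standard bounds (e.g. Childs et al.) give that $\|U^k - e^{-iHk\Delta t}\| \le \epsilon_{\text{trot}}$ at circuit depth $\mathcal{O}(k^{1+1/2p})\,\mathcal{O}(\epsilon_{\text{trot}}^{-1/2p})\times{\rm poly}(n)$, which one rewrites as $\mathcal{O}(k^{1+o(1)})\mathcal{O}(\epsilon_{\text{trot}}^{-o(1)})\times{\rm poly}(n)$ by taking $p$ slowly growing. This contributes the $\epsilon_{\text{trot}}$ part of the error and the stated circuit depth.

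Next I would handle the sampling error. Each run returns a bounded random variable (real and imaginary parts each in $[-1,1]$), so by Hoeffding's inequality, averaging $M = \Theta(\epsilon^{-2}\log(4\delta^{-1}))$ independent runs estimates $\mathrm{Re}\,\bra{\Phi}U^k\ket{\Phi}$ to additive error $\epsilon/2$ (say) with failure probability at most $\delta/2$, and likewise for the imaginary part; a union bound over the two gives total failure probability $\delta$. Adding the Trotter bias $\epsilon_{\text{trot}}$ yields total error $\epsilon_{\rm tot} = \epsilon + \epsilon_{\text{trot}}$ for a single $k$, as claimed. The spectral decomposition in Eq.~\eqref{eq:signal-QPE} is immediate from inserting the eigenbasis of $H$ and plays no role beyond interpreting what is being estimated.

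Finally, for the ``all $k$ simultaneously'' statement I would apply a union bound over $k = 0,1,\ldots,K$: to guarantee that every one of the $K+1$ estimates is within $\epsilon_{\rm tot}$ of its target with total failure probability $\delta$, it suffices that each individual estimate fails with probability at most $\delta/(K+1)$, which by Hoeffding requires $M = \Theta\!\big(\epsilon^{-2}[\log(4\delta^{-1}) + \log(K)]\big)$ repetitions per circuit; the circuit for index $k$ still has depth $\mathcal{O}(k^{1+o(1)})\mathcal{O}(\epsilon_{\text{trot}}^{-o(1)})\times{\rm poly}(n)$. I expect the only genuinely delicate point to be the bookkeeping of the Trotter error across $k$ steps and its translation into the $o(1)$ exponents — i.e.\ choosing the product-formula order $p$ as a slowly growing function of the parameters so that both $k^{1/2p}$ and $\epsilon_{\text{trot}}^{-1/2p}$ become $k^{o(1)}$ and $\epsilon_{\text{trot}}^{-o(1)}$ respectively — whereas the Hadamard-test construction and the Hoeffding/union-bound steps are routine.
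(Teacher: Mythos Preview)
Your proposal is correct and follows essentially the same route as the paper: the paper packages the Hadamard-test-plus-Chernoff step as Lemma~\ref{prop:retimesignalest} (with the same $\Theta(\epsilon^{-2}\log(4\delta^{-1}))$ sample count obtained from bounded $\pm1$ outcomes), then invokes the higher-order Trotter bounds of~\cite{childs+:trotter} to get $L = {\rm poly}(n)\,\mathcal{O}(k^{1+1/p}\epsilon_{\text{trot}}^{-1/p})$, and finally applies the union bound over $k=0,\ldots,K$ exactly as you do. Your remark that the $o(1)$ exponents require letting the product-formula order $p$ grow slowly is in fact more careful than the paper's own phrasing, which simply asserts ``for given order $p=O(1)$ \ldots\ $L$ scales as $k^{1+o(1)}$''.
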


\bigskip
\begin{theorem}
For a local stoquastic Hamiltonian acting on $n$ qubits, one can estimate $g_I(k)$ in Eq.~\eqref{eq:signal-MC} with probability at least $1-\delta$ with total error $\epsilon_{\rm tot}=\epsilon+\epsilon_{\rm trot}$, using a classical MC algorithm on $n$-bit strings where the depth of the algorithm scales as $\mathcal{O}\big(k^{1+o(1)}\big) \mathcal{O}\big(\epsilon_{\text{trot}}^{-o(1)}\big)\times{\rm poly}(n)$ and the number of times one runs the algorithm is $\Theta(\epsilon^{-2} \text{log}(\delta^{-1}))$, under the assumption that $\ket{\Phi} = \sum_{x=1}^{2^{n}}\Phi(x)\ket{x}$ is a normalized state of $n$ qubits such that (1) $\frac{\Phi(y)}{\Phi(x)}$ can be efficiently (${\rm poly}(n)$) calculated for a {\em given} $x$ and $y$ and (2) we can efficiently draw samples from the probability distribution $P(x) = \bigl\lvert \Phi(x) \bigr\rvert^{2}$. Hence to obtain $g_I(k)$ for all $k=0,\ldots, K$, with error at most $\epsilon_{\rm tot}$ for each $k$, with probability $1-\delta$ requires using a classical MC algorithm on $n$-bit strings for $k=0, \ldots, K$, where the depth of each algorithm scales as $\mathcal{O}\big(k^{1+o(1)}\big) \mathcal{O}\big(\epsilon_{\text{trot}}^{-o(1)}\big)\times{\rm poly}(n)$ and the number of times one runs the algorithm (for each $k$) is $\Theta(\epsilon^{-2}\left[ \text{log}(\delta^{-1}) + \text{log}(K) \right])$.
\label{thm:im}
\end{theorem}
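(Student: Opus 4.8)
The plan is to transcribe the proof of Theorem~\ref{thm:re} into the imaginary-time/classical setting, replacing the real-time Trotterisation by an imaginary-time one and the Hadamard test by an importance-sampled classical random walk, letting stoquasticity be exactly what makes that walk free of signs. First I decompose $H=\sum_{a=1}^{\Gamma}H_a$ with $\Gamma={\rm poly}(n)$, each $H_a$ acting on $\mathcal{O}(1)$ qubits, and shift/rescale $H$ so that the relevant eigenvalues lie in $[0,2\pi)$; absorbing one further constant we may also take $H\succeq 0$, so that $\lVert e^{-Hk}\rVert\le 1$ and hence $\lvert g_I(k)\rvert\le 1$ for the quantity in \eqref{eq:signal-MC}. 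A high-order product formula gives $e^{-Hk}\approx V_k:=T_L T_{L-1}\cdots T_1$, where $L=\mathcal{O}\big(k^{1+o(1)}\big)\,\mathcal{O}\big(\epsilon_{\rm trot}^{-o(1)}\big)\times{\rm poly}(n)$ and each $T_l$ has the form $e^{-H_a\delta}$ for some term $H_a$ and small step $\delta$, acting on $\mathcal{O}(1)$ qubits; the imaginary-time Trotter-error analysis is the real-time one with intermediate operator norms replaced by $e^{\mathcal{O}(k)}$, which for fixed $k$ only changes constants (alternatively one invokes a product-formula bound phrased directly for $e^{-Ht}$). Stoquasticity enters here: writing $-H_a=A_a-d_aI$ with $A_a$ entrywise non-negative and $d_a=\max_x\bra{x}H_a\ket{x}$, we get $T_l=e^{-d_a\delta}\sum_{m\ge 0}(\delta A_a)^m/m!\ge 0$ entrywise, so $V_k$ is an entrywise non-negative matrix.

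\textbf{The Monte Carlo estimator.} Put $\gamma(x):=\lvert\Phi(x)\rvert$, so $\gamma\ge 0$ and $\sum_x\gamma(x)^2=1$, and $\theta(x):=\arg\Phi(x)$. Expanding $\bra{\Phi}V_k\ket{\Phi}$ over configuration trajectories $x_0\to x_1\to\cdots\to x_L$ yields a sum with non-negative transfer weights $T_l(x_l,x_{l-1})$. The algorithm draws $x_0$ from $P(x)=\gamma(x)^2$ (assumption~(2)) and then, for $l=1,\dots,L$, moves from $x_{l-1}$ to $x_l$ with probability $p_l(x_l\mid x_{l-1})=\gamma(x_l)\,T_l(x_l,x_{l-1})/\hat Z_l(x_{l-1})$, where $\hat Z_l(x):=\sum_{x'}\gamma(x')T_l(x',x)$; since $T_l$ touches only $\mathcal{O}(1)$ qubits, this transition and its normalisation depend only on ratios $\gamma(x')/\gamma(x)=\lvert\Phi(x')/\Phi(x)\rvert$ between $\mathcal{O}(1)$-Hamming-close strings, each computable in ${\rm poly}(n)$ by assumption~(1). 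Accumulating $W:=\prod_{l=1}^{L}\hat Z_l(x_{l-1})$ along the trajectory and outputting
\begin{equation}
\hat g\;:=\;W\,\exp\!\big(i[\theta(x_0)-\theta(x_L)]\big)
\end{equation}
(the relative phase again obtained from the ratio $\Phi(x_0)/\Phi(x_L)$), one checks — by telescoping the factors $\gamma(x_l)$, i.e. via the identity $\prod_l T_l(x_l,x_{l-1})=p(\mathrm{traj})\,W\,\gamma(x_0)/\gamma(x_L)$ — that $\mathbb{E}[\hat g]=\bra{\Phi}V_k\ket{\Phi}=g_I(k)+\mathcal{O}(\epsilon_{\rm trot})$. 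Each run uses $L$ Markov steps of ${\rm poly}(n)$ cost each, i.e. depth $\mathcal{O}\big(k^{1+o(1)}\big)\mathcal{O}\big(\epsilon_{\rm trot}^{-o(1)}\big)\times{\rm poly}(n)$, matching the claim.

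\textbf{The variance bound, and the rest.} The crux — and the step I expect to be the main obstacle — is to show $\mathbb{E}[\lvert\hat g\rvert^2]=\mathcal{O}(1)$. Using that all accumulated weights are non-negative reals, which is exactly the point at which stoquasticity removes the sign problem, the same telescoping reduces $\mathbb{E}[\lvert\hat g\rvert^2]=\mathbb{E}[W^2]$ to $\bra{\gamma}B_L\cdots B_1\ket{\gamma}$ with $\ket{\gamma}=\sum_x\gamma(x)\ket{x}$ and $B_l:=T_l\,\mathrm{diag}\big((T_l\gamma)/\gamma\big)$; up to the product-formula error this equals $\bra{\gamma}e^{-\hat Hk}\ket{\gamma}$ for the \emph{tilted} Hamiltonian $\hat H:=H+\mathrm{diag}(E_{\mathrm{loc}})$, where $E_{\mathrm{loc}}(x):=(H\gamma)(x)/\gamma(x)$ is the local energy of the positive state $\ket{\gamma}$. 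Bounding $\bra{\gamma}e^{-\hat Hk}\ket{\gamma}=\mathcal{O}(1)$ is immediate when $\ket{\Phi}$ is an eigenvector of $H$ (then $\hat H=H+E\,I$ and the quantity is $g_I(k)^2\le 1$, i.e. zero variance), and in general it is here that compatibility of $\ket{\Phi}$ with $H$ — controlled amplitude ratios between neighbouring strings / bounded local energy, the ``access structure'' alluded to in the statement — is needed to keep the fluctuations of $W$ bounded; this is precisely the place where the argument is more delicate than the trivial $\lvert\pm 1\rvert\le 1$ bound available in the Hadamard-test proof of Theorem~\ref{thm:re}. Given $\mathbb{E}[\lvert\hat g\rvert^2]=\mathcal{O}(1)$, a median of $\Theta(\log\delta^{-1})$ means of $\Theta(\epsilon^{-2})$ independent runs returns $g_I(k)$ within $\epsilon$ with probability $\ge 1-\delta$; to get all $k=0,\dots,K$ simultaneously with total error $\epsilon_{\rm tot}$ and failure probability $\delta$, run the scheme for each $k$ with per-$k$ failure probability $\delta/(K+1)$, i.e. with $\Theta\!\big(\epsilon^{-2}[\log(\delta^{-1})+\log K]\big)$ runs each, and take a union bound, exactly mirroring the final part of Theorem~\ref{thm:re}.
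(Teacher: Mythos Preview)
Your overall architecture --- Trotterise $e^{-kH}$ into a product of local nonnegative factors, estimate the resulting overlap by an importance-sampled classical walk, then median-of-means and a union bound over $k$ --- matches the paper exactly. The gap is in the variance step, and it is a real one. You guide the walk by $\gamma(x)=|\Phi(x)|$, which forces you to bound $\mathbb{E}[W^2]=\langle\gamma|e^{-\hat Hk}|\gamma\rangle$ for the tilted operator $\hat H=H+\mathrm{diag}(E_{\rm loc})$ with $E_{\rm loc}(x)=(H\gamma)(x)/\gamma(x)$. You correctly flag this as the crux and appeal to ``bounded local energy'' as the extra ingredient, but the theorem does \emph{not} assume any such compatibility between $\ket{\Phi}$ and $H$: it only assumes efficient sampling from $|\Phi|^2$ and computable amplitude ratios. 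A one-qubit example already breaks your bound: for $H=-X+I$ and $\Phi\propto(1,\epsilon)$ one has $E_{\rm loc}(1)\sim -1/\epsilon$, so $\langle\gamma|e^{-\hat Hk}|\gamma\rangle\sim\epsilon^2 e^{k/\epsilon}\to\infty$ as $\epsilon\to 0$, while assumptions (1) and (2) hold trivially. (Separately, your telescoping identity $\prod_lT_l=p(\mathrm{traj})\,W\,\gamma(x_0)/\gamma(x_L)$ is off by the intermediate $\gamma(x_l)$ factors, so the estimator as written is not unbiased; this is fixable but becomes moot once the variance issue is addressed.)

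The paper's resolution is to change the guide function: instead of importance-sampling by $\gamma$, each local factor $G_l=e^{-\delta H_a}$ (shifted so $\lambda_{\min}(H_a)=0$, hence $G_l$ has eigenvalues in $(0,1]$) is guided by the \emph{Perron--Frobenius eigenvector} $\phi_l^b$ of its own irreducible nonnegative block, which is efficiently computable because $G_l$ acts on $\mathcal{O}(1)$ qubits. The transition is $P_l(x_{l-1}\!\to\!x_l)=(\lambda_l^b)^{-1}\bra{x_{l-1}}G_l\ket{x_l}\,\phi_l^b(x_l)/\phi_l^b(x_{l-1})$ and the estimator is $\mathcal{R}(\boldsymbol{x})=\tfrac{\Phi(x_L)}{\Phi(x_0)}\prod_l\lambda_l^b\,\phi_l^b(x_{l-1})/\phi_l^b(x_l)$. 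The second moment then telescopes to $\sum_{\boldsymbol{x}}|\Phi(x_L)|^2\prod_lQ_l(x_{l-1},x_l)$ with $Q_l(x,y)=\bra{x}G_l\ket{y}\lambda_l^b\,\phi_l^b(x)/\phi_l^b(y)$, and Hermiticity of $G_l$ plus the eigenvector property give $\sum_xQ_l(x,y)=(\lambda_l^b)^2\le 1$; iterating over $x_0,\ldots,x_{L-1}$ yields $\mathbb{E}|\mathcal{R}|^2\le 1$ \emph{unconditionally}, with no hypothesis on how $\ket{\Phi}$ relates to $H$. The assumptions on $\ket{\Phi}$ are used only to draw $x_0$ and to evaluate the single ratio $\Phi(x_L)/\Phi(x_0)$ at the end, not to control fluctuations. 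Once you swap your $\gamma$-guide for the local Perron--Frobenius guide, the rest of your write-up (Trotter cost, median-of-means, union bound) goes through as in the paper.
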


\bigskip
We then ask, given knowledge of either the real-time signal $g_R(k)$ or imaginary-time signal $g_I(k)$, what can be learnt about those eigenvalues $E_j$, whose associated eigenstates have nonzero overlap with the input state $\ket{\Phi}$? The signals $g_I(k)$ and $g_R(k)$ respectively correspond to a probabilistic sum of decaying components and a sum of oscillating components with decay rates and oscillation frequencies $E_{j}$ as a function of discrete `time' $k = 0,\ldots,K$. Hence a method which extracts those decay and oscillation rates from knowing $g_I(k)$ or $g_R(k)$ at various $k$ is needed. A method of choice which has already been used in quantum information theory is the matrix pencil method \cite{Sarkar, Sarkar2,pt:prony} (with equivalent methods known as ESPRIT and MUSIC). This method has been used for processing randomized benchmarking data \cite{OWE:RB, helsen2020general}, quantum phase estimation \cite{TomBarbara}, spectral tomography of superoperators \cite{francesco}, for processing experimental time-series data to identify Hamiltonian parameters \cite{hangleiter2021precise} or generally in processing discretely-sampled decaying Ramsey signals.

Using this method, it is known that if {\em either} $g_R(k)$ or $g_I(k)$ is known {\em exactly} for $k=0, \ldots, K$ where $K+1\geq 2S$, one can learn those eigenvalues $E_j$ and probabilities $|\bra{\psi_j} \Phi\rangle |^2$ {\em exactly}. However, in the presence of sampling and Trotter noise, the resolving power also depends on the gap between the eigenvalues $E_j$, the number $S$ of eigenvalues and whether we extract them from an oscillating or decaying signal. Our work is thus focused on understanding whether there are fundamental advantages in learning $g_R(k)$ with noise versus learning $g_I(k)$ with noise, as this quantifies the benefit of a quantum algorithm versus a classical algorithm for spectral estimation of (stoquastic) Hamiltonians. 

Not surprisingly, there are drawbacks to processing data from the imaginary-time evolution. As the signal decays exponentially, $k$ cannot be chosen too large otherwise the signal becomes smaller than the noise. Our goal is to quantify this precisely and show that, at least theoretically, a regime exists in which the Monte Carlo method may be competitive.

The first statement we make can be viewed as a summary of previous work, namely it combines Lemma \ref{prop:retimesignalest} via Theorem \ref{thm:re} and the performance of the ESPRIT method in Theorem \ref{thm:osc-esprit-gap} in the presence of a gap:

\bigskip
\begin{theorem}
Given a local Hamiltonian on $n$ qubits. Let the number of eigenvectors supported in some (efficient-to-prepare) input state $\ket{\Phi}$ be $S=p_1(n)$ (with $p_1(n)$ some polynomial in $n$), and each occurs with nonzero probability at least $1/{\rm poly}(n)$. Furthermore, assume that the $S$ eigenvalues $\{E_i\}$ with $E_i \in [0,2\pi)$ are sufficiently well-separated, i.e. at least by a gap $\Delta \geq C/K$ with constant $C$ and $K=\Theta(p_1(n))$. Then using Hadamard test (QPE) quantum circuits plus signal post-processing via ESPRIT, each requiring a ${\rm poly}(n)$ effort, one can resolve the eigenvalues $\{E_j\}$ with distance $d(\{E_i\}, \{\tilde{E}_j\})$ (defined in Eq.~\eqref{eq:matching_error}) at most $1/{\rm poly}(n)$.
\label{thm:QPE-total}
\end{theorem}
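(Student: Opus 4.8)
The plan is to chain together the three ingredients named just before the statement: Lemma~\ref{prop:retimesignalest}, which realises the Hadamard-test estimator of $g_R(k)$; Theorem~\ref{thm:re}, which bounds the cost of producing $g_R(k)$ to a prescribed accuracy; and Theorem~\ref{thm:osc-esprit-gap}, the ESPRIT perturbation bound for an oscillating signal in the presence of a spectral gap. First I would fix the free error parameters. Write the per-point total error as $\epsilon_{\rm tot}=\epsilon+\epsilon_{\rm trot}$ and set $\epsilon=\epsilon_{\rm trot}=\tfrac{1}{2}\,p_2(n)^{-1}$ for a polynomial $p_2(n)$ to be chosen at the end. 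By Theorem~\ref{thm:re}, obtaining estimates $\hat g_R(k)$ with $|\hat g_R(k)-g_R(k)|\le\epsilon_{\rm tot}$ for all $k=0,\ldots,K$ simultaneously with probability $\ge 1-\delta$ costs quantum circuits on $n+1$ qubits of depth $\mathcal{O}(K^{1+o(1)})\,\mathcal{O}(\epsilon_{\rm trot}^{-o(1)})\times{\rm poly}(n)$, each repeated $\Theta(\epsilon^{-2}[\log(4\delta^{-1})+\log K])$ times. Since $K=\Theta(p_1(n))={\rm poly}(n)$, since $\epsilon^{-1}={\rm poly}(n)$, since $\epsilon_{\rm trot}^{-o(1)}$ is sub-polynomial, and taking $\delta$ constant (or inverse-polynomial), this data-acquisition stage uses only ${\rm poly}(n)$ circuit depth, ${\rm poly}(n)$ shots, and ${\rm poly}(n)$ classical post-processing.

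Next I would feed the noisy vector $(\hat g_R(0),\ldots,\hat g_R(K))$ into ESPRIT. By Eq.~\eqref{eq:signal-QPE} this is exactly the model treated in Theorem~\ref{thm:osc-esprit-gap}: a sum of $S=p_1(n)$ complex exponentials with nodes $e^{-iE_j\Delta t}$ on the unit circle, with weights $\lvert\braket{\psi_j}{\Phi}\rvert^{2}\ge p_{\min}:=1/{\rm poly}(n)$, corrupted by an $\ell^\infty$ perturbation at most $\epsilon_{\rm tot}$. One first checks the identifiability requirement $K+1\ge 2S$; since $K=\Theta(p_1(n))$ and $S=p_1(n)$ this only asks that the hidden constant be at least $2$, which we assume. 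The gap hypothesis $\Delta\ge C/K$ is precisely what is needed to invoke Theorem~\ref{thm:osc-esprit-gap}, which then returns estimates $\{\tilde E_j\}$ with matching distance $d(\{E_i\},\{\tilde E_j\})$ (Eq.~\eqref{eq:matching_error}) at most $F\cdot\epsilon_{\rm tot}$, where $F=F(K,S,1/\Delta,1/p_{\min})$ is polynomial in each of its arguments. Under the hypotheses all of $K$, $S$, $1/\Delta\le K/C$, and $1/p_{\min}$ are ${\rm poly}(n)$, so $F={\rm poly}(n)$ and hence $d(\{E_i\},\{\tilde E_j\})\le {\rm poly}(n)\cdot\epsilon_{\rm tot}$.

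It remains to close the loop: given that $F={\rm poly}(n)$, choose $p_2(n)$ to be a sufficiently large polynomial so that $F/p_2(n)$ is below the desired $1/{\rm poly}(n)$ accuracy; this only inflates the shot count $\propto\epsilon^{-2}=4p_2(n)^2$ and the sub-polynomial Trotter overhead, leaving the total cost ${\rm poly}(n)$. The success probability is inherited directly — conditioned on the data-acquisition event (probability $\ge 1-\delta$) the ESPRIT step is deterministic — so the final guarantee holds with probability $\ge 1-\delta$. The one place that requires genuine care, and the main obstacle, is the precise statement of Theorem~\ref{thm:osc-esprit-gap}: one must verify that the condition-number-type prefactor $F$ multiplying the noise level grows only \emph{polynomially} in $K$, $S$, $1/\Delta$ and $1/p_{\min}$. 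This is exactly where the gap assumption $\Delta\ge C/K$ enters, since it lower-bounds the smallest singular value of the Vandermonde/Hankel matrices built from the nodes $e^{-iE_j\Delta t}$; without a gap this prefactor can blow up super-polynomially. Once that bound is in hand, the composition above is routine bookkeeping.
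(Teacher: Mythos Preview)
Your proposal is correct and follows exactly the approach the paper takes: the paper states explicitly that Theorem~\ref{thm:QPE-total} ``combines Lemma~\ref{prop:retimesignalest} via Theorem~\ref{thm:re} and the performance of the ESPRIT method in Theorem~\ref{thm:osc-esprit-gap} in the presence of a gap,'' and its entire argument is the paragraph after Theorem~\ref{thm:osc-esprit-gap} noting that $\norm{H(\eta)}\le K\epsilon_{\rm tot}$ (Eq.~\eqref{eq:upper}), that $h_1,h_2$ in Theorem~\ref{thm:osc-esprit-gap} are polynomial in $S$ when $K\sim S$ and $C$ is constant, and that one can therefore take $\epsilon_{\rm tot}$ inverse-polynomial in $S$ to make $d(\{E_i\},\{\tilde E_j\})=\Theta(S^2\epsilon_{\rm tot}/c_{\rm min})$ as small as desired. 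Your write-up is in fact more detailed than the paper's own, and your identification of the ``main obstacle'' (polynomial control of the ESPRIT prefactor via the gap) is precisely the content of Theorem~\ref{thm:osc-esprit-gap}.
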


\bigskip
For local stoquastic Hamiltonians the combination of Lemma \ref{prop:imtimesignalest}, via Theorem \ref{thm:im} and the performance of the ESPRIT method in Theorem \ref{thm:final} in the presence of a gap leads to:

\bigskip
\begin{theorem}
Given a local stoquastic Hamiltonian on $n$ qubits. Let the number of eigenvectors supported in some efficient-to-sample (i.e. with ${\rm poly}(n)$ effort) input state $\ket{\Phi}$ be $S=O(1)$, and each occurs with nonzero probability at least $1/{\rm poly}(n)$. In addition, assume that for a fixed $x,y$ it is efficient to compute $\frac{\Phi(y)}{\Phi(x)}$. Furthermore, assume that the $S$ eigenvalues $\{E_i\}, E_i \in [0,2\pi)$ are sufficiently well-separated, i.e. at least by $\Delta \geq 1/{\rm poly}(n)$ with some ${\rm poly}(n)$. Then using a Monte Carlo algorithm plus signal post-processing via ESPRIT, each requiring (some) ${\rm poly}(n)$ effort, one can resolve the eigenvalues $\{E_j\}$ with distance $d(\{E_i\}, \{\tilde{E}_j\})$ at most $1/{\rm poly}(n)$.
\label{thm:MC-total}
\end{theorem}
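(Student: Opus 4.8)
The plan is to chain the imaginary-time signal-estimation guarantee of Theorem~\ref{thm:im} (equivalently Lemma~\ref{prop:imtimesignalest}) together with the gapped ESPRIT perturbation bound of Theorem~\ref{thm:final}, and then to pick the internal accuracy parameters as suitable inverse polynomials in $n$ so that both the total classical cost and the final eigenvalue error come out as the advertised polynomials. The first move is to fix the number of samples: since $S=O(1)$ it suffices to take $K+1$ equal to a fixed multiple of $S$, so that $K=\Theta(S)=O(1)$, and crucially $K$ does \emph{not} grow with $n$. This is exactly the point at which the hypothesis $S=O(1)$ is used: because the signal in Eq.~\eqref{eq:signal-MC} is $g_I(k)=\sum_j p_j\,(e^{-E_j})^k$ with $E_j\in[0,2\pi)$, each node power obeys $e^{-2\pi k}\le (e^{-E_j})^k\le 1$, so for $k\le K=O(1)$ every such contribution to the signal stays $\Omega(1)$ and no exponentially small signal-to-noise ratio appears; if $S$, hence $K$, were allowed to scale with $n$, the contributions of the higher eigenvalues at $k\approx K$ would shrink like $e^{-\Theta(K)}$, dropping below any polynomially-achievable noise floor, which is precisely the obstruction flagged in Section~\ref{sec:intro}.

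With $K$ fixed, I would invoke Theorem~\ref{thm:im}. Its access-structure hypotheses --- efficient sampling from $P(x)=|\Phi(x)|^2$ and efficient evaluation of $\Phi(y)/\Phi(x)$ for given $x,y$ --- are exactly those assumed in Theorem~\ref{thm:MC-total}, so we obtain estimates $\tilde g_I(k)$ with $|\tilde g_I(k)-g_I(k)|\le\epsilon_{\rm tot}=\epsilon+\epsilon_{\rm trot}$ for all $k=0,\dots,K$ simultaneously, with probability at least $1-\delta$, using a classical MC algorithm whose per-$k$ depth is $\mathcal{O}(k^{1+o(1)})\,\mathcal{O}(\epsilon_{\rm trot}^{-o(1)})\times{\rm poly}(n)$ and which is run $\Theta(\epsilon^{-2}[\log\delta^{-1}+\log K])$ times. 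Because $K=O(1)$, this is ${\rm poly}(n)$ as soon as $\epsilon$ and $\epsilon_{\rm trot}$ are taken to be $1/{\rm poly}(n)$ (and $\delta$ a constant, say); the union bound over the $K+1$ values of $k$ is already built into the $\log K$ term.

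Next I would feed the noisy vector $(\tilde g_I(0),\dots,\tilde g_I(K))$ into ESPRIT and apply Theorem~\ref{thm:final}. The relevant nodes here are the real numbers $z_j=e^{-E_j}\in(e^{-2\pi},1]$; since $x\mapsto e^{-x}$ has derivative of magnitude between $e^{-2\pi}$ and $1$ on $[0,2\pi)$, a gap $\Delta\ge 1/{\rm poly}(n)$ between the $E_j$ yields a node gap at least $e^{-2\pi}\Delta=\Omega(1/{\rm poly}(n))$, and, conversely, an error $\eta$ in a recovered node pulls back through $E_j=-\log z_j$ to an eigenvalue error at most $e^{2\pi}\eta=O(\eta)$ (using once more that the $z_j$ are bounded away from $0$). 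Since $S=O(1)$, the condition-number-type factors that appear in Theorem~\ref{thm:final} --- functions of $S$, of the node gap, and of $\min_j p_j\ge 1/{\rm poly}(n)$ --- are at most ${\rm poly}(n)$; hence the recovered eigenvalues satisfy $d(\{E_i\},\{\tilde E_j\})\le\epsilon_{\rm tot}\times{\rm poly}(n)$, with $d(\cdot,\cdot)$ as in Eq.~\eqref{eq:matching_error}. Finally, choosing $\epsilon=\epsilon_{\rm trot}=1/q(n)$ for a sufficiently large polynomial $q$ --- larger than the product of the ESPRIT condition-number polynomial with the reciprocal of the desired resolution --- forces $d(\{E_i\},\{\tilde E_j\})\le 1/{\rm poly}(n)$ while keeping the MC depth ($\mathcal{O}(1)\cdot\mathcal{O}(q(n)^{o(1)})\times{\rm poly}(n)$), the number of MC runs ($\Theta(q(n)^2[\log\delta^{-1}+\log K])$), and the fixed-size ESPRIT eigendecomposition all ${\rm poly}(n)$.

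The hard part will be the honest verification that the hypotheses of the gapped ESPRIT statement, Theorem~\ref{thm:final}, are genuinely met here --- in particular that whatever relation it demands between the input noise level and the signal strength/gap holds for our choice of parameters, and that the non-unimodular (decaying) node locations do not introduce a noise amplification that scales with $n$ rather than being a function of $S$ alone. As sketched above this is controlled precisely by $S=O(1)\Rightarrow K=O(1)$, so the amplification reduces to the $n$-independent constants $e^{\pm 2\pi}$ together with $S$-dependent but $n$-independent condition numbers; tracking these through Theorem~\ref{thm:final} is where the remaining work lies. Everything else is the routine bookkeeping of inverse polynomials indicated above.
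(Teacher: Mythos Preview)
Your proposal is correct and follows essentially the same approach as the paper: it is precisely the combination of Theorem~\ref{thm:im} (via Lemma~\ref{prop:imtimesignalest}) with the decaying-signal ESPRIT bound of Theorem~\ref{thm:final}, choosing $K=\Theta(S)=O(1)$ and then noting that for $S=O(1)$ the factors $g_1(S,\Delta)$ and $g_2(S,\Delta)$ in Theorem~\ref{thm:final} are at worst polynomial in $1/\Delta={\rm poly}(n)$, while $c_{\min}\ge 1/{\rm poly}(n)$, so an inverse-polynomial $\epsilon_{\rm tot}$ suffices. The only minor remark is that the conversion between node errors and energy errors you sketch is already absorbed into the statement of Theorem~\ref{thm:final}, so you need not redo it.
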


\bigskip
Theorem \ref{thm:MC-total} immediately begs the question whether such a result could hold for general local Hamiltonians as well: the assumptions that there are only $S=O(1)$ eigenstates in the initial state, as well as the assumption of efficient access to the initial state appear rather strong. To address this question, we define another real-valued, decaying signal as 
\begin{equation}
g_D(k) = \bra{\Phi}\left(I-H/2\pi\right)^{k}\ket{\Phi} = \sum_{j=1}^{2^{n}}\bigl\lvert\braket{\psi_{j}}{\Phi}\bigr\rvert^{2}\Big(I-H/2\pi \Big)^{k}.
\label{eq:signal-MC-D}
\end{equation}
If $S=O(1)$ and if $g_D(k)$ can be estimated with some accuracy for $k=1,\ldots, K=O(1)$, we can also apply the ESPRIT method to extract these $S$ eigenvalues. We note that this requires that the eigenvalues $E_j$ are bounded away from $2\pi$. Hence if we use $g_D(k)$ we assume that we have shifted and rescaled the eigenvalues so that, say, the $E_j$s lie in $[0,\pi]$.

One can prove that for general local Hamiltonians, assuming $S=O(1)$ eigenvalues in $\ket{\Phi}$, one can estimate $g_D(k)$ with $\epsilon$ accuracy, under an assumption about the access to $\ket{\Phi}$ which is identical to the Monte Carlo case for stoquastic Hamiltonians (Theorem \ref{thm:MC-total}). In fact, this result shows that Theorem \ref{thm:MC-total} is not particular to local stoquastic Hamiltonians at all, if we only care about `nominally ${\rm poly}(n)$' algorithms. However, the computational cost of estimating $g_D(k)$ for general local Hamiltonians is significantly higher in practice compared to the Monte Carlo method for stoquastic Hamiltonians. The result expressed in Lemma \ref{lem:SVT} can be viewed as `dequantization' as it is similar in spirit to the Singular Value Transformation (SVT) tool (Theorem 3 in \cite{dequant:GG}). Theorem 3 in \cite{dequant:GG} is used to construct an algorithm that estimates the ground state energy of a Hamiltonian to $O(1)$ (in $n$) precision, given an initial state with only some constant overlap with the ground state. 

Applying the ESPRIT analysis to Lemma \ref{lem:SVT}, we will obtain the following Theorem:

\bigskip
\begin{theorem}
Given a local Hamiltonian on $n$ qubits. Let the number of eigenvectors supported in some efficient-to-sample (${\rm poly}(n)$ effort) input state $\ket{\Phi}$ be $S=O(1)$, and each occurs with nonzero probability at least $1/{\rm poly}(n)$. In addition, assume that for a fixed $x,y$ it is efficient to compute $\frac{\Phi(y)}{\Phi(x)}$. Furthermore, assume that the $S$ eigenvalues $\{E_i\}, E_i \in [0,\pi]$ are sufficiently well-separated, i.e. at least by $\Delta \geq 1/{\rm poly}(n)$ with some ${\rm poly}(n)$. Then using Lemma \ref{lem:SVT} plus signal post-processing via ESPRIT, each requiring (some) ${\rm poly}(n)$ classical effort, one can resolve the eigenvalues $\{E_j\}$ with distance $d(\{E_i\}, \{\tilde{E}_j\})$ at most $1/{\rm poly}(n)$.
\label{thm:MCD-total}
\end{theorem}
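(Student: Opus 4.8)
The plan is to mirror the derivation of Theorem~\ref{thm:MC-total}: I would combine a $\mathrm{poly}(n)$-cost classical estimator for the relevant signal — here $g_D(k)$, supplied by Lemma~\ref{lem:SVT} — with the gap-dependent ESPRIT guarantee of Theorem~\ref{thm:final}. \textbf{First}, I would recast the estimation problem as a well-conditioned exponential-sum problem. Writing $g_D(k)=\sum_{j=1}^{S}p_j\,z_j^{\,k}$ with amplitudes $p_j=|\langle\psi_j|\Phi\rangle|^2$ and nodes $z_j=1-E_j/2\pi$, the hypotheses give $p_j\ge 1/\mathrm{poly}(n)$; and since we have rescaled so that $E_j\in[0,\pi]$, the nodes lie in the real interval $[1/2,1]$ — in particular bounded away from $0$, so no component is annihilated — while the eigenvalue gap $\Delta\ge 1/\mathrm{poly}(n)$ becomes a node separation $|z_i-z_j|\ge\Delta/2\pi\ge 1/\mathrm{poly}(n)$. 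Thus $g_D$ is a sum of $S=O(1)$ real decaying exponentials whose rates are $1/\mathrm{poly}(n)$-separated in a bounded interval and whose weights are $\ge 1/\mathrm{poly}(n)$, which is precisely the input class governed by Theorem~\ref{thm:final} after the trivial affine relabelling $E\mapsto z$.

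\textbf{Second}, I would estimate the signal itself. Fix $K$ to be a constant with $K+1\ge 2S$, which is possible because $S=O(1)$. By Lemma~\ref{lem:SVT}, under the two access assumptions on $\ket{\Phi}$ — efficient sampling from $|\Phi(x)|^2$ and efficient evaluation of $\Phi(y)/\Phi(x)$ — one can produce estimates $\hat g_D(k)$ with additive error at most $\epsilon$ and failure probability at most $\delta$ at cost polynomial in $n$, $1/\epsilon$ and $\log(1/\delta)$, for each fixed $k\le K$. Because $K=O(1)$ the total cost of obtaining $\hat g_D(0),\dots,\hat g_D(K)$ is still $\mathrm{poly}(n,1/\epsilon,\log(1/\delta))$, and a union bound over the $K+1=O(1)$ estimates keeps the overall failure probability $O(\delta)$. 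I would emphasise that, in contrast to $g_I$ over a $\mathrm{poly}(n)$-length window, $g_D$ decays by only a constant factor over $k\in\{0,\dots,K\}$, so there is no exponential loss of signal-to-noise here.

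\textbf{Third}, I would feed $\{\hat g_D(k)\}_{k=0}^{K}$ into ESPRIT and invoke Theorem~\ref{thm:final}: the matching distance $d(\{z_i\},\{\tilde z_j\})$ is bounded by $\epsilon$ times a factor polynomial in $1/\Delta$ and $1/\min_j p_j$ (and depending on $S$, which is $O(1)$); by the first step this factor is $\mathrm{poly}(n)$. Mapping the recovered nodes back via $\tilde E_j=2\pi(1-\tilde z_j)$ scales the matching distance by $2\pi$, so $d(\{E_i\},\{\tilde E_j\})\le \epsilon\cdot\mathrm{poly}(n)$; choosing $\epsilon=1/q(n)$ for a sufficiently large fixed polynomial $q$ makes this $1/\mathrm{poly}(n)$ while, by the second step, keeping the total classical effort $\mathrm{poly}(n)$, which is the claim.

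The main obstacle is the third step — controlling the ESPRIT amplification. One must verify that the Hankel/Vandermonde condition numbers entering Theorem~\ref{thm:final} for $S=O(1)$ real nodes that are only $1/\mathrm{poly}(n)$-separated in $[1/2,1]$ are themselves $\mathrm{poly}(n)$ rather than exponentially large. This is exactly where the hypothesis $S=O(1)$ is indispensable and cannot be relaxed along these lines: for $S=\mathrm{poly}(n)$ clustered real nodes the analogous condition number would be exponential in $n$ — the very obstruction that already limits the imaginary-time method — so the constant-$S$ restriction here is not an artefact of the proof.
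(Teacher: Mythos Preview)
Your proposal is correct and follows essentially the same route as the paper: fix $K=O(1)$ with $K+1\ge 2S$ so that the $[\mathrm{poly}(n)]^k$ cost in Lemma~\ref{lem:SVT} stays polynomial, then feed the $O(1)$ noisy samples of $g_D$ into the decaying-signal ESPRIT bound, whose error amplification is $\Delta^{-O(S)}/c_{\min}=\mathrm{poly}(n)$ since $S=O(1)$, $\Delta\ge 1/\mathrm{poly}(n)$ and $c_{\min}\ge 1/\mathrm{poly}(n)$. The only cosmetic difference is that the paper packages the ``trivial affine relabelling'' you mention as a separate statement, Theorem~\ref{thm:final-D}, tailored to nodes $z_i=1-E_i/2\pi$ (with constants $\tilde g_1,\tilde g_2$ replacing $g_1,g_2$), rather than invoking Theorem~\ref{thm:final} directly; your observation that $E_j\in[0,\pi]$ keeps the nodes bounded away from $0$ is exactly why the paper restricts to that interval.
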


\bigskip

To investigate practical aspects of the MC scheme for stoquastic Hamiltonians and compare it to the quantum scheme, we numerically study the one-dimensional Ising chain in a transverse field $g$ \cite{Sachdev} in a \textit{proof-of-principle} setting. We numerically study, amongst several other aspects, the recovery of the ground-state and first-excited-state eigenvalues in the ($g>1$)-regime from the signals $g_{R}(k)$ and $g_{I}(k)$ (in the presence of sampling noise and Trotter error) using the ESPRIT method.

An overview of the paper is as follows. In Section \ref{sec:QPE}, we review the \textit{Hadamard} or \textit{overlap} quantum subroutine (Lemma \ref{prop:retimesignalest}) and we present the Monte Carlo algorithm (Lemma \ref{prop:imtimesignalest}) for stoquastic Hamiltonians with its proof, as well as stating a straightforward Lemma \ref{lem:SVT} on `dequantization'. Section \ref{sec:MPM-analysis} reviews the ESPRIT method and has an extensive \ref{app:MPM} in which we prove the performance of the ESPRIT method for imaginary-time decaying signals using many lemmas also needed in the real-time signal case. The arguments for Theorem \ref{thm:MCD-total} are presented in Section \ref{sec:MPM-analysis} as well. In Section \ref{simsection}, we numerically compare the quantum scheme and the Monte Carlo scheme (for stoquastic Hamiltonians) for determining part of the spectrum of a transverse field Ising chain. In Section \ref{sec:con}, we discuss our work and propose some directions for future study. Several appendices give additional background information and details. 

We note that very extensive literature exists on the Monte Carlo power method \cite{book:QMC} in which one applies a sequences of steps which gradually project an initial input state onto the ground state. In this method, unlike in our MC scheme of Lemma \ref{prop:imtimesignalest}, one renormalizes the state after each iteration, so that the signal does not die out. In our approach, we do not renormalize, but study the decay rates themselves. In terms of other previous work, we note that in \cite{SergeyQPEvsMC} the ground state energy of a stoquastic Hamiltonian was efficiently estimated by means of a projector Monte Carlo scheme, under an additional `guiding state' promise. In \cite{motta:imag} the authors consider the implementation of the imaginary-time evolution $\exp(-\tau H)$ on a quantum computer in order to prepare a ground state of any local Hamiltonian. Note that our goal is not to prepare any ground or excited state but rather only learn some eigenvalues.

In the remainder of this section, we will review a few definitions which are used in this paper.

\begin{definition}
\textbf{Stoquastic Hamiltonians} A (real-valued) Hamiltonian $H$ is (globally) stoquastic \cite{Stoq} in a basis $\mathcal{B}$ if all its off-diagonal elements are non-positive: $\bra{x}H\ket{y}\leq 0$, for $x\neq y$ (and states $\ket{x}$, $\ket{y}$ being elements of basis $\mathcal{B}$).
\end{definition}
In this work, we are interested in Hamiltonians that are \textit{local} and \textit{stoquastic}:
\begin{definition}
\textbf{Local Hamiltonians} A Hamiltonian $H$ associated with a system consisting of $n$ degrees of freedom (e.g. spins/qubits) is local if it admits a decomposition into a set of Hermitian operators $\{H_{i}\}$ -- i.e. $\sum_{i}^{N}H_{i}$ -- such that each $H_{i}$ acts non-trivially on $O(1)$ (not growing with $n$) degrees of freedom of the system.
\label{LHpaper}
\end{definition}
\noindent
We denote the maximum number of degrees of freedom on which each $H_{i}$ acts non-trivially (i.e. its \textit{locality}) by $k$ and note that the number of terms in a local Hamiltonian is $N=O(n^k)$.

For local Hamiltonians there is a slightly stronger notion of stoquasticity, called termwise stoquasticity, which can differ from the definition of stoquasticity given above, see \cite{Stoq,ComplStoq}. 
\begin{definition}
\textbf{Termwise stoquastic Hamiltonians} A (real-valued) $k$-local Hamiltonian $H$ is $m$-termwise stoquastic in a basis $\mathcal{B}$ if it admits a decomposition into (real-valued) $m$($\geq k$)-local terms $\{H_{a}\}$ such that each $H_{a}$ is stoquastic: $\forall a$, $\bra{x}H_{a}\ket{y}\leq 0$, for $x\neq y$ (and states $\ket{x}$, $\ket{y}$ being elements of basis $\mathcal{B}$).
\end{definition}
Most many-body Hamiltonians considered in physics which are stoquastic are $O(1)$-termwise stoquastic. The results in this paper apply to both termwise stoquastic as well as globally stoquastic Hamiltonians (using some small adaptions employing results in \cite{ComplStoq}), and we will refer to them simply as `stoquastic'. 

For a matrix $X$ we will use the operator or spectral norm $\norm{X}=\sqrt{\lambda_{\rm max}(X^{\dagger}X)}=\sigma_{\rm max}(X)$, where $\sigma_{\rm max}(X)$ is the largest singular value of $X$. We also refer to the Frobenius norm $\norm{X}_F=\sqrt{{\rm Tr} (X^{\dagger} X)}$ and the induced$-\infty$ norm $\norm{X}_{\infty}=\max_{i}\sum_j |X_{ij}|$. For an $m\times n$ matrix $X$, we use $\norm{X} \leq \sqrt{m} \norm{X}_{\infty}$ and $\norm{X} \leq \norm{X}_F$.

\section{Quantum scheme versus Monte Carlo scheme for spectral estimation}
\label{sec:QPE}
In this section we show how to estimate $g_R(k)$ on a quantum computer, and $g_I(k)$ for stoquastic Hamiltonians via a Monte Carlo algorithm, as well as how to estimate $g_D(k)$ inefficiently (in $k$) via a classical algorithm for general local Hamiltonians.

Lemma \ref{prop:retimesignalest} states a well-known quantum subroutine, namely the Hadamard or overlap test, while a new result, a Monte Carlo version of the routine, is proved in Lemma \ref{prop:imtimesignalest}. After these Lemmas, the proofs of Theorems \ref{thm:re} and \ref{thm:im} are given. Then we give Lemma \ref{lem:SVT} for general local Hamiltonians, using similar tools as in Lemma \ref{prop:imtimesignalest}.

We note that the overlap test is used in versions of quantum phase estimation which do not aim at preparing an energy eigenstate of the Hamiltonian, but rather only learn the spectral content in its input state, as in Refs.~\cite{TomBarbara, somma:njp, lin2021heisenberglimited}. Here we basically follow this approach for the real-time quantum evolution, which can in addition be randomized to save on implementation costs, see \cite{WBC:random}.

\bigskip
\begin{lemma}[Hadamard or Overlap Test]
Let $\mathcal{F}  \equiv \bra{\Phi} G_{1} G_{2}\: ...\: G_{L} \ket{\Phi}$, 
where:
\begin{enumerate}
    \item $\ket{\Phi} = \sum_{x=1}^{2^{n}}\Phi(x)\ket{x}$ is a state of $n$ qubits which can be generated by a ${\rm poly}(n)$-size quantum circuit. 
    \item Each $G_{l}$ is a $k$-local unitary matrix.
\end{enumerate}
 $\mathcal{F}$ can be estimated within error $\epsilon$ with probability at least $1-\delta$ with a quantum circuit with $\Theta(\epsilon^{-2}\log(4\delta^{-1})) \times \left[\Theta(L) + \text{poly(n)}\right]$ single and two-qubit gates. 
\label{prop:retimesignalest}
\end{lemma}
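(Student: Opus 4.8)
The plan is to use the standard Hadamard (overlap) test with a single ancilla. First I would prepare the $(n+1)$-qubit state $\ket{+}\otimes\ket{\Phi}$, using one Hadamard on the ancilla and the promised ${\rm poly}(n)$-size circuit $U_\Phi$ that prepares $\ket{\Phi}$ on the remaining $n$ qubits. Then I would apply $U \equiv G_{1}G_{2}\cdots G_{L}$ conditioned on the ancilla, producing $\tfrac{1}{\sqrt 2}\bigl(\ket{0}\ket{\Phi}+\ket{1}U\ket{\Phi}\bigr)$. A Hadamard on the ancilla followed by a computational-basis measurement then returns outcome $0$ with probability $\tfrac12\bigl(1+\mathrm{Re}\,\mathcal{F}\bigr)$; inserting a phase gate on the ancilla before the final Hadamard instead yields outcome $0$ with probability $\tfrac12\bigl(1+\mathrm{Im}\,\mathcal{F}\bigr)$. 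Running these two circuit variants and estimating the two bias parameters therefore recovers $\mathrm{Re}\,\mathcal{F}$ and $\mathrm{Im}\,\mathcal{F}$, hence $\mathcal{F}$.

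Next I would bound the gate count of a single run. State preparation costs ${\rm poly}(n)$ by assumption, and the ancilla operations are $O(1)$. The conditioned evolution is a product of $L$ controlled gates, one controlled-$G_{l}$ for each $l$; since each $G_{l}$ is $k$-local with $k=O(1)$, each controlled-$G_{l}$ acts on $k+1=O(1)$ qubits and, by universality/exact synthesis, compiles into $O(4^{k})=O(1)$ single- and two-qubit gates. Hence the conditioned evolution uses $\Theta(L)$ single- and two-qubit gates, and a single run uses $\Theta(L)+{\rm poly}(n)$ gates in total.

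Then I would handle the statistics. Each circuit variant outputs a Bernoulli random variable, so by Hoeffding's inequality $M$ independent repetitions give an empirical mean within additive error $t$ of the true bias with probability at least $1-2e^{-2Mt^{2}}$. Taking $t=\Theta(\epsilon)$ and $M=\Theta\bigl(\epsilon^{-2}\log(\delta^{-1})\bigr)$ makes each of the real- and imaginary-part estimates accurate to within $\epsilon/\sqrt 2$ with failure probability at most $\delta/2$; a union bound over the two variants then gives $|\widehat{\mathcal{F}}-\mathcal{F}|\le\epsilon$ with probability at least $1-\delta$, the constant $4$ inside the logarithm absorbing the factor-of-two bookkeeping. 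Multiplying the per-run gate count by the number of repetitions yields the stated $\Theta\bigl(\epsilon^{-2}\log(4\delta^{-1})\bigr)\times\bigl[\Theta(L)+{\rm poly}(n)\bigr]$ total.

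There is no deep obstacle here — the lemma packages a well-known construction — but the one step that deserves a remark rather than a one-liner is the conditioned implementation of $U$ as a product of controlled-$G_{l}$'s: one must argue this does not inflate the gate count past $\Theta(L)$. This is exactly where the $k=O(1)$ locality of each $G_{l}$ enters, since controlling a constant-size unitary is itself an $O(1)$-size operation. (If the $G_{l}$ were only promised to be ${\rm poly}(n)$-size circuits rather than $O(1)$-local unitaries, the conditioned evolution would instead cost $\Theta\bigl(L\cdot{\rm poly}(n)\bigr)$, so locality is what keeps the bound clean.)
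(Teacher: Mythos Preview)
Your proposal is correct and follows essentially the same approach as the paper: the paper also uses the single-ancilla Hadamard test, with an ancilla phase rotation $R(\theta)$ (your ``phase gate'') to toggle between estimating $\mathrm{Re}\,\mathcal{F}$ and $\mathrm{Im}\,\mathcal{F}$, and then applies a Chernoff/Hoeffding bound with the $\epsilon/\sqrt{2}$ split on the real and imaginary parts to obtain $|\Sigma|=\Theta(\epsilon^{-2}\log(4\delta^{-1}))$. Your explicit gate-count argument for the controlled-$G_l$'s via $k$-locality is in fact more detailed than the paper's, which simply refers to the circuit diagram.
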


\bigskip
\begin{proof}
Figure \ref{qpecircuit} depicts the quantum circuit which is used. It involves an $n$-qubit register and a single ancillary qubit. The state of the composite system can be tracked through the circuit and the final state can be found to be (where $R(\theta) \equiv e^{-i\theta Z/2}$):
\begin{equation}
    \frac{1}{2}\Big(\big(e^{-i\theta/2}I+e^{i\theta/2}G_1 G_2 \ldots G_L \big)\ket{0}_{a}\otimes \ket{\Phi} + \big(e^{-i\theta/2}I-e^{i\theta/2}G_1 G_2 \ldots G_L\big)\ket{1}_{a}\otimes \ket{\Phi}\Big).
\end{equation}
A $Z$-measurement is now performed on the ancillary qubit, measuring either $\ket{0}$ or $\ket{1}$ with associated outcomes resp. $m=0$ or $m=1$. The probability to measure state $\ket{0}$ ($m=0$) on the ancillary qubit after application of the depicted gates is then given by:
\begin{eqnarray}
    \text{Pr}\big(m=0\:|\:\theta \big) = \frac{1}{2}+\frac{1}{4}\bigg( e^{i\theta}\bra{\Phi} G_1 G_2 \ldots G_L \ket{\Phi} + e^{-i\theta}\big( \bra{\Phi} G_1 G_2 \ldots G_L \ket{\Phi} \big)^{*} \bigg) = \nonumber \\
    \begin{cases}
    \frac{1}{2}+\frac{1}{2}\text{Re}\Big( \bra{\Phi} G_1 G_2 \ldots G_L \ket{\Phi} \Big),\text{ for }\theta = 0, \\
    \frac{1}{2}-\frac{1}{2}\text{Im}\Big( \bra{\Phi} G_1 G_2 \ldots G_L \ket{\Phi} \Big),\text{ for }\theta = \frac{\pi}{2}.
    \end{cases}
    \label{eq:qpeprobb}
\end{eqnarray}
In the final expression, we have restricted ourselves to $\theta = 0$ and $\theta = \frac{\pi}{2}$, which are the $\theta$ values of interest. Suppose that for $\theta = 0$ and $\theta=\pi/2$, the quantum circuits are repeated $|\Sigma|$ times to obtain a set $2|\Sigma|$ of independent realizations of the ancillary-qubit state to be measured and let $\bigl\lvert\Sigma_0^{\theta=0}\bigr\rvert$ and $\bigl\lvert\Sigma_0^{\theta=\pi/2}\bigr\rvert$ be the number of times the ancilla measurement returns 0 so that 
\begin{equation}
    \tilde{\mathcal{F}}=\left(2\frac{\bigl\lvert\Sigma_0^{\theta=0}\bigr\rvert}{|\Sigma|}-1\right)-i \left(2\frac{\bigl\lvert\Sigma_0^{\theta=\pi/2}\bigr\rvert}{|\Sigma|}-1\right)
\end{equation}
is our (unbiased) estimator, i.e. $\mathbb{E}(\tilde{\mathcal{F}}(t))=\mathcal{F}(t)$.
Then by means of the Chernoff bound we have
\begin{align}
\begin{split}
\text{Pr}\bigg( \Bigl\lvert \tilde{\mathcal{F}}-\mathcal{F} \Bigr\rvert \leq \epsilon \bigg) \geq&\: \text{Pr}\bigg(\Bigl\lvert \text{Re}(\tilde{\mathcal{F}} - \mathcal{F}) \Bigr\rvert \leq \epsilon/\sqrt{2} \bigg)\: \text{Pr}\bigg(\Bigl\lvert \text{Im}(\tilde{\mathcal{F}} - \mathcal{F}) \Bigr\rvert \leq \epsilon/\sqrt{2} \bigg) \\
=&\: \bigg( 1-\text{Pr}\bigg(\Bigl\lvert \text{Re}(\tilde{\mathcal{F}} - \mathcal{F}) \Bigr\rvert \leq \epsilon/\sqrt{2} \bigg)\bigg)\: \bigg( 1-\text{Pr}\bigg(\Bigl\lvert \text{Im}(\tilde{\mathcal{F}} - \mathcal{F}) \Bigr\rvert \leq \epsilon/\sqrt{2} \bigg) \bigg) \\
\geq&\: \bigg[\text{max}\Big(0,\big(1-2\:\text{exp}(-|\Sigma|\epsilon^{2}/4)\big)\Big)\bigg]^{2} \\
\geq&\: 1-4\:\text{exp}\big(-|\Sigma|\epsilon^{2}/4\big) = 1-\delta,
\end{split}
\end{align}
where the number of samples is chosen as $|\Sigma|=\Theta(\epsilon^{-2}\log(4\delta^{-1}))$.
\end{proof}

\begin{SCfigure}[][t]
    \centering
    \includegraphics[width=11cm]{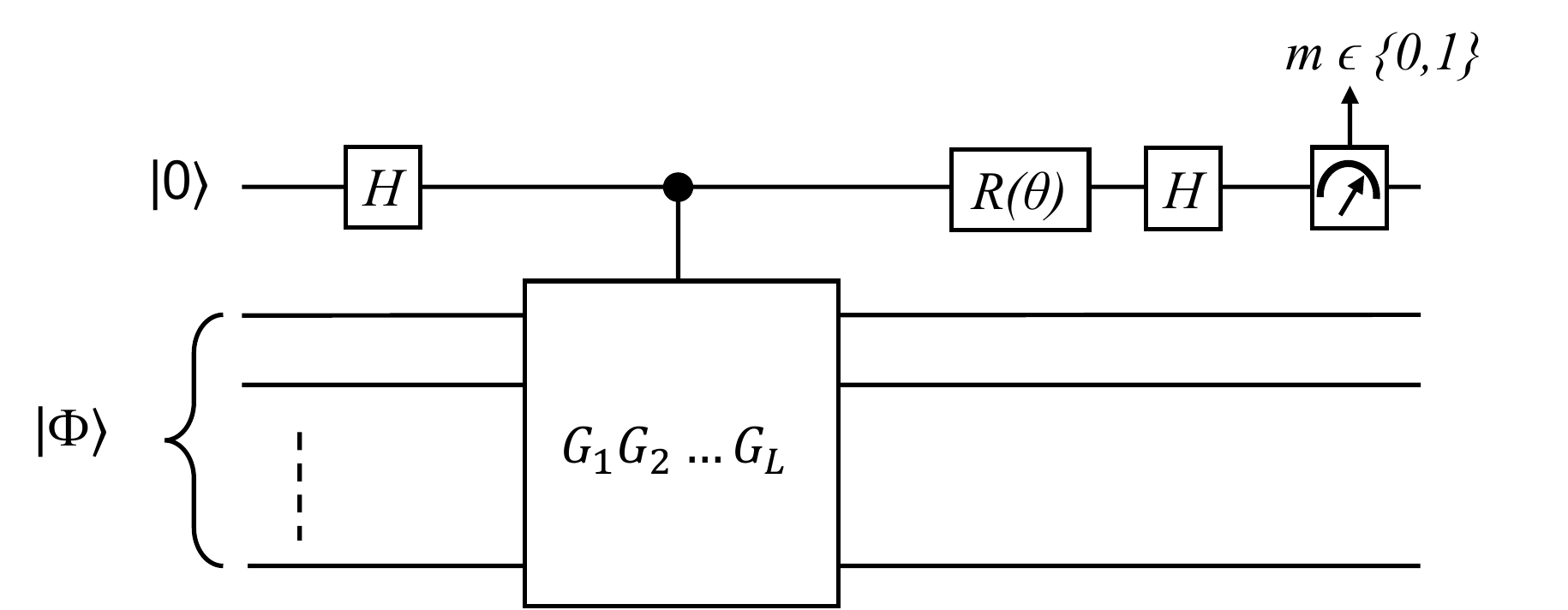}
    \caption{Basic circuit with a single ancillary qubit and an $n$-qubit register (initialized in state $\ket{\Phi}$).} 
    \label{qpecircuit}
\end{SCfigure}

Next, we will consider a classical Monte Carlo version of the quantum routine given above. The key result here is Lemma \ref{prop:imtimesignalest}. 
However, before we can state it, we collect a few facts about matrices $G_i$ which will be useful in the proof of Lemma \ref{prop:imtimesignalest}. The matrices $G_{i}$ that we will consider now can be seen as analogous to the (unitary) local real-time propagation operators $G_{i}$ considered earlier in Lemma \ref{prop:retimesignalest}, but are now local \emph{imaginary-time} propagation operators. The $G_{i}$ are no longer unitary but, for local stoquastic Hamiltonians, are elementwise nonnegative. They are of the form $G_{i}=e^{-a_{i}/M\:kH_{i}}$, where $a_{i}/M$ is a positive parameter set by the Trotterization scheme and $k = 0,1,\ldots,K$ denotes imaginary-time coordinate. We have the following proposition on further properties of these operators:

\begin{proposition}
Let $G_{i}=e^{-a_{i}/M\:kH_{i}}$, where $H_{i}$ is a stoquastic Hermitian matrix (a term in $H=\sum_i H_i$) which acts nontrivially on some subset of $O(1)$ qubits. Let the smallest eigenvalue of $H_{i}$ be 0, i.e. $\lambda_{\text{min}}(H_{i})=0$. We have:
\begin{itemize}
\item The matrix $G_{i}$ is an elementwise nonnegative and positive definite matrix, with eigenvalues in the interval $(0,1]$, acting nontrivially only on the same $O(1)$ qubits as $H_{i}$. 

\item If $G_i$ is reducible, then we can write $G_{i} = \oplus_{b=1}^{B_{i}}G_{i}^{b}$ with $B_i$ irreducible sub-matrices $G_{i}^{b}$. The set of bit string basis states on which the irreducible sub-matrix $G_{i}^{b}$ acts is denoted by $S_{i}^{b}$, where $\cup_{b}S_{i}^{b}\subseteq \{0,1\}^{n}$. 

\item From the Perron-Frobenius Theorem (Theorem 8.4.4 in \cite{HornJohnson}) it follows that for each nonnegative and irreducible sub-matrix there exists a unique and strictly positive eigenstate associated with its largest eigenvalue, i.e. 
\begin{equation}
\ket{\phi_i^b}= \sum_{x\in S_{i}^{b}}\phi_{i}^{b}(x)\ket{x},\quad  G_{i}^{b}\ket{\phi_i^b} = \lambda_{i}^{b}\ket{\phi_i^b},
\label{def:largest}
\end{equation}
where $\phi_{i}^{b}(x) > 0$, $\forall x \in S_{i}^{b}$. Since the spectrum of $G_i$ is the union of spectra of the submatrices $G_i^b$, the spectrum of each $G_i^b$ also lies in the interval $(0,1]$ and one of the blocks $b$ will contain the largest eigenvalue of $G_i$ equal to 1. In case $G_i$ is irreducible itself, there is a largest nonnegative eigenvector as in Eq.~\eqref{def:largest} which has support $\phi_i(x) > 0$ for all $x$. In this case, the corresponding eigenvalue will be $\lambda_i=1$.

\item Naturally, since $G_i$ acts nontrivially only on a subset of $O(1)$ qubits (and acts as $I$ on other qubits) one can efficiently compute the blocks $G_i^b$, its largest eigenvalue $\lambda_{i}^{b}$ and associated eigenstate $\ket{\phi_i^b}$ in each block $b$.
\end{itemize}
\label{prop:PF}
\end{proposition}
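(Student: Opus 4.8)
The plan is to verify the four bullet points in sequence; each rests on standard facts about exponentials of Hermitian matrices and about nonnegative matrices, so the work is in assembling them with the right conventions rather than in any single hard step. Throughout, write $c\equiv(a_i/M)\,k\ge 0$, so $G_i=e^{-cH_i}$, and let $A$ (with $|A|=O(1)$) be the set of qubits on which $H_i$ acts nontrivially, so that $H_i=h_i\otimes I_{\bar A}$ and hence $G_i=e^{-ch_i}\otimes I_{\bar A}$.

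\emph{Bullet one.} Being the exponential of a Hermitian matrix, $G_i$ is Hermitian and positive definite, and its eigenvalues are $e^{-c\mu}$ with $\mu$ ranging over the spectrum of $H_i$; since $\lambda_{\min}(H_i)=0$ these lie in $(0,1]$. Elementwise nonnegativity follows from the standard stoquastic shift: for any $\eta\ge\lambda_{\max}(H_i)$ the matrix $M_i\equiv\eta I-H_i$ is elementwise nonnegative (off-diagonal entries equal $-\bra{x}H_i\ket{y}\ge 0$ by stoquasticity; diagonal entries $\eta-\bra{x}H_i\ket{x}\ge 0$ because the diagonal of $H_i$ lies between $\lambda_{\min}$ and $\lambda_{\max}$), so $G_i=e^{-c\eta}\sum_{m\ge 0}(c^m/m!)\,M_i^m$ is a nonnegative combination of products of elementwise nonnegative matrices and is therefore elementwise nonnegative. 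The displayed tensor form shows $G_i$ acts nontrivially only on $A$.

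\emph{Bullets two and three.} Associate to the real symmetric elementwise nonnegative matrix $G_i$ its support graph: the undirected graph on basis states with an edge $\{x,y\}$ whenever $(G_i)_{xy}>0$. Taking the connected components gives sets $S_i^b$ with $\bigcup_b S_i^b\subseteq\{0,1\}^n$, and, because $G_i$ is symmetric, grouping indices by component puts $G_i$ (after a simultaneous row/column permutation) into a genuine direct sum $\bigoplus_b G_i^b$ of the principal blocks $G_i^b$ on ${\rm span}\{\ket{x}:x\in S_i^b\}$; each $G_i^b$ is nonnegative with connected support graph, hence irreducible (a singleton component contributing a $1\times 1$ positive, trivially irreducible, block). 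Since the spectrum of a direct sum is the union of the spectra, each $G_i^b$ has spectrum in $(0,1]$ and in particular spectral radius $\rho(G_i^b)\le 1$. Perron--Frobenius (Theorem 8.4.4 in \cite{HornJohnson}) applied to the nonnegative irreducible $G_i^b$ then gives the unique entrywise-positive eigenvector $\ket{\phi_i^b}=\sum_{x\in S_i^b}\phi_i^b(x)\ket{x}$ of Eq.~\eqref{def:largest}, with eigenvalue $\lambda_i^b=\rho(G_i^b)$, which by symmetry is also the largest eigenvalue of $G_i^b$. Finally, $\lambda_{\min}(H_i)=0$ forces $1\in{\rm spec}(G_i)=\bigcup_b{\rm spec}(G_i^b)$, so some block has $\rho(G_i^b)=\lambda_i^b=1$; and when $G_i$ is itself irreducible there is a single block whose Perron eigenvector has full support and whose eigenvalue equals $\rho(G_i)=1$.

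\emph{Bullet four.} By the tensor structure, $h_i$ is a matrix of fixed size $2^{|A|}\times 2^{|A|}$, so diagonalizing $h_i$, forming $e^{-ch_i}$ to the desired precision, reading off the connected components of its support graph, and computing the Perron eigenpair $(\lambda_i^b,\ket{\phi_i^b})$ of each constant-size block are all constant-sized (hence certainly ${\rm poly}(n)$) linear-algebra tasks. I expect nothing here to be a genuine obstacle; the only points needing slight care are the reducibility reduction---using symmetry to replace the general block-\emph{triangular} normal form of a reducible nonnegative matrix by an honest block-\emph{diagonal} sum, and treating basis states with only a diagonal entry as trivial $1\times 1$ irreducible blocks---and stating the eigenvalue-$1$ claim with the correct quantifier (at least one block attains $\lambda_i^b=1$, not necessarily a unique one, since the ground space of $H_i$ may span several blocks).
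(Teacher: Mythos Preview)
Your proposal is correct and essentially matches the paper's own treatment: the paper states Proposition~\ref{prop:PF} as a collection of standard facts without a separate proof, citing only the Perron--Frobenius theorem for the third bullet, so you are simply making explicit the steps the paper leaves implicit. Your shift-and-exponentiate argument for elementwise nonnegativity and your connected-component argument for the block decomposition are exactly the standard routes one would expect, and your care about symmetry giving a true direct sum (rather than merely block-triangular form) and about the quantifier on the eigenvalue-$1$ block are appropriate refinements.
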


\bigbreak
\noindent
We prove the following:
\bigskip
\begin{lemma}
Let $\mathcal{F}  \equiv \bra{\Phi} G_{1}G_{2}\: ...\: G_{L} \ket{\Phi}$, 
where:
\begin{enumerate}
    \item $\ket{\Phi} = \sum_{x=1}^{2^{n}}\Phi(x)\ket{x}$ is a normalized state of $n$ qubits where $\Phi(x) \in \mathbb{C}$ ($\forall x$) and $\sum_{x}\bigl\lvert \Phi(x) \bigr\rvert^{2} = 1$. We assume that (1) $\frac{\Phi(y)}{\Phi(x)}$ can be efficiently (${\rm poly}(n)$) calculated for a {\em given} $x$ and $y$ and (2) we can efficiently draw samples from the probability distribution $P(x) = \bigl\lvert \Phi(x) \bigr\rvert^{2}$.
    \item Each $G_{l}=G_l$ is a $k$-local, positive-definite, (elementwise) nonnegative matrix with eigenvalues in $(0,1]$.
\end{enumerate}
$\mathcal{F}$ can be estimated within error $\epsilon$ with probability at least $1-\delta$ with a classical MC algorithm with runtime $\Theta(\epsilon^{-2}\text{log}(\delta^{-1}))\times \text{poly}(n) \times \Theta(L)$. 
\label{prop:imtimesignalest}
\end{lemma}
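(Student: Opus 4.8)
The plan is to expand $\mathcal{F}$ as a sum over length-$L$ walks on bit strings and estimate it by importance sampling: draw the initial string from $|\Phi(\cdot)|^{2}$ (possible by hypothesis~(2)), run an $L$-step Markov chain whose transition kernels are the Perron--Frobenius (``Doob'') renormalizations of the local blocks of $G_{1},\dots,G_{L}$, and output a reweighted path, invoking hypothesis~(1) only for the single ratio $\Phi(x_{L})/\Phi(x_{0})$. This mirrors Lemma~\ref{prop:retimesignalest}, replacing the Hadamard-test sampling by a classical random walk.

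Expanding in the computational basis,
\begin{equation}
\mathcal{F}=\sum_{x_{0},\dots,x_{L}}\overline{\Phi(x_{0})}\,\Phi(x_{L})\prod_{l=1}^{L}[G_{l}]_{x_{l-1}x_{l}}.
\end{equation}
By Proposition~\ref{prop:PF}, applied to each $G_{l}$ (which acts nontrivially on $O(1)$ qubits), we can in $O(1)$ time decompose $G_{l}=\oplus_{b}G_{l}^{b}$ into irreducible blocks, each with a strictly positive Perron eigenvector $\phi_{l}^{b}$ and Perron value $\lambda_{l}^{b}\in(0,1]$. For a string $x$ write $\phi_{l}(x),\lambda_{l}(x)$ for the Perron data of the block of $G_{l}$ containing $x$, and define the kernel
\begin{equation}
\widehat{G}_{l}(x\to y)=\frac{\phi_{l}(y)}{\lambda_{l}(x)\,\phi_{l}(x)}\,[G_{l}]_{xy},
\end{equation}
which is genuinely stochastic since $\sum_{y}[G_{l}]_{xy}\phi_{l}(y)=\lambda_{l}(x)\phi_{l}(x)$. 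Sampling $x_{0}\sim|\Phi(\cdot)|^{2}$ and then $x_{l}\sim\widehat{G}_{l}(x_{l-1}\to\cdot)$ produces a path of probability $|\Phi(x_{0})|^{2}\prod_{l}\widehat{G}_{l}(x_{l-1}\to x_{l})$, and the complex estimator
\begin{equation}
W=\frac{\Phi(x_{L})}{\Phi(x_{0})}\prod_{l=1}^{L}\lambda_{l}(x_{l-1})\,\frac{\phi_{l}(x_{l-1})}{\phi_{l}(x_{l})}
\end{equation}
is unbiased, $\mathbb{E}[W]=\mathcal{F}$ (every path with a nonzero summand has nonzero sampling probability since $\phi_{l}>0$ on each block, and the density cancels against the numerator). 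It is evaluated in time ${\rm poly}(n)\times\Theta(L)$: each step touches $O(1)$ qubits and needs only $O(1)$-size linear algebra plus $O(n)$ bookkeeping, and $\Phi(x_{L})/\Phi(x_{0})$ costs ${\rm poly}(n)$ by hypothesis~(1).

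The crux is to bound the fluctuations of $W$ uniformly in $L$ and $n$: the per-step factors $\lambda_{l}(x_{l-1})\phi_{l}(x_{l-1})/\phi_{l}(x_{l})$ are only $O(1)$ termwise, so a naive bound on $|W|$ is exponential in $L$. I would instead control the second moment. Letting $D=\mathrm{diag}(|\Phi(x)|^{2})$ and $M_{l}$ be the nonnegative matrix with $(M_{l})_{xy}=\lambda_{l}(x)\,\tfrac{\phi_{l}(x)}{\phi_{l}(y)}[G_{l}]_{xy}$, inserting the path probability and expanding $|W|^{2}$ makes each $\widehat{G}_{l}$ cancel, giving
\begin{equation}
\mathbb{E}\!\left[|W|^{2}\right]=\mathbf{1}^{T}M_{1}M_{2}\cdots M_{L}\,D\,\mathbf{1}.
\end{equation}
Because $G_{l}$ is symmetric with per-block Perron eigenvector $\phi_{l}$ and value $\lambda_{l}(\cdot)\le 1$, one checks that $\mathbf{1}^{T}M_{l}=(\lambda_{l}(y)^{2})_{y}^{T}\le\mathbf{1}^{T}$ entrywise; since each $M_{l}$ is entrywise nonnegative, $\mathbf{1}^{T}M_{1}\cdots M_{L}\le\mathbf{1}^{T}$ entrywise by induction, whence $\mathbb{E}[|W|^{2}]\le\sum_{x}|\Phi(x)|^{2}=1$. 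Thus ${\rm Var}(\mathrm{Re}\,W)+{\rm Var}(\mathrm{Im}\,W)\le 1$, and (by Chebyshev plus the median-of-means boosting, or Hoeffding if the weights happen to be bounded) running $\Theta(\log\delta^{-1})$ independent batches of $\Theta(\epsilon^{-2})$ samples each, and taking for the real and imaginary parts separately the median of the batch means, estimates $\mathcal{F}$ within $\epsilon$ with probability $\ge 1-\delta$ from $\Theta(\epsilon^{-2}\log\delta^{-1})$ samples, each of cost ${\rm poly}(n)\times\Theta(L)$, as claimed. The main obstacle is exactly this variance estimate: the $L$-independent constant comes only from the exact Perron--Frobenius cancellation, not from termwise bounds (and higher moments of $W$ do not enjoy a comparable bound, which is why median-of-means rather than a Chernoff argument is needed); the rest is bookkeeping plus checking that hypotheses~(1)--(2) and Proposition~\ref{prop:PF} make each per-sample operation ${\rm poly}(n)$.
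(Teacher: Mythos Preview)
Your proposal is correct and essentially identical to the paper's proof: the Doob-transformed kernel $\widehat{G}_{l}$ is the paper's transition probability $P_{l}$, your estimator $W$ is their $\mathcal{R}(\boldsymbol{x})$, your matrix $M_{l}$ is their $Q_{l}$, and the key variance bound $\mathbb{E}[|W|^{2}]\le 1$ via $\mathbf{1}^{T}M_{l}\le\mathbf{1}^{T}$ (using Hermiticity of $G_{l}$ and $\lambda_{l}^{b}\le 1$) together with median-of-means is exactly their argument. The only cosmetic difference is that the paper observes $\mathcal{F}\in\mathbb{R}$ and hence estimates only $\mathrm{Re}\,\mathcal{R}(\boldsymbol{x})$, whereas you keep both parts; this is harmless.
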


\bigskip
\begin{proof} The proof of Lemma \ref{prop:imtimesignalest} consists of two steps: To construct an estimator for $\mathcal{F}(\tau)$ and to show that the error of this estimator can be bounded according to the lemma. 

We rewrite the quantity of interest $\mathcal{F}$ as follows (where $L-1$ complete sets of basis states are inserted in between the $G_{l}$ operators in the final equality):
\begin{equation}
    \mathcal{F} =  
    \sum_{x_{0},x_{1},...,x_{L}} \lvert\Phi(x_{0})\rvert^{2}\frac{\Phi(x_{L})}{\Phi(x_{0})}\bra{x_{0}}G_{1}\ket{x_{1}}\bra{x_{1}}G_{2}\ket{x_{2}}\: ...\: \bra{x_{L-1}}G_{L}\ket{x_{L}},
\end{equation}
where we have set $\ket{x}=\ket{x_{0}}$ and $\ket{y}=\ket{x_{L}}$. $\mathcal{F}$ thus corresponds to the sum of an exponential number of products of (non-negative) matrix elements of $G_{1},...,G_{L}$, weighted by amplitudes in the state $\ket{\Phi}$. Evidently, only terms for which all the matrix elements in the product are non-zero contribute to the sum. 

We now consider the string of basis states $\ket{x_{0}},...,\ket{x_{L}}$ and associate with each step $\ket{x_{l-1}}$ to $\ket{x_{l}}$ in this string a probability 
\begin{equation}
P_{l}(x_{l-1}\to x_{l}) = \frac{1}{\lambda_{l}^{b}}\bra{x_{l-1}}G_{l}\ket{x_{l}}\frac{\phi_{l}^{b}(x_{l})}{\phi_{l}^{b}(x_{l-1})},
\end{equation}
where $b$ labels the sub-block in $G_l=\oplus_b G_l^b$ which contains the strings $x_{l-1}$ and $x_l$. Here $\phi_{l}^{b}(x) \equiv \braket{x_{l-1}}{\phi_l^b}$ with $\ket{\phi_l^b}$ defined in Proposition \ref{prop:PF}.


The probability distribution $P_{l}$ is thus non-negative as $\lambda_{l}^{b}\in (0,1]$, $G_{l}$ is element-wise non-negative and $\phi_l^b(x_l)>0$ and $\phi_l^b(x_{l-1})> 0$.
It can be shown to be normalized:
\begin{multline}
    \sum_{x_l}P_{l}(x_{l-1}\to x_{l}) = \sum_{x_l}\frac{1}{\lambda_{l}^{b}}\bra{x_{l-1}}G_{l}\ket{x_{l}}\frac{\phi_{l}^{b}(x_{l})}{\phi_{l}^{b}(x_{l-1})} = \sum_{x_{l}\in S_{l}^{b}}\frac{1}{\lambda_{l}^{b}}\bra{x_{l-1}}G_{l}^{b}\ket{x_{l}}\frac{\phi_{l}^{b}(x_{l})}{\phi_{l}^{b}(x_{l-1})} \\
    = \frac{1}{\phi_{l}^{b}(x_{l-1})}
    \frac{1}{\lambda_{l}^{b}}\bra{x_{l-1}}G_{l}^{b} \ket{\phi_{l}^b} = \frac{1}{\phi_{l}^{b}(x_{l-1})}
    \frac{1}{\lambda_{l}^{b}}\bra{x_{l-1}}\lambda_{l}^{b} \ket{\phi_{l}^{b}} = \frac{\braket{x_{l-1}}{\phi_{l}^{b}}}{\phi_{l}^{b}(x_{l-1})} = 1.
\end{multline}
We use $P_{l}(x_{l-1}\to x_{l})$ to rewrite $\mathcal{F}(\tau)$ as:
\begin{multline}
    \mathcal{F}(\tau) = \sum_{x_{0},x_{1},...,x_{L}} \underbrace{\lvert\Phi(x_{0})\rvert^{2} P_{1}(x_{0}\to x_{1})P_{2}(x_{1}\to x_{2})\: ...\: P_{L}(x_{L-1}\to x_{L})}_{\equiv \Pi(\boldsymbol{x}).}  \times \underbrace{\frac{\Phi(x_{L})}{\Phi(x_{0})} \prod_{l=1}^{L}\lambda_{l}^{b(l)}\frac{\phi_{l}^{b(l)}(x_{l-1})}{\phi_{l}^{b(l)}(x_{l})}}_{\equiv \mathcal{R}(\boldsymbol{x}).},
\end{multline}
where $\boldsymbol{x} \equiv (x_{0},x_{1},...,x_{L})$ and we have defined the quantities $\Pi(\boldsymbol{x})$ and $\mathcal{R}(\boldsymbol{x})$. Since $\lvert\Phi(x_{0})\rvert^{2}$ and each $P_{l}$ are probability distributions, $\Pi(\boldsymbol{x})$ is a probability distribution as well, i.e.
\begin{multline}
    \sum_{x_{0},x_{1},...,x_{L}}\Pi(\boldsymbol{x}) = \sum_{x_{0}}\Big(\lvert\Phi(x_{0})\rvert^{2}\:\sum_{x_{1}}\Big( P_{1}(x_0\to x_1) \:... \sum_{x_{L}}\Big(P_{L}(x_{L-1}\to x_{L})\Big)\:...\: \Big)\Big) = 1.
\end{multline}
Clearly, one can sample from $\Pi(\boldsymbol{x})$ by first sampling from $|\Phi(x_0)|^2$, then sampling from $P_1(x_0 \rightarrow x_1)$ to generate $x_1$ etc. until $x_L$.

By thus sampling from the probability distribution $\Pi(\boldsymbol{x})$ and obtaining a mean estimator for $\mathcal{F}(\tau)$ using the samples $\mathcal{R}(\boldsymbol{x})$, we can estimate $\mathcal{F}(\tau)$. We note that $\mathcal{F}(\tau) = \mathbb{E}\big(\mathcal{R}(\boldsymbol{x})\big)$. Since $\mathcal{R}(\boldsymbol{x}) \in \mathbb{C}$, a mean estimator over a finite number of samples will generally be complex-valued. Since $\mathcal{F}(\tau) \in \mathbb{R}$, we will instead obtain a mean estimator using samples $\text{Re}(\mathcal{R}(\boldsymbol{x}))$. The mean estimator that we shall use to estimate $\mathcal{F}(\tau)$ is the median-of-means estimator \cite{MOM}. Using a set $\Sigma$ of samples $\{\boldsymbol{x}\}$ (distributed according to $\Pi(\boldsymbol{x})$), the median-of-means estimator is defined as follows: Divide the set $\Sigma$ into $q$ subsets $s_{1},\ldots,s_{q}$ of size approximately $|\Sigma|/q$. Calculate the empirical mean of $\text{Re}(\mathcal{R}(\boldsymbol{x}))$ over the samples in each subset: $f_{j} = \frac{1}{|s_{j}|}\sum_{\boldsymbol{x}\in s_{j}}\text{Re}(\mathcal{R}(\boldsymbol{x}))$ for $j\in \{1,\ldots,q\}$ (each $f_{j}$ is an unbiased estimator of $\mathcal{F}(\tau)$). Now the median-of-means estimator is given by the median of these empirical means: $\hat{\mathcal{F}} = \mathrm{M}(f_{1},\ldots,f_{q})$. See \ref{app:MOM} for more details.

The algorithm that efficiently produces $\hat{\mathcal{F}} = \mathrm{M}(f_{1},\ldots,f_{q})$ is explicitly given in Algorithm \ref{algorithm1paper} below. 
Note that when $\Phi(x_0)$ is small for some $x_0$, the probability of drawing this $x_0$, $\lvert\Phi(x_{0})\rvert^{2}$, is very small, but the ratio $\frac{\Phi(x_{L})}{\Phi(x_{0})}$ in the estimator could get very large.

\smallbreak
\begin{algorithm}[H]
\textbf{Input:} Initial state $\ket{\Phi}$. Local propagation operators $\{G_{l}\}_{l=1}^{L}$. Sample size $|\Sigma|$. Number of subsets $q$. \\
\textbf{Output:} Median of means estimate of $\mathcal{F}(\tau)$. \\
\smallbreak
\SetAlgoLined
\For{$\sigma\in \{1,2,...,|\Sigma|\}$}{
Sample an initial basis state $\ket{x_{0}}$ from the probability distribution $\lvert\Phi(x_{0})\rvert^{2}$. The state $\ket{x_{0}}$ is part of $S_{1}^{b(1)}$. \\

\For{$l\in \{1,...,L\}$}{
Pick a state $\ket{x_l} \in S_{l}^{b(l)}$ with probability $P_{l}(x_{l-1}\to x_{l}) = \frac{1}{\lambda_{l}^{b(l)}}\bra{x_{l-1}}G_{l}\ket{x_{l}}\frac{\phi_{l}^{b(l)}(x_l)}{\phi_{l}^{b(l)}(x_{l-1})}$. The state $\ket{x_{l}}$ is part of $S_{l+1}^{b(l+1)}$ (for $l<L$).
}

Given $\{x_{l}\}_{l=1}^{L}$ (sampled from $\Pi(\boldsymbol{x})$), calculate $\mathcal{R}_{\sigma}(\boldsymbol{x}) = \frac{\Phi(x_{L})}{\Phi(x_{0})} \prod_{l=1}^{L}\lambda_{l}^{b(l)}\frac{\phi_{l}^{b(l)}(x_{l-1})}{\phi_{l}^{b(l)}(x_{l})}$.
}

Divide the $|\Sigma|$ samples into $q$ subsets $s_{1},\ldots,s_{q}$, such that $|s_{j}|\approx |\Sigma|/q$, $\forall j$.

\For{$j \in \{1,2,\ldots,q\}$}{Calculate $f_{j} = \frac{1}{|s_{j}|}\sum_{\boldsymbol{x}\in s_{j}}\text{Re}(\mathcal{R}(\boldsymbol{x}))$.}

Output $\hat{\mathcal{F}}=\mathrm{M}(f_{1},\ldots,f_{q})$.
\caption{Efficiently obtaining a median-of-means estimate of $\mathcal{F}(\tau)$ through sampling of the probability distribution $\Pi(\boldsymbol{x})$.}
\label{algorithm1paper}
\end{algorithm}
\smallbreak

Algorithm \ref{algorithm1paper} thus efficiently provides an estimate of $\mathcal{F}(\tau)$ (albeit biased). To complete the proof, we will show that the variance of $\text{Re}(\mathcal{R}(\boldsymbol{x})) \in \mathbb{R}$ can be bounded which in turn is used to bound the number of samples to get an estimate close to the mean, leading to Lemma \ref{prop:imtimesignalest}. 

For a complex random variable $Z=\mathcal{R}(\boldsymbol{x})$, $\mathbb{E}(Z) \equiv \mathbb{E}\big(\text{Re}(Z)\big) + i\mathbb{E}\big(\text{Im}(Z)\big)$ and $\text{Var}\big(Z\big) = \text{Var}\big( \text{Re}(Z) \big) + \text{Var}\big(\text{Im}(Z)) \big) \geq \text{Var}\big( \text{Re}(Z) \big)$. Hence we can bound the variance of random variable $\text{Re}(\mathcal{R}(\boldsymbol{x}))$ by bounding the variance of the random variable $\mathcal{R}(\boldsymbol{x})$. This variance is given by:
\begin{equation}
    \text{Var}(\mathcal{R}(\boldsymbol{x})) =
    \mathbb{E}\Big( \bigl\lvert \mathcal{R}(\boldsymbol{x}) \bigr\rvert^{2} \Big) - \underbrace{\Bigl\lvert \mathbb{E}\Big( \mathcal{R}(\boldsymbol{x}) \Big)\Bigr\rvert^{2}}_{=\: \lvert \mathcal{F}(\tau) \rvert^{2} \:=\:\mathcal{F}(\tau)^{2}.} \leq \mathbb{E}\Big( \bigl\lvert \mathcal{R}(\boldsymbol{x}) \bigr\rvert^{2} \Big),
\label{eq:varboundR}
\end{equation}
where the inequality holds because $\mathcal{F}^{2} \geq 0$ (since $\mathcal{F} \in \mathbb{R}$). To obtain an upper bound on the variance, we shall investigate this expression in more detail:
\begin{align}
\begin{split}
    \mathbb{E}\bigg( \bigl\lvert \mathcal{R}(\boldsymbol{x}) \bigr\rvert^{2} \bigg) &=\: \sum_{\boldsymbol{x}}\Pi(\boldsymbol{x})\bigl\lvert \mathcal{R}(\boldsymbol{x}) \bigr\rvert^{2} \\ &=\: \sum_{\boldsymbol{x}}\lvert\Phi(x_{L})\rvert^{2}\bra{x_{0}}G_{1}\ket{x_{1}}\bra{x_{1}}G_{2}\ket{x_{2}}\: ...\: \bra{x_{L-1}}G_{L}\ket{x_{L}}\prod_{l=1}^{L}\lambda_{l}^{b(l)}\frac{\phi_{l}^{b(l)}(x_{l-1})}{\phi_{l}^{b(l)}(x_{l})} \\ &=\: \sum_{\boldsymbol{x}}\lvert\Phi(x_{L})\rvert^{2}\:Q_{1}(x_{0},x_{1})Q_{2}(x_{1},x_{2})\: ...\: Q_{L}(x_{L-1},x_{L}), 
\end{split}
\label{eq:varbound}
\end{align}
where in the last equality we defined the non-negative quantity $Q_{l}(x,y) \equiv \bra{x}G_{l}\ket{y}\lambda_{l}^{b(l)}\frac{\phi_{l}^{b(l)}(x)}{\phi_{l}^{b(l)}(y)}$. Exploiting the Hermiticity of $G_{l}^{b}$, $Q_{l}(x,y)$ can be shown to have the following property:
\begin{equation}
    \sum_{x}Q_{l}(x,y) = \sum_{x} \bra{x}G_{l}\ket{y}\lambda_{l}^{b}\frac{\phi_{l}^{b}(x)}{\phi_{l}^{b}(y)} = \sum_{x\in S_{l}^{b}} \bra{x}G_{l}^{b}\ket{y}\lambda_{l}^{b}\frac{\phi_{l}^{b}(x)}{\phi_{l}^{b}(y)} = 
    \big(\lambda_{l}^{b}\big)^{2}\frac{\braket{\phi_l^b}{y}}{\phi_{l}^{b}(y)} = \big(\lambda_{l}^{b}\big)^{2} \leq 1.
    \label{eq:propQpaper}
\end{equation}
$Q_{l}(x,y)$ thus satisfies $0\leq Q_{l}(x,y) \leq 1$, $\forall x,y$ and $\forall l\in \{1,2,...,L\}$. By consecutively exploiting the property in Eq.~\eqref{eq:propQpaper} for all $Q_{l}$'s and the normalization property of state $\ket{\Phi}$ in the expression in Eq.~\eqref{eq:varbound}, we obtain
\begin{equation}
    \text{Var}(\mathcal{R}({\bf x})) \leq  \mathbb{E}\bigg( \bigl\lvert \mathcal{R}(\boldsymbol{x}) \bigr\rvert^{2} \bigg) \leq 1. \:\: \Rightarrow \:\: \text{Var}\Big(\text{Re}(\mathcal{R}({\bf x}))\Big) \leq 1.
\label{eq:unitybound}
\end{equation}

If we take the number of samples $|\Sigma|$, and divide them into $q$ subsets $s_{1},\ldots,s_{q}$ of size approximately $|\Sigma|/q$, then (by means of Chebyshev's inequality) each $f_{j}$ obeys $\lvert f_{j} - \mathcal{F} \rvert \leq \sqrt{\text{Var}\big(\text{Re}(\mathcal{R}(\boldsymbol{x}))\big)}\sqrt{4q/|\Sigma|} \leq \sqrt{4q/|\Sigma|}$ with probability at least $3/4$. Using Hoeffding's inequality and the definition of the mean, one can show that (see \ref{app:MOM}):
\begin{equation}
    \text{Pr}\Big( \bigl\lvert \hat{\mathcal{F}} - \mathcal{F} \bigr\lvert \leq \sqrt{4q/|\Sigma|} \Big) \geq 1-e^{-q/8}. 
\end{equation}
Hence $\mathcal{F}$ can be estimated with error $\epsilon$ with probability at least $1-\delta$ (with $q=8\:\text{log}(\delta^{-1})$) for $|\Sigma| = \Theta(\text{log}(\delta^{-1})\epsilon^{-2})$, where obtaining each sample takes a number of operations that scales linearly in $L$ and ${\rm poly}(n)$. This completes the proof of Lemma \ref{prop:imtimesignalest}.
\end{proof}

\begin{remark}
Note that if one would have chosen the empirical mean $\tilde{\mathcal{F}}=\frac{1}{|\Sigma|}\sum_{\boldsymbol{x}\: \in \Sigma}\text{Re}(\mathcal{R}(\boldsymbol{x}))$ as a mean estimator for $\mathcal{F}$ (instead of the median-of-means estimator), then using Eq.~\eqref{eq:unitybound} and Chebyshev's inequality, we obtain:
\begin{equation}
    \text{Pr}\Big( \bigl\lvert \tilde{\mathcal{F}} - \mathcal{F} \bigr\lvert \leq \epsilon \Big) \geq 1-\frac{{\rm Var}(\text{Re}(\tilde{\mathcal{F}}))}{\epsilon^2}\geq 1-\frac{1}{|\Sigma|\:\epsilon^2}.
\end{equation}
Hence $\mathcal{F}$ can be estimated using $\tilde{\mathcal{F}}$ with error $\epsilon$ with probability at least $1-\delta$, for $|\Sigma|=\Theta(\delta^{-1}\epsilon^{-2})$. Using the median-of-means estimator thus provides an exponential improvement in the required scaling of $|\Sigma|$ with $\delta^{-1}$. Note that if we could upper and lower bound the range of ${\rm Re}(\mathcal{R}(\mathbf{x}))$ by some constants, then we could have used a Chernoff-Hoeffding bound for the empirical mean $\tilde{\mathcal{F}}$ which gives the aforementioned (exponentially) better dependence of the run-time of the algorithm with $\delta^{-1}$ (as in Lemma \ref{prop:retimesignalest} where we do use a Chernoff-Hoeffding bound). 
\end{remark}

\begin{remark}
Note that the Lemma also applies to estimating $\bra{x} G_{1}G_{2}\: ...\: G_{L} \ket{x'}$ (with $1/{
\rm poly}(n)$ accuracy) as one simply starts the process at $x_0=x$ and $\mathcal{R}(\mathbf{x})$ is only nonzero when one arrives at $x_L=x'$. Similarly, one can estimate $\bra{\Phi_1} G_{1}G_{2}\: ...\: G_{L} \ket{\Phi_2}$ with $1/{\rm poly}(n)$ accuracy, assuming one can sample from $|\Phi_1(x)|^2$ (or $|\Phi_2(x)|^2$) {\em and} compute for a given $x$ and $y$, the ratio 
$\frac{\Phi_2(y)}{\Phi_1(x)}$. In addition, one can extend the Lemma to the case where the local propagation operators $G_{l}$ are not Hermitian, but are still nonnegative matrices, see \ref{AppC}.
\end{remark}

We stress that Lemma \ref{prop:imtimesignalest} provides an efficient classical algorithm provided that: For a given $x,y\in \{0,1\}^{n}$, one can efficiently determine $\frac{\Phi(y)}{\Phi(x)}$ and the state $\ket{\Phi}$ is such that one can efficiently draw samples from $P(x) = \bigl\lvert \Phi(x) \bigr\rvert^{2}$. In many practical settings, $\ket{\Phi}$ is such that one can define a function $f: \{0,1\}^{n} \to \mathbb{C}$ which takes as input the $n$-bit string $x$, and efficiently outputs the corresponding coefficient $\Phi(x)$. This is e.g. the case for (matrix) product states or for other ansatz classes of states. Then, given $x$ and $y$, the fraction $\frac{\Phi(y)}{\Phi(x)}$ can be efficiently obtained. 
Note that under this assumption one can set-up a Monte Carlo scheme based on the Metropolis algorithm to sample from $\bigl\lvert \Phi(x) \bigr\rvert^{2}$, although this scheme 
is only a heuristic strategy and its efficient convergence would have to be proved. A good class of states to which both Lemmas \ref{prop:imtimesignalest} and \ref{prop:retimesignalest} apply are of course product states.
Note that even when running the overlap test is too costly (as quantum circuits are noisy), but preparing the state $\ket{\Phi}$ is feasible, one could use this preparation to sample from $|\Phi(x)|^2$ for the application of the MC method. Of course the requirement of being able to compute $\frac{\Phi(y)}{\Phi(x)}$ remains. For the transverse field Ising model, an example of a $\ket{\Phi}$ which obeys these conditions will be given in Section \ref{simsection}. \\

{\em Proof of Theorems \ref{thm:re} and \ref{thm:im}}:
We require the Trotterization of $e^{-i k H}$ resp. $e^{-k H}$ into a string of local propagation operators $G_i$ which are unitary (in Lemma \ref{prop:retimesignalest}) resp.
Hermitian and non-negative (in Lemma \ref{prop:imtimesignalest}).
This non-unique decomposition of $e^{-ik H}$ and $e^{- k H}$ into an ordered string of local propagation operators depends on the Trotterization scheme and is discussed in \ref{sec:trotter} (and more extensively in \cite{childs+:trotter}). The Trotterization gives an error $\epsilon_{\rm trot}$ (in addition to the sampling error $\epsilon$ in Lemmas \ref{prop:retimesignalest} and \ref{prop:imtimesignalest}) and the number of local propagation operators (for each sample) $L$ in Lemmas \ref{prop:retimesignalest} and \ref{prop:imtimesignalest} will be $L=\text{poly}(n)\:\mathcal{O}\big(\Upsilon\:k^{1+1/p}\epsilon_{\text{trot}}^{-1/p}\big)$ (for real time) and $L=\text{poly}(n)\:\mathcal{O}\big(\Upsilon\:k^{1+1/p}\epsilon_{\text{trot}}^{-1/p}\big)$ (for imaginary time, provided that $M\geq 4\tau \Upsilon \big( \sum_{\gamma} \norm{H_{\gamma}} \big)$, where $M$ is the Trotter variable). 
$\Upsilon$ denotes the number of \textit{stages} in the Trotterization scheme of order $p$, and typically scales exponentially in $p$ (but $p$ is chosen a constant). For given order $p = O(1)$ of the Trotterization scheme, $L$ in Lemma \ref{prop:imtimesignalest} thus scales with the length of the time interval over which the system is simulated as $k^{1+o(1)}$ and $k^{1+o(1)}$, and with the imposed Trotter error as $\epsilon_{\text{trot}}^{-o(1)}$.
Then, if we wish to estimate $g_R(k)$ and $g_I(k)$ at multiple $k=0,\ldots, K$, we use that the probability that all $K$ estimates are up to uncertainty $\epsilon$ equals unity minus the probability that at least one of the estimates is beyond $\epsilon$ (which, by the union bound, is at most $K \delta$).

Finally, before we move on to extracting eigenenergy estimates from the (real-time and imaginary-time) signals using the ESPRIT method, we prove the Lemma related to the signal $g_D(k)$ in Eq.~\eqref{eq:signal-MC-D}. 

\bigskip
\begin{lemma}
Let $g_D(k)$ be defined as in Eq.~\eqref{eq:signal-MC-D} for a local $n$-qubit Hamiltonian $H$, with $E_j$ in $[0,\pi]$, and assume that (1) one can efficiently (i.e. with ${\rm poly}(n)$ effort) sample from $|\Phi(x)|^2$, and (2) given $x$ and $y$, one can compute $\Phi(y)/\Phi(x)$ efficiently. Then, $g_D(k)$ can be classically estimated within error $\epsilon$ with probability at least $1-\delta$ with $[{\rm poly}( n)]^{k}\times \Theta(\epsilon^{-2}\log(\delta^{-1}))$ classical computational effort.
\label{lem:SVT}
\end{lemma}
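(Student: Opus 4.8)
The plan is to reduce the estimation of $g_D(k)=\bra{\Phi}M^{k}\ket{\Phi}$, where $M:=I-H/2\pi$, to a Monte Carlo average over random walks on $n$-bit strings, in close analogy with the proof of Lemma~\ref{prop:imtimesignalest}. The two new features, relative to that proof, are that $M$ is Hermitian but no longer elementwise nonnegative (so the phases of its matrix elements must be carried by the estimator, and the transition probabilities are obtained by $\ell_1$-normalizing a row of $|M|$ rather than by the Perron--Frobenius rescaling of Proposition~\ref{prop:PF}), and that the per-step rescaling factor is now only ${\rm poly}(n)$ rather than bounded by $1$ --- this is exactly what produces the $[{\rm poly}(n)]^{k}$ (rather than ${\rm poly}(n)$) cost.

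First I would record the structural facts about $M$. Since $H=\sum_{i=1}^{N}H_i$ is $O(1)$-local with $N={\rm poly}(n)$ terms, each row of $M$ has at most ${\rm poly}(n)$ nonzero entries, and for a given bit string $x$ one can enumerate those entries and their values in ${\rm poly}(n)$ time (only the $H_i$ whose support contains a flipped qubit contribute, each connecting $x$ to $O(1)$ strings). Moreover $\norm{M}_{\infty}\le 1+\frac{1}{2\pi}\sum_{i}\norm{H_i}_{\infty}=:W={\rm poly}(n)$, and because we assume $E_j\in[0,\pi]$ the matrix $M$ is Hermitian and positive definite with spectrum in $[\frac{1}{2},1]$; in particular every diagonal entry $\bra{x}M\ket{x}>0$, which will guarantee that the walk below is well defined. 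Next I would expand $g_D(k)=\sum_{x_0,\dots,x_k}\Phi^{*}(x_0)\Phi(x_k)\prod_{l=1}^{k}\bra{x_{l-1}}M\ket{x_l}$, write $\Phi^{*}(x_0)\Phi(x_k)=|\Phi(x_0)|^{2}\,\Phi(x_k)/\Phi(x_0)$, sample $x_0$ from $|\Phi(x_0)|^{2}$, and from $x_{l-1}$ move to $x_l$ with probability $P_l(x_{l-1}\to x_l)=|\bra{x_{l-1}}M\ket{x_l}|/w(x_{l-1})$, where $w(x):=\sum_{y}|\bra{x}M\ket{y}|\in[\bra{x}M\ket{x},\,W]$ is positive by positive-definiteness. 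The estimator is $\mathcal{R}(\boldsymbol{x}):=\frac{\Phi(x_k)}{\Phi(x_0)}\prod_{l=1}^{k}w(x_{l-1})\,\mathrm{sgn}(\bra{x_{l-1}}M\ket{x_l})$, with $\mathrm{sgn}(z):=z/|z|$; the rescaling factors cancel against the walk probabilities, so $\mathbb{E}[\mathcal{R}]=g_D(k)$, and since $g_D(k)\in\mathbb{R}$ one averages $\mathrm{Re}(\mathcal{R})$. Each sample costs $k\cdot{\rm poly}(n)$ time using the enumeration above, and both required accesses to $\ket{\Phi}$ are precisely those assumed in the statement.

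The crux --- and the step I expect to be the main obstacle --- is the variance bound, which controls the sample complexity. Exactly as in Eqs.~\eqref{eq:varbound}--\eqref{eq:unitybound}, one gets $\mathbb{E}[|\mathcal{R}|^{2}]=\sum_{\boldsymbol{x}}|\Phi(x_k)|^{2}\prod_{l=1}^{k}|\bra{x_{l-1}}M\ket{x_l}|\,w(x_{l-1})$, the dangerous factor $|\Phi(x_0)|^{-2}$ having been absorbed by the $|\Phi(x_0)|^{2}$ in the walk measure. Bounding each weight $w(x_{l-1})\le W$ up front gives a factor $W^{k}$; then summing the $x_l$ from left to right and using Hermiticity, $\sum_{x_{l-1}}|\bra{x_{l-1}}M\ket{x_l}|=\sum_{x_{l-1}}|\bra{x_l}M\ket{x_{l-1}}|=w(x_l)\le W$, which gives another $W^{k}$; finally $\sum_{x_k}|\Phi(x_k)|^{2}=1$. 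Hence $\mathrm{Var}(\mathrm{Re}\,\mathcal{R})\le\mathbb{E}[|\mathcal{R}|^{2}]\le W^{2k}=[{\rm poly}(n)]^{k}$. Feeding this into the median-of-means argument of Lemma~\ref{prop:imtimesignalest} (group size $\Theta(W^{2k}\epsilon^{-2})$, $q=\Theta(\log\delta^{-1})$ groups) yields error $\le\epsilon$ with probability $\ge 1-\delta$ from $\Theta(W^{2k}\epsilon^{-2}\log\delta^{-1})$ samples, i.e.\ a total classical cost $[{\rm poly}(n)]^{k}\times\Theta(\epsilon^{-2}\log\delta^{-1})$, as claimed. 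The real work is thus the bookkeeping showing that the uncontrolled ratio $\Phi(x_k)/\Phi(x_0)$ is tamed by the choice of sampling measure, and that the non-unitarity of $M$ costs only a benign factor $W={\rm poly}(n)$ per step (and not something scaling with the Hilbert-space dimension); both rest on the sparsity and Hermiticity of $M$ together with $\sum_{x}|\Phi(x)|^{2}=1$.
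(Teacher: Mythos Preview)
Your proof is correct, but it takes a genuinely different route from the paper's. The paper samples only the initial string $x_0$ from $|\Phi|^2$ and then \emph{deterministically} computes the full estimator $\mathcal{R}(x_0)=\sum_{y}\frac{\Phi(y)}{\Phi(x_0)}\bra{x_0}(I-H/2\pi)^k\ket{y}$ by explicitly enumerating all $[\mathrm{poly}(n)]^k$ paths reached from $x_0$. In that setup the variance collapses to $\mathbb{E}[|\mathcal{R}|^2]=\sum_x\bra{\Phi}M^k\ket{x}\bra{x}M^k\ket{\Phi}=\bra{\Phi}M^{2k}\ket{\Phi}\le 1$ (using only the spectral bound $E_j\in[0,\pi]$), so the sample complexity is $\Theta(\epsilon^{-2}\log\delta^{-1})$ and the $[\mathrm{poly}(n)]^k$ factor lives entirely in the per-sample computation. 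You instead sample the whole path as a random walk, which keeps the per-sample cost at $k\cdot\mathrm{poly}(n)$ but shifts the $[\mathrm{poly}(n)]^k$ into the variance bound $W^{2k}$ and hence into the number of samples. The total cost is the same in both cases. The paper's approach buys a cleaner variance calculation and a deterministic inner loop; yours stays closer in spirit to Lemma~\ref{prop:imtimesignalest} and makes transparent exactly where stoquasticity is lost (the per-step weight $w(x)\le W$ is $\mathrm{poly}(n)$ rather than $\le 1$). The paper even remarks on this dichotomy right after its proof, contrasting ``sample only $x$ and compute the rest'' with ``sample the whole path''.
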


\bigskip
\begin{proof}
By definition of $g_{D}(k)$, we can write
\begin{align}
    g_D(k)=\sum_{x,y}|\Phi(x)|^2 \frac{\Phi(y)}{\Phi(x)} \bra{x} (I-H/2\pi)^k \ket{y}.
\end{align}
To estimate $g_D(k)$, one first draws an $x$ from $P(x)=|\Phi(x)|^2$, and then one collects all $y$ which are obtained after the application of $(I-H/2\pi)^k$ to $\bra{x}$. Each application of $I-H/2\pi$ maps the input string onto at most ${\rm poly}(n)$ new output strings, hence one obtains at most $[{\rm poly}(n)]^k$ such $y$'s after $k$ applications. Let $\boldsymbol{x}=(x_k=y,x_{k-1},\ldots, x_1, x_0=x)$ be a particular path of strings and let 
\begin{align}
    {\cal R}(x)\equiv \:&
    \sum_{y}\frac{\Phi(y)}{\Phi(x)}\bra{x}(I-H/2\pi)^{k}\ket{y} \notag \\ = \:&
    \sum_{x_1,\ldots, x_{k-1},y} \frac{\Phi(y)}{\Phi(x)}\bra{x} I-H/2\pi \ket{x_{1}}\bra{x_1} I-H/2\pi \ket{x_2} \ldots \bra{x_{k-1}} I-H/2\pi \ket{y},
\end{align}
so that $g_D(k)=\sum_{x} |\Phi(x)|^2 \, \mathcal{R}(x)$.
For each $x$ that is sampled from $P(x)$, one thus computes and outputs ${\rm Re}(\mathcal{R}(x))$ by summing over the contributions from \textit{all} paths $\boldsymbol{x}$ that start at string $x$. As in Lemma \ref{prop:imtimesignalest}, we need to establish how many samples $|\Sigma|$ we need to draw from $P(x)$ to obtain $g_D(k)$ within error $\epsilon$ with probability at least $1-\delta$. This analysis depends on the variance of the complex variable $\mathcal{R}(x)$ through Eq.~\eqref{eq:varboundR}, requiring us to upper bound \begin{align}
    \mathbb{E}\big( |\mathcal{R}(x)|^2\big)= \:& \sum_{x} |\Phi(x)|^2\:\bigg( \sum_{y} \frac{\Phi^{*}(y)}{\Phi^{*}(x)}\bra{y}(I-H/2\pi)^{k}\ket{x}\bigg)\bigg( \sum_{y'} \frac{\Phi(y')}{\Phi(x)}\bra{x}(I-H/2\pi)^{k}\ket{y'}\bigg) \notag \\ = \:& \sum_{x} \bra{\Phi}(I-H/2\pi)^{k}\ket{x}\bra{x}(I-H/2\pi)^{k}\ket{\Phi} \notag \\ = \:& \bra{\Phi} (I-H/2\pi)^{2k} \ket{\Phi} \leq 1,
\end{align}
where in the final line we have used that the eigenvalues of $H$ lie in $[0,\pi]$.
As in the proof of Lemma \ref{prop:imtimesignalest}, this establishes that ${\rm Var}\big({\rm Re}(\mathcal{R}(x))\big)\leq 1$. Then we can use the median-of-means estimator as in the proof of Lemma \ref{prop:imtimesignalest} and \ref{app:MOM} to establish that with probability at least $1-\delta$, $g_D(k)$ can be estimated with error at most $\epsilon$, taking $|\Sigma|=\Theta(\epsilon^{-2}\log(\delta^{-1}))$ samples from $P(x)=|\Phi(x)|^2$, and with $[{\rm poly}(n)]^k$ computational effort per sample.
\end{proof}

It is important to note that unlike in Lemma \ref{prop:imtimesignalest}, here we only sample $x$ and compute the rest as the estimator $\mathcal{R}(x)$, while in Lemma \ref{prop:imtimesignalest} we sample the whole path of length $L$. This is why the computational effort in Lemma \ref{prop:imtimesignalest} is efficient (linear) in $L$ and thus polynomial in $k$, while in Lemma \ref{lem:SVT} the computational effort is exponential in $k$. This is thus the difference between the stoquastic Hamiltononian case versus the general Hamiltonian case. Note also that one can take each $G_i$ in Lemma \ref{prop:imtimesignalest} to be $G=I-H/2\pi$ in principle, as it obeys condition (ii) when $H$ is stoquastic.

We note that in \cite{dequant:GG} the sampling-access assumption is formulated slightly differently, that is, one gets access to $\Phi(x)$ for a given $x$, which can be stronger than only knowing the ratio $\Phi(x)/\Phi(y)$ for a given $x$ and $y$. In addition, Ref.~\cite{dequant:GG} allows an additional error in the sampling access whereas we gloss over this here and assume perfect sampling-access (similar to the exact assumptions in the other Lemmas). 


\section{Classically processing the signal: the ESPRIT method}
\label{sec:MPM-analysis}

We turn to discussing the ESPRIT method \cite{li2019superresolution} which is a method like the matrix pencil method \cite{Sarkar,Sarkar2,Sarkar3} for processing a signal as in Eqs.~\eqref{eq:signal-QPE} and \eqref{eq:signal-MC} consisting of $S$ components. Indeed, suppose a set of values for the signal $g(k)$,
\begin{equation}
    g(k) = \sum_{j=1}^{S}c_{j}z_{j}^{k},
\end{equation}
where $|z_j|\leq 1$, for $k \in \{0,1,...,K\}$, $K$ even. The goal is to determine the $z_j$ and the coefficients $c_j > 0$ \footnote{Here we focus on determining the $z_j$, but given the $z_j$ one can determine the $c_j$ as well and methods for analyzing the performance also exist for this \cite{Moitra}.} using $g(k)$ for sufficiently many $k$. 
In case of the real-time signal $g_R(k)$, we have $z_j\equiv e^{-i E_j}$, in case of a purely-decaying imaginary-time signal $g_I(k)$, we have $z_{j} \equiv e^{-E_{j}} \in (e^{-2\pi},1]$ and for the purely-decaying signal $g_D(k)$ we have $z_j=\left(1-E_j/(2\pi)\right)^2$.

Due to sampling and Trotter noise, one is effectively given a noisy signal $y(k)$ (for $k \in \{0,1,...,K\}$), which is related to the original signal $g(k)$ by:
\begin{equation}
    y(k) := g(k) + \eta(k) = \sum_{j=1}^{S}c_{j}z_{j}^{k} + \eta(k),
\label{eq:noisysignal}
\end{equation}
where $\eta(k)$ denotes e.g. the sampling and Trotter noise, and we have $|\eta(k)| \leq \epsilon_{\rm tot}$ in Theorem \ref{thm:re} and \ref{thm:im} with high probability.

It is well-known that for a noiseless signal ($\eta(k)=0$), the $z_{j}$'s and the $c_j$'s can be resolved perfectly via ESPRIT and the matrix pencil method if we take $K+1\geq 2S$. Importantly, this result does not depend on whether the signal is oscillatory or decaying. For illustration, Figure \ref{MPMtest} depicts the results of application of the matrix pencil method to a noiseless signal. We consider separately a decaying signal and an oscillating signal, and for both cases we depict respectively the estimates of the decay rates and oscillation frequencies as a function of $K$. When $K+1\geq 2S$, the eigenvalues are indeed resolved both for the decaying and the oscillating signal. When $K+1<2S$, the eigenvalues are not resolved. We will see however, that in the presence of noise a decaying or oscillatory signal fares very differently.

\begin{figure}[b]
    \centering
    \includegraphics[width=0.55\linewidth]{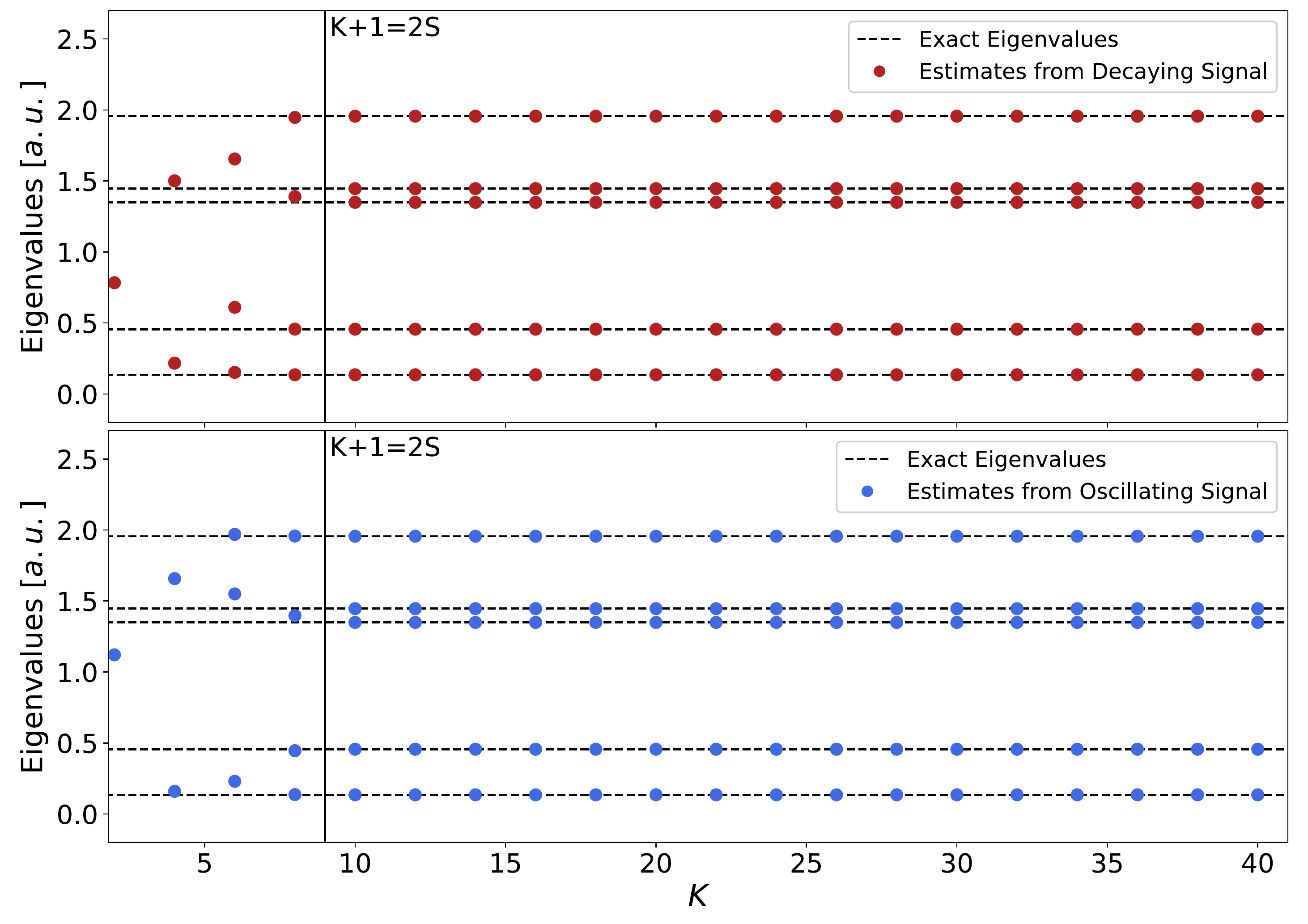}
    \caption{Estimates of the decay rates (of a decaying signal) and oscillation frequencies (of an oscillating signal) as a function of $K$. The estimates are obtained from applying the matrix pencil method \cite{Sarkar, Sarkar2, Sarkar3} to the noiseless signals $g(k) = \sum_{j=1}^{S}c_{j}z_{j}^{k}$, where $z_{j} = e^{-E_{j}}$ (in the case of the decaying signal) and $z_{j} = e^{-iE_{j}}$ (in the case of the oscillating signal) for $k=0,1,\ldots, K$ where $E_j \in [0,2\pi)$. All $c_{j}$'s are set equal to $1/S$ and the $E_{j}$'s have been randomly produced. The eigenvalues are recovered for $K+1\geq 2S$. }
    \label{MPMtest}
\end{figure}

Let us consider in more detail the task of obtaining the $z_{j}$'s from the signal $y(k)$ in Eq.~\eqref{eq:noisysignal}. The key object of study here is the Hankel matrix $H(y):= H(g)+H(\eta)$, containing all $K$ data points of the noisy signal $y(k)$ and a positive integer `matrix pencil' parameter $L$:
\begin{multline}
    H(y) = 
    \begin{pmatrix}
    y(0) & y(1) & \dots & y(K-L) \\
    y(1) & y(2) & \dots & y(K-L+1) \\
    \vdots & \vdots & & \vdots \\
    y(L) & y(L+1) & \dots & y(K)
    \end{pmatrix}_{(L+1)\times(K-L+1)} \\= \sum_{j=1}^{S}c_{j}
    \begin{pmatrix}
    1 & z_{j} & \dots & z_{j}^{K-L} \\
    z_{j} & z_{j}^{2} & \dots & z_{j}^{K-L+1} \\
    \vdots & \vdots & & \vdots \\
    z_{j}^{L} & z_{j}^{L+1} & \dots & z_{j}^{K}
    \end{pmatrix}+H(\eta),
\label{eq:Ymatrix}
\end{multline}
where $H(\eta)$ is purely due to the noise and has norm $\norm{H(\eta)}$. We can decompose the Hankel matrix $H(g)$ of the noiseless signal in terms of Vandermonde matrices $V_L$:
\begin{equation}
    H(g) = \sum_{j=1}^{S}c_{j}
    \begin{pmatrix}
    1 & z_{j} & \dots & z_{j}^{K-L} \\
    z_{j} & z_{j}^{2} & \dots & z_{j}^{K-L+1} \\
    \vdots & \vdots & & \vdots \\
    z_{j}^{L} & z_{j}^{L+1} & \dots & z_{j}^{K}
    \end{pmatrix}_{(L+1)\times (K-L+1)} = V_{L} C V_{K-L}^T, 
\label{eq:Vandermondedecomp}
\end{equation}
where 
\begin{align}
    C \equiv \text{diag}(c_{1},c_{2},...,c_{S}),
    \label{def:matrixC}
\end{align} and $V_L$ is
\begin{equation}
    V_{L} = \begin{pmatrix}
    1 & 1 & \dots & 1 \\
    z_{1} & z_{2} & \dots & z_{S} \\
    \vdots & \vdots & & \vdots \\
    z_{1}^{L} & z_{2}^{L} & \dots & z_{S}^{L}
    \end{pmatrix}_{(L+1)\times S}. 
    \label{eq:V}
\end{equation}
In general, methods such as ESPRIT (see the ESPRIT Algorithm \ref{alg:two}) rely on the parameter $L$ and for convenience we will keep it general in some of the analysis (specifically in \ref{app:MPM}). Our results will, however, focus on the choice $L=K/2$.  For $L=K/2$, we have 
\begin{align} 
    H(y) = 
    \begin{pmatrix}
    y(0) & y(1) & \dots & y(K/2) \\
    y(1) & y(2) & \dots & y(K/2+1) \\
    \vdots & \vdots & & \vdots \\
    y(K/2) & y(K/2+1) & \dots & y(K)
    \end{pmatrix}_{(K/2+1)\times(K/2+1)}
\end{align}
and
\begin{align}
    H(g) = V_{K/2}CV_{K/2}^{T} \in \mathbb{C}^{(K/2+1)\times (K/2+1)}.
\end{align}
Making contact with error bounds in the previous section, we see that (for $L=K/2$) 
\begin{equation}
    \forall k,\; \eta(k)\leq \epsilon_{\rm tot} \Rightarrow \norm{H(\eta)}=\sigma_{\rm max}(H(\eta)) \leq \norm{H(\eta)}_F \leq K \epsilon_{\rm tot}.
    \label{eq:upper}
\end{equation}

From the `Vandermonde decomposition' in Eq. \eqref{eq:Vandermondedecomp} of the Hankel matrix encoding a real-time or imaginary-time signal, one can develop numerical algorithms to extract the the decay rates $z_i$. One such algorithm is ESPRIT (given in Algorithm \ref{alg:two}), which specifically exploits the relation between the Vandermonde decomposition of $H(y)$ and its singular value decomposition. 
\SetKwComment{Comment}{/* }{ */}
\begin{algorithm}
\caption{ESPRIT algorithm.}\label{alg:two}
\KwData{Time signal $y$, number of decay rates or oscillation frequencies $S$.}
\KwResult{List $\tilde{z}_1, \ldots ,\tilde{z}_S$.}
$K \gets \mathrm{length}(y)$\Comment*[r]{We will assume $K$ is even for simplicity.}
$L \gets K/2$\Comment*[r]{Not the most general choice, however it works well in practice.}
$H(y) \gets \text{Hankel matrix built from } y$\;
$\tilde{U}, \tilde{\Sigma}, \tilde{W} \gets \mathrm{SVD}(H(y))$\Comment*[r]{Make sure $\tilde{\Sigma}$ is decreasingly ordered.}
$\tilde{U}_S\gets \text{First }S\text{ columns of }\tilde{U}$\Comment*[r]{Remember $\tilde{U}$ is a $(L+1) \times (L+1)$ unitary matrix}
$\tilde{U}_0\gets \text{First }L\text{ rows of }\tilde{U}_S $\;
$\tilde{U}_1\gets \text{Last }L\text{ rows of }\tilde{U}_S $\;
$\tilde{\Psi}\gets\tilde{U}_0^+\tilde{U}_1 $\Comment*[r]{Make $S \times S$ signal matrix $\tilde{\Psi}$, $+$ denotes Moore-Penrose inverse.}
$\tilde{z}_1, \ldots, \tilde{z}_S \gets\text{eigenvalues of signal matrix } \tilde{\Psi}$.
\end{algorithm}

We will see that this algorithm comes with recovery guarantees on the parameters $z_1, \ldots, z_{S}$, in both the real-time and imaginary-time signal case, provided the noise vector $\eta$ is small enough. The strength of these guarantees differs significantly between the two types of signal, and we will discuss them separately in the next sections. 
From the $\tilde{z}_j$'s we can then (for both the real-time and imaginary-time signal) extract $\tilde{E_j}$'s, which denote the $S$ estimates for $\{E_i \in [0,2\pi)\}_{i=1}^S$ returned by the classical post-processing algorithm. The error in the energy estimates is set as the optimal matching distance \cite{bhatia2013matrix} 
\begin{equation}\label{eq:matching_error}
    d(\{E_i\},\{\tilde{E}_j\})=\frac{1}{2\pi}\min_{\pi \in \text{Perm}_S} \max_j |\tilde{E}_{\pi(j)}-E_j|,
\end{equation}
i.e. the returned list is optimally matched with the actual eigenvalues and the error is set by the largest mismatch.

\subsection{Real-time (oscillatory) signal}
In this section we discuss the performance of ESPRIT on real-time (oscillatory) signals. This performance has been well studied in the signal processing literature. Here, we will follow the analysis of \cite{li2019superresolution}, which provides Theorem \ref{thm:osc-esprit-gap} relating $\norm{H(\eta)}$ in Eq.~\eqref{eq:upper} and the energy matching error defined in Eq.~\eqref{eq:matching_error}. 

The performance of ESPRIT in the oscillatory signal case relies on lower bounding the smallest nonzero singular value of the Vandermonde matrix $V_{L=K/2}$ in Eq.~\eqref{eq:V}, (or similarly upperbounding the condition number $\kappa(V_{K/2})=\sigma_{\rm max}(V_{K/2})/\sigma_{\rm min}(V_{K/2})$). The smallest nonzero singular value of the Vandermonde matrix $V_{K/2}$ will depend on $K$, $S$ and the location of the poles $z_j$. For the real-time signal, the $z_j$ lie on the unit circle whereas for the imaginary-time signal the $z_j$ lie in the interval $(e^{-2\pi},1]$.
Let the minimal gap between the $E_i$ be defined as 
\begin{align}
    \Delta=\frac{1}{2\pi} \min_{j \neq k}|E_j-E_k|.
    \label{eq:gap}
\end{align}

It has been proved \cite{Moitra} for $z_j=e^{-i E_j}$ that
\begin{equation}
 \Delta\geq \frac{C}{K} \Rightarrow  \sigma_{\rm min}^2(V_{K/2})\geq \frac{C-1}{C} K, 
 \label{eq:gap-del}
\end{equation}
for some constant $C> 1$. Note that if there are $S$ eigenvalues $E_j \in [0,2\pi)$ in the signal, it is clear that the minimal gap $\Delta\leq 1/S$, hence one should at least take $K\geq C S$. Based on this bound, Theorem 4 in \cite{li2019superresolution} says:

\bigskip
\begin{theorem}[\cite{li2019superresolution}]
Let $(g+\eta)(k)$ be a real-time signal with $k=0,\ldots,K$, and with $g(k) = \sum_{i=1}^{S}c_{i}z_{i}^{k}$, $c_i > 0\: \forall i$, $c_{\min}=\min_i c_i$ and $\eta(k)$ a small noise vector. Let $z_j=e^{-i E_j}$ with $j=1, \ldots, S$ and $E_j \in [0,2\pi)$ $\forall j$, and $K \geq 2C/\Delta$ for some constant $C > 2$ with gap $\Delta$, and $K+1 \geq 2S$. 
If
\begin{equation}
    \norm{H(\eta)} \leq c_{\rm min}K \,h_1(S,C,K),
    \label{eq:condH-gap}
\end{equation}
with
\begin{equation}
    h_1(S,C,K)=\frac{C-1}{8\sqrt{2S} C}\sqrt{1-\frac{2CS}{(C-1)K}},
\end{equation}
then the ESPRIT algorithm outputs energy estimates $\{\tilde{E}_j\}$ with distance 
\begin{equation}
    d(\{E_i\}),\{\tilde{E}_j\})  \leq \norm{H(\eta)} c_{\rm min}^{-1} K^{-1}\, h_2(S,C,K),
    \end{equation}
    with
    \begin{equation}
    h_2(S,C,K)=
    40\sqrt{2} S^2  \left(\frac{C}{C-1}\right)^{3/2}\left(1-\frac{2CS}{(C-1)K} \right)^{-1}.
\end{equation}
\label{thm:osc-esprit-gap}
\end{theorem}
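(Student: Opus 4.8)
The plan is to run the standard perturbation analysis of ESPRIT (following \cite{li2019superresolution}) while keeping all constants explicit. First I would record the noiseless structure. By Eq.~\eqref{eq:Vandermondedecomp}, $H(g)=V_{K/2}CV_{K/2}^{T}$ has rank exactly $S$ (the $z_j$ are distinct and $K/2+1\ge S$ since $K+1\ge 2S$), its column space equals $\operatorname{range}(V_{K/2})$, so the leading-$S$ left singular matrix $U_S$ satisfies $U_S=V_{K/2}R$ for some invertible $R\in\mathbb{C}^{S\times S}$. Writing $U_0,U_1$ for the first, resp.\ last, $L=K/2$ rows of $U_S$ and using the shift relation $V_{K/2}^{(1)}=V_{K/2}^{(0)}Z$ with $Z=\operatorname{diag}(z_1,\dots,z_S)$, one gets $\Psi=U_0^{+}U_1=R^{-1}ZR$, so the noiseless signal matrix has eigenvalues exactly $\{z_j\}$. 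The recovery error is therefore entirely due to propagating the perturbation $H(\eta)$, with $\|H(\eta)\|\le K\epsilon_{\rm tot}$ from Eq.~\eqref{eq:upper}, through three stages: (i) the leading-$S$ left singular subspace, (ii) the pseudo-inverse $U_0^{+}$, and (iii) the non-normal eigenvalue map $\Psi\mapsto\operatorname{spec}(\Psi)$.

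For stage (i) I would use Weyl's inequality for singular values together with a lower bound on $\sigma_S(H(g))$. The latter follows from $\sigma_S(H(g))\ge c_{\min}\,\sigma_{\min}(V_{K/2})^2$ and the Vandermonde estimate Eq.~\eqref{eq:gap-del}: under $K\ge 2C/\Delta$ one has $\sigma_{\min}(V_{K/2})^2\gtrsim K$, hence $\sigma_S(H(g))\gtrsim c_{\min}K$. If $\|H(\eta)\|$ is a small enough multiple of $c_{\min}K$ — which is precisely the hypothesis $\|H(\eta)\|\le c_{\min}K\,h_1(S,C,K)$ — then $\sigma_S(H(y))$ stays large while $\sigma_{S+1}(H(y)),\dots\le\|H(\eta)\|$, giving a clean spectral gap, and Wedin's $\sin\Theta$ theorem bounds the principal angles between $U_S$ and $\tilde U_S$ by $O\!\big(\|H(\eta)\|/(\sigma_S(H(g))-\|H(\eta)\|)\big)=O\!\big(\|H(\eta)\|/(c_{\min}K)\big)$.

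For stage (ii) I would lower-bound $\sigma_{\min}(U_0)$ and $\sigma_{\min}(\tilde U_0)$: deleting one row from the orthonormal $U_S$ cannot collapse the rank, and quantitatively $\sigma_{\min}(U_0)$ is controlled by $\sigma_{\min}(V_{K/2-1})/\sigma_{\max}(V_{K/2})$, estimated via Eq.~\eqref{eq:gap-del} and the crude bound $\sigma_{\max}(V_{K/2})^2\le(K/2+1)S$; this is where the factor $\sqrt{1-2CS/((C-1)K)}$ in $h_1$ originates, as it is exactly what keeps $\sigma_{\min}(\tilde U_0)$ bounded away from $0$. Combining, $\|\tilde\Psi-\Psi\|$ is bounded by a constant times $\|\sin\Theta(U_S,\tilde U_S)\|/\sigma_{\min}(U_0)$ (using $\|U_1\|\le 1$). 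Finally, for stage (iii), $\Psi=R^{-1}ZR$ is diagonalizable with eigenvector matrix $R^{-1}$ whose conditioning is again governed by $\Delta$ and $S$ (through the $S\times S$ Vandermonde in the nodes $z_j$); a Bauer--Fike / Elsner-type bound converts $\|\tilde\Psi-\Psi\|$ into a matching distance $d(\{z_i\},\{\tilde z_i\})$, and since the $z_j$ lie on the unit circle this transfers, up to a constant, to the energy matching distance of Eq.~\eqref{eq:matching_error}. Chaining the three bounds and collecting the $S$-, $C$-, and $K$-dependence should reproduce exactly the stated $h_2(S,C,K)$.

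The main obstacle is stage (iii) together with the bookkeeping of stage (ii): because $\tilde\Psi$ is non-normal, Weyl-type eigenvalue stability is unavailable, and one must instead control the eigenvector conditioning of $\Psi$ (equivalently, the conditioning of the $S\times S$ Vandermonde on $\{z_j\}$) and show that $U_0^{+}U_1$ is Lipschitz in the subspace $\tilde U_S$ with a Lipschitz constant that does not degrade — which is what forces the $S^2$ and $(C/(C-1))^{3/2}$ factors and the requirement, via $h_1$, that $\|H(\eta)\|$ be small enough that none of the denominators $\sigma_S(H(g))-\|H(\eta)\|$ or $\sigma_{\min}(\tilde U_0)$ degenerate. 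Everything else is a routine assembly of Weyl, Wedin, and the Moitra Vandermonde estimate already quoted in Eq.~\eqref{eq:gap-del}.
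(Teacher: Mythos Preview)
Your outline is correct and mirrors the paper's own treatment. The paper does not give an independent proof of this theorem---it is quoted as Theorem~4 of \cite{li2019superresolution}---but Appendix~\ref{app:MPM} carries out exactly the three-stage analysis you describe (Wedin for the singular subspace in Lemma~\ref{lem:U_pert}, a lower bound on $\sigma_{\min}(U_0)$ in Lemma~\ref{lem:small_sing_U_0}, pseudo-inverse perturbation in Lemma~\ref{lem:U_inv_bound}, and Bauer--Fike after Theorem~\ref{thm:signalbound}) for the decaying case, as a direct adaptation of \cite{li2019superresolution}.

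One small point worth flagging: in stage~(ii) you control $\sigma_{\min}(U_0)$ via $\sigma_{\min}(V_{K/2-1})/\sigma_{\max}(V_{K/2})$, which is the general-purpose bound of Lemma~\ref{lem:small_sing_U_0}. The paper explicitly remarks that this is the one place where it deviates from \cite{li2019superresolution}: the original proof uses a sharper estimate (their Lemma~3) that exploits $|z_j|=1$ to get a better dependence on the Vandermonde condition number. If you want to land on the \emph{exact} constants in $h_1$ and $h_2$ you will need that unit-circle-specific lemma rather than the generic one; with Lemma~\ref{lem:small_sing_U_0} you would obtain a structurally identical statement but with somewhat looser constants. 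Otherwise your plan is complete.
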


\bigskip
By Eq.~\eqref{eq:upper} we have $\norm{H(\eta)}\leq K \epsilon_{\rm tot}$ and if we choose $K \sim S$, $\epsilon_{\rm tot}$ can be chosen sufficiently small, inversely polynomial with $S$, such that at least Eq.~\eqref{eq:condH-gap} holds. Then $d(\{E_i\}),\{\tilde{E}_j\})$ will be $\Theta(\norm{H(\eta)}S)$, hence decreasing like $S^{2} \epsilon_{\rm tot}$.

If we combine this Theorem with the quantum results of Theorem \ref{thm:re}, then we obtain Theorem \ref{thm:QPE-total}. These results thus form the theoretical underpinning of the ideas and numerical work in \cite{TomBarbara} in which quantum phase estimation was replaced by the repeated execution of a circuit applying controlled-$U^k$ (conditioned on  an ancilla qubit state) which gets Trotterized to the overlap test circuit in Fig.~\ref{qpecircuit}.

\begin{remark}
It is noteworthy that even when the eigenvalues $E_j$ are not well-separated but occur in `clumps', results exist \cite{li2019superresolution} which bound the performance of ESPRIT. 
\end{remark}

\subsection{Imaginary-time (decaying) signal}

Let us now discuss what information can be extracted from the imaginary-time signal in the presence of sampling and Trotter noise and compare this to the known Theorem \ref{thm:osc-esprit-gap} for the real-time signal.

In \ref{app:MPM} we discuss in detail the recovery guarantees for ESPRIT for imaginary-time signals. This analysis is an adaptation of the work done in \cite{li2019superresolution} for real-time signals, with the only true novelty being Lemma \ref{lem:U_inv_bound}. However, since no rigorous analysis for imaginary-time signals exists in the literature we go through all the steps in considerable detail.  
The analysis will again depend on the condition number of the Vandermonde matrix $V_{L=K/2}$ in Eq.~\eqref{eq:V}.

This condition number is much worse behaved, i.e. much larger, in case the $z_i$'s all lie on the real axis --which is the case for the imaginary-time signal-- but bounds on this condition number do exist \cite{bazan}. Based on the work of Gautschi~\cite{gautschi1962inverses}, we derive our own upper bounds on this condition number, 
which are asymptotically sub-optimal but have a clearer dependence on the choice of $K$ and the given $S$ than previous bounds in \cite{bazan}. We then use the gap $\Delta$ to fill in the upper bound.

In analogy to Theorem \ref{thm:osc-esprit-gap}, we then obtain the following:

\bigskip
\begin{theorem}\label{thm:final}
Let $(g+\eta)(k)$ be an imaginary-time decaying signal with $k=0,\ldots,K$, and with $g(k) = \sum_{i=1}^{S}c_iz_i^k$, $c_i > 0, \:\forall i$, $c_{\min}=\min_i c_i$, and $\eta(k)$ a small noise vector. Let $z_i = e^{-E_i}$ with $E_{i} \in [0, 2\pi)$ and given eigenvalue gap $\Delta < 1$ in Eq.~\eqref{eq:gap}, and $\{\tilde{E}_i\}$ the energy estimates of ESPRIT with $L=K/2$. Let $K+1 \geq 2S$, $K$ even and $K=TS$ for some positive integer $T$. 
If we have
\begin{equation}
    \norm{H(\eta)}\leq \frac{c_{\rm min}}{\sqrt{K}} g_1(S,\Delta),
    \label{eq:suff-cond-main}
\end{equation}
with 
\begin{align}
    g_1(S,\Delta)=\frac{1}{32 S^2}\, (e^{-2\pi}\pi \Delta)^{3(S-1)}, \end{align}
then 
\begin{equation}
    d(\{\tilde{E}_i\}, \{E_j\}) \leq  \norm{H(\eta)}\,c_{\min}^{-1} K\sqrt{K} g_2(S,\Delta),
    \label{eq:upper-d-main}
\end{equation}
with
\begin{align}
g_2(S, \Delta)= e^{2\pi} 640 \sqrt{2}  \,S^{5.5} \,  (e^{-2\pi}\pi \Delta)^{-5(S-1)}.
\end{align}
\end{theorem}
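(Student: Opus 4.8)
The plan is to follow the perturbation-theoretic analysis of Theorem~\ref{thm:osc-esprit-gap} from \cite{li2019superresolution} step by step, but re-deriving every estimate that exploited the nodes lying on the unit circle so that it instead works for the real nodes $z_i = e^{-E_i} \in (e^{-2\pi},1]$. Write $H(y)=H(g)+H(\eta)$ with $H(g)=V_{K/2}\,C\,V_{K/2}^T$ the Vandermonde factorization of Eq.~\eqref{eq:Vandermondedecomp}. ESPRIT is stable as soon as $\norm{H(\eta)}$ is small relative to the smallest nonzero singular value $\sigma_S(H(g))$; since $V_{K/2}$ has full column rank $S$ and $C=\mathrm{diag}(c_i)$ with $c_i>0$, one has $\sigma_S(H(g))\ge c_{\min}\,\sigma_{\min}^2(V_{K/2})$, so the first and decisive task is to lower bound $\sigma_{\min}(V_{K/2})$ for \emph{real, possibly clustered} nodes. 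This replaces the clean unit-circle estimate of Eq.~\eqref{eq:gap-del}; no gap-only bound with the optimal $K$-dependence is known in the real case (cf.~\cite{bazan}), so a separate argument is needed.

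For that argument I would use Gautschi's explicit formulas~\cite{gautschi1962inverses} for the entries of the inverse of a square Vandermonde matrix to bound $\norm{V_{K/2}^{+}}$ (equivalently $1/\sigma_{\min}(V_{K/2})$) in terms of products of the form $\prod_{i\neq j}(1+|z_i|)/|z_i-z_j|$, and then lower bound each node separation using the mean value theorem: $|z_i-z_j| = |e^{-E_i}-e^{-E_j}| \ge e^{-2\pi}|E_i-E_j| \ge 2\pi e^{-2\pi}\Delta$, with $\Delta$ the gap of Eq.~\eqref{eq:gap}. Accumulating $S-1$ such factors per row over $S$ rows produces the exponential-in-$S$ scaling $(e^{-2\pi}\pi\Delta)^{\Theta(S)}$ that appears in $g_1$. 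Carrying this bound into the conditioning of the ESPRIT subspace is precisely the content packaged in Lemma~\ref{lem:U_inv_bound}, and I expect this to be the \textbf{main obstacle}: the exponential blow-up must be tracked cleanly through the SVD/pseudo-inverse step (where the structure of real nodes, unlike unit-circle nodes, offers no cancellations), and this is the one place where the imaginary-time analysis genuinely departs from \cite{li2019superresolution}.

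With the Vandermonde bound of Lemma~\ref{lem:U_inv_bound} in hand, the rest is the standard ESPRIT chain, which I would reproduce in \ref{app:MPM}. A Wedin/Davis--Kahan $\sin\Theta$ estimate shows that the top-$S$ left-singular subspace $\tilde U_S$ of $H(y)$ lies within $O(\norm{H(\eta)}/\sigma_S(H(g)))$ of the true subspace spanned by the columns of $V_{K/2}$; the hypothesis $\norm{H(\eta)} < \sigma_S(H(g))$ needed to apply it is exactly what the sufficient condition Eq.~\eqref{eq:suff-cond-main} enforces after substituting $\sigma_S(H(g))\gtrsim c_{\min}\sigma_{\min}^2(V_{K/2})$ and the node-separation bound (the $1/\sqrt K$ there traces to pairing $\norm{H(\eta)}\le K\epsilon_{\rm tot}$ with the growth of $\norm{V_{K/2}}$ in $K$). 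The shift-invariance structure then lets one bound the perturbation of the signal matrix $\tilde\Psi=\tilde U_0^{+}\tilde U_1$ relative to $\Psi=U_0^{+}U_1$, whose eigenvalues are exactly the $z_j$; this requires a lower bound on $\sigma_{\min}(U_0)$ (inherited from $\sigma_{\min}(V_{K/2})$, contributing more powers of $\Delta$) together with a Henrici/Bauer--Fike-type eigenvalue-perturbation bound for the possibly non-normal $\tilde\Psi$, whose eigenvector condition number again reduces to $\kappa(V_{K/2})$.

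Finally, I would translate eigenvalue error in the $z$-plane into the matching distance of Eq.~\eqref{eq:matching_error} on the energies. Since $z_j=e^{-E_j}\ge e^{-2\pi}$, one has $|\tilde E_j-E_j|\le e^{2\pi}|\tilde z_j - z_j| + O(|\tilde z_j-z_j|^2)$, which accounts for the $e^{2\pi}$ prefactor in $g_2$; an optimal matching of $\{\tilde z_j\}$ to $\{z_j\}$ (a Bhatia-type spectral-variation bound~\cite{bhatia2013matrix}) then yields $d(\{\tilde E_i\},\{E_j\})$. Assembling the three ingredients---the real-node bound $\sigma_{\min}^2(V_{K/2}) \gtrsim (e^{-2\pi}\pi\Delta)^{\Theta(S)}/\mathrm{poly}(S,K)$, the $\sin\Theta$ bound, and the $z\mapsto E$ Lipschitz step---and bookkeeping the powers of $S$, $K$ and $\Delta$ (the extra $K\sqrt K$ in Eq.~\eqref{eq:upper-d-main} again coming from $\norm{H(\eta)}\le K\epsilon_{\rm tot}$ in Eq.~\eqref{eq:upper} and from $\norm{V_{K/2}}$ growing like $\sqrt{K}$) gives Eq.~\eqref{eq:upper-d-main} with the stated $g_2$. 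The constant-chasing is tedious but routine once Lemma~\ref{lem:U_inv_bound} is established.
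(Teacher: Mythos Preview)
Your proposal is correct and follows essentially the same route as the paper: Gautschi's square-Vandermonde bound combined with the node-separation estimate $|z_i-z_j|\ge 2\pi e^{-2\pi}\Delta$, fed through Wedin's $\sin\Theta$ theorem, Lemma~\ref{lem:U_inv_bound} for the signal-matrix perturbation, Bauer--Fike for the eigenvalues, and the Lipschitz step $|E-\tilde E|\le e^{2\pi}|z-\tilde z|$. One small correction to your bookkeeping: the $1/\sqrt{K}$ in Eq.~\eqref{eq:suff-cond-main} and the $K\sqrt{K}$ in Eq.~\eqref{eq:upper-d-main} come purely from the trivial bound $\norm{V_{K/2}}\le\sqrt{SK}$ (entering as $\norm{V_{K/2}}^{-1}$ in the noise condition and as $\norm{V_{K/2}}^{3}$ in the error bound), not from pairing with $\norm{H(\eta)}\le K\epsilon_{\rm tot}$, since the theorem is already stated in terms of $\norm{H(\eta)}$.
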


\bigskip
Since the dependence on $S$ is exponential in Eq.~\eqref{eq:upper-d-main}, one cannot make the distance $d(\{\tilde{E}_i\}, \{E_j\})$ small when the number of eigenvalues $S={\rm poly}(n)$, no matter what the gap. This is a crucial difference with the oscillatory real-time case. 
However, for $S=O(1)$, with sufficient, ${\rm poly}(n)$, effort one can make $\norm{H(\eta)}$ sufficiently small to obey Eq.~\eqref{eq:suff-cond-main} and then reduce the error on the found eigenvalues to $1/{\rm poly}(n)$. This assumes that the gap between the $O(1)$ rescaled eigenvalues present in the initial state is at least $1/{\rm poly}(n)$ (and not exponentially small in $n$).

Furthermore, given that $\norm{H(\eta)}$ should decrease at least as $\sim 1/\sqrt{K}$ through Eq.~\eqref{eq:suff-cond-main} but the upper bound in Eq.~\eqref{eq:upper-d-main} scales as $\norm{H(\eta)} K^{3/2}$, one obtains the optimal bound by choosing the {\em minimal} $K$, namely $K=2S$, so that $L=K/2=S$. In this case the Vandermonde matrix $V_{L-1}=V_{S-1}$ is square \footnote{Hence, 
strictly speaking Lemma \ref{lem:non-sq} is not much of a help.}. This expresses the intuitive fact that increasing $K$ will not help beyond a point, as for larger $K$ the signal simply dies out. This is unlike the oscillatory case of Theorem \ref{thm:osc-esprit-gap} in which the optimal $K$ is required to grow with $1/\Delta$. Here the bound does not require that $K$ grows with $1/\Delta$, so there is no `super-resolution'. We note that the upper bounds may have a sub-optimal dependence on $K$ and $S$, which is due to the proof techniques. Practically (roughly) speaking, whenever the condition number of the Vandermonde matrix $V_{L=K/2}$ grows by choosing a larger $K$, choosing that larger $K$ can be beneficial.\\


For the other decaying signal $(g_D(k))$, a rather small change from $z_i=\exp(-E_i)$ to $z_i=1-E_i/2\pi$ gives:
\bigskip
\begin{theorem}\label{thm:final-D}
Let $(g+\eta)(k)$ be a decaying signal with $k=0,\ldots,K$, and with $g(k) = \sum_{i=1}^{S}c_iz_i^k$, $c_i > 0, \:\forall i$, $c_{\min}=\min_i c_i$, and $\eta(k)$ a small noise vector. Let $z_i = 1-E_i/2\pi$ with $E_{i} \in [0, \pi]$ and given eigenvalue gap $\Delta < 1$ in Eq.~\eqref{eq:gap}, and $\{\tilde{E}_i\}$ the energy estimates of ESPRIT with $L=K/2$. Let $K+1 \geq 2S$, $K$ even and $K=TS$ for some positive integer $T$. 
If we have
\begin{equation}
    \norm{H(\eta)}\leq \frac{c_{\rm min}}{\sqrt{K}} \tilde{g}_1(S,\Delta),
    \label{eq:suff-cond}
\end{equation}
with 
\begin{align}
    \tilde{g}_1(S,\Delta)=\frac{1}{32 S^2}\,  \Delta^{3(S-1)}, \end{align}
then 
\begin{equation}
    d(\{\tilde{E}_i\}, \{E_j\}) \leq  \norm{H(\eta)}\,c_{\min}^{-1} K\sqrt{K} \tilde{g}_2(S,\Delta),
    \label{eq:upper-d}
\end{equation}
with
\begin{align}
\tilde{g}_2(S, \Delta)= 640 \sqrt{2}  \,S^{5.5} \,  \Delta^{-5(S-1)}.
\end{align}
\end{theorem}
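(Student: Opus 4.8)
The plan is to re-run the ESPRIT error analysis of \ref{app:MPM} (which proves Theorem~\ref{thm:final}) essentially verbatim, tracking only the two places where the precise location of the nodes $z_i$ on the real line enters. Recall that the chain of estimates there rests on three ingredients: (i) the perturbation bound relating $\norm{H(\eta)}$ to the error in the recovered nodes, which goes through the singular value decomposition of $H(y)=H(g)+H(\eta)$, the novel Lemma~\ref{lem:U_inv_bound} on the pseudoinverse of the truncated block of left singular vectors, and standard matrix perturbation theory; (ii) a lower bound on $\sigma_{\min}(V_{K/2})$ (and an upper bound on $\sigma_{\max}(V_{K/2})$) for a real Vandermonde matrix with nodes in $(0,1]$, obtained from Gautschi's inverse estimates \cite{gautschi1962inverses} and depending only on $\min_{i\neq j}|z_i-z_j|$ and $\min_i |z_i|$; and (iii) the final conversion from the node error $|\tilde z_j - z_j|$ to the energy error $|\tilde E_j - E_j|$. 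Ingredient (i) is agnostic to the functional form $z_i = f(E_i)$, so it carries over unchanged.

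For ingredient (ii), the only inputs are the pairwise node separation and the smallest node modulus. For the signal $g_D$ we have $z_i = 1-E_i/2\pi$ with $E_i\in[0,\pi]$, hence $z_i\in[\frac{1}{2},1]$ and, crucially, $|z_i-z_j| = \frac{1}{2\pi}|E_i-E_j|\ge \Delta$ directly from the definition of $\Delta$ in Eq.~\eqref{eq:gap}. This is to be compared with the $g_I$ case, where $z_i=e^{-E_i}$ gives only $|z_i-z_j|\ge e^{-2\pi}|E_i-E_j|$, i.e. $|z_i-z_j|\gtrsim e^{-2\pi}\pi\Delta$, together with $z_i > e^{-2\pi}$. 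So every occurrence of the quantity $e^{-2\pi}\pi\Delta$ in the bounds of \ref{app:MPM} is replaced by $\Delta$, and every occurrence of the lower node-modulus bound $e^{-2\pi}$ is replaced by $\frac{1}{2}$, which, being an $O(1)$ constant like $e^{-2\pi}$, affects only absolute constants already absorbed in the $32$ and $640\sqrt{2}$. Substituting $e^{-2\pi}\pi\Delta\mapsto\Delta$ into $g_1(S,\Delta)=\frac{1}{32S^2}(e^{-2\pi}\pi\Delta)^{3(S-1)}$ yields exactly $\tilde g_1$, and into $g_2(S,\Delta)=e^{2\pi}640\sqrt{2}\,S^{5.5}(e^{-2\pi}\pi\Delta)^{-5(S-1)}$ yields $\tilde g_2$ up to the prefactor addressed next.

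Ingredient (iii) is where the two signals genuinely differ, and it explains the disappearance of the $e^{2\pi}$ prefactor. For $g_I$ one recovers $E_j$ from $z_j$ via $E_j=-\ln z_j$, so $|\tilde E_j - E_j|\approx |\tilde z_j - z_j|/|z_j|\le e^{2\pi}|\tilde z_j-z_j|$, producing the $e^{2\pi}$ in $g_2$; for $g_D$ one has instead the \emph{affine} relation $E_j=2\pi(1-z_j)$, so $|\tilde E_j - E_j| = 2\pi|\tilde z_j - z_j|$ exactly, a clean $O(1)$ factor which, after the $\frac{1}{2\pi}$ normalization in the matching distance Eq.~\eqref{eq:matching_error}, collapses to a harmless constant absorbed into $640\sqrt{2}$. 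This gives $\tilde g_2(S,\Delta)=640\sqrt{2}\,S^{5.5}\Delta^{-5(S-1)}$, and the sufficient condition Eq.~\eqref{eq:suff-cond} together with the error bound Eq.~\eqref{eq:upper-d} follow, with the optimal choice again $K=2S$ as argued for Theorem~\ref{thm:final}.

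I expect the main obstacle to be purely bookkeeping: ensuring that, in the multi-step argument of \ref{app:MPM}, there is no \emph{other}, hidden place where the interval $(e^{-2\pi},1]$ versus $[\frac{1}{2},1]$ or the factor $e^{-2\pi}\pi$ versus $1$ enters in a way that does not reduce to a benign $O(1)$ constant — in particular inside the Gautschi-type estimate, where both the separations and the products $\prod_{k\neq j}|z_j-z_k|^{-1}$ appear, and inside Lemma~\ref{lem:U_inv_bound}, whose bound on the pseudoinverse of the truncated left singular-vector block is the new ingredient and is itself driven by $\sigma_{\min}(V_{K/2})$. Once one verifies that each such factor is either identical to the $g_I$ computation or an absolute constant, the theorem follows by the stated substitution.
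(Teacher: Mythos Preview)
Your proposal is correct and follows essentially the same route as the paper's own proof: the paper likewise says the result is ``proved almost identically'' to Theorem~\ref{thm:final}, observing that $|z_i-z_j|\ge \Delta$ directly, that the affine map gives $\frac{1}{2\pi}|E_{\pi(i)}-\tilde E_i|=|z_{\pi(i)}-\tilde z_i|$ exactly, and then feeding these into Gautschi's bound and Theorem~\ref{thm:general-intermed}. The only cosmetic difference is that the paper does not invoke the lower node-modulus bound $z_i\ge\tfrac12$ at all (it is not needed, since both the $z$-gap and the $z$-to-$E$ conversion are exact here), whereas you mention it as an $O(1)$ placeholder; this has no effect on the argument.
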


Now to argue Theorem \ref{thm:MCD-total} from Theorem \ref{thm:final-D}, we simply choose the minimal $K=2S$, and since $S=O(1)$, it implies that the classical algorithm which estimates $g_D(k)$ for $k=0,\ldots, K(=O(1))$ within error $\epsilon$ using Lemma \ref{lem:SVT} requires ${\rm poly}(n)$ effort.

\section{Spectral estimation for a transverse-field Ising chain}
\label{simsection}
In this section, we numerically investigate the methods described thus far by applying them to an archetypal stoquastic Hamiltonian: The transverse field Ising chain. This system has been extensively studied \cite{Sachdev} and will serve as a proof-of-principle test. The system consists of qubits on a one-dimensional lattice, which interact via an Ising interaction and are exposed to an external magnetic field in the transverse direction. 
The Hamiltonian associated with this system is:
\begin{equation}
    H = -J\Big( \sum_{i}Z_{i}Z_{i+1} + g\sum_{i}X_{i} \Big),
    \label{eq:transverseisingHpaper}
\end{equation}
where $X,\:Y,\:Z$ denote the Pauli matrices, $J>0$ (for a ferromagnetic interaction) and $g\geq 0$, so that $H$ is term-wise stoquastic in the standard basis. We take the field to be pointing in the $x$-direction without loss of generality \footnote{The Hamiltonian can be transformed to $\tilde{H} = UHU^{\dagger}$ by the unitary transformation $U = \bigotimes_{i}\text{exp}\big( \frac{i\theta Z_{i}}{2} \big)$, which alters the direction of the field in the transverse plane while preserving the spectrum.}.

The system exhibits an abrupt change in the ground state of the system as a function of $g$ at $g=1$ (for $n\to \infty$). On either side of the phase transition, one has:
\begin{itemize}
\item \textbf{Strong-coupling limit} ($g\gg 1$): In this limit, the Hamiltonian is dominated by the magnetic field terms and the ground state is given by $\ket{\psi_0} \approx \ket{+}^{\otimes n}$. The $p$-particle excitations correspond to states $\ket{-}_{q_{1}}\ket{-}_{q_{2}}...\ket{-}_{q_{p}}\prod_{i\neq q_{1},q_{2},...,q_{p}}\ket{+}_{i}$, i.e., the ground state with spin flips at $p$ sites $q_{1},...,q_{p}$ along the chain. These $p$-particle excited states are $\binom{n}{p}$-fold degenerate.
\item \textbf{Weak-coupling limit} ($g\ll 1$): In this limit, the Hamiltonian is dominated by the Ising interaction terms and the (degenerate) ground state is given by either $\ket{\psi_0} \approx \ket{0}^{\otimes n}$ or $\ket{\psi_0} \approx \ket{1}^{\otimes n}$ (ferromagnetic phase). The excitations w.r.t. the ground state correspond to domain walls separating ferromagnetic regions of opposite spin.
\end{itemize}

To run the Monte Carlo scheme described in Lemma~\ref{prop:imtimesignalest}, the imaginary-time propagation operator $e^{-k H}$ must be decomposed (by means of Trotterization) in terms of the local propagation operators $e^{-a_{l}k/M\: H_{i}}$ (where $a_{l}$ and $M$ are set by the Trotterization scheme) \footnote{We note that the numerical results presented in this section are obtained using a first-order Trotter decomposition.}. 
The local propagation operators acting on a subset of two qubits on the chain are given by:
\begin{equation}
    e^{-\tilde{k} H_{i}} = 
    \begin{pmatrix}
    \frac{\sinh(\lambda \tilde{k})}{\sqrt{1+g^{2}}} + \cosh(\lambda \tilde{k}) & 0 & \frac{g\: \sinh(\lambda \tilde{k})}{\sqrt{1+g^{2}}} & 0 \\
    0 & \frac{-\sinh(\lambda \tilde{k})}{\sqrt{1+g^{2}}} + \cosh(\lambda \tilde{k}) & 0 & \frac{g\: \sinh(\lambda \tilde{k})}{\sqrt{1+g^{2}}} \\
    \frac{g\: \sinh(\lambda \tilde{k})}{\sqrt{1+g^{2}}} & 0 & \frac{-\sinh(\lambda \tilde{k})}{\sqrt{1+g^{2}}} + \cosh(\lambda \tilde{k}) & 0 \\
    0 & \frac{g\: \sinh(\lambda \tilde{k})}{\sqrt{1+g^{2}}} & 0 & \frac{\sinh(\lambda \tilde{k})}{\sqrt{1+g^{2}}} + \cosh(\lambda \tilde{k})
    \end{pmatrix},
\end{equation}
where $\lambda = J\sqrt{1+g^{2}}$ and $\tilde{k} = a_{l}k/M$. This operator is element-wise non-negative and can be efficiently brought to bock-diagonal form (with each block being irreducible). 

Since the choice of $\ket{\Phi}$ directly governs which eigenvalues can be obtained from the real-time and imaginary-time evolution signals, it is a point of particular importance. In addition, the ability of ESPRIT to extract eigenvalues from the imaginary-time and real-time signals depends very strongly on the spectral gap between the eigenvalues in the signal. We consider a state $\ket{\Phi}$ which has considerable overlap with the ground state and the ($n$-fold degenerate) first excited state in the ($g>1$)-regime. Since the gap between their associated eigenvalues increases monotonically as a function of $g$ in this regime, this allows us to present the aforementioned gap dependence numerically. We shall call the state $\ket{\Phi_{\text{optimal}}}$ since in the ($g\gg 1$)-regime it optimally overlaps with the eigenstates of interest, i.e. $|\langle +^{\otimes n}|{\psi_{p=0}}\rangle|^{2} = \sum_{q=1}^{n}|\langle +^{\otimes n}|{\psi_{p=1,q}}\rangle|^{2} = \frac{1}{2}$. This state is given by:
\begin{align}
\begin{split}
    \ket{\Phi_{\text{optimal}}} =&\: \frac{1}{\sqrt{2}}\Big( \underbrace{\prod_{i=1}^{n}\ket{+}_{i}}_{\ket{\psi_{p=0}}} + \sum_{q=1}^{n} \frac{1}{\sqrt{n}} \underbrace{\ket{-}_{q}\prod_{i\neq q}\ket{+}_{i}}_{\ket{\psi_{p=1,q}}} \Big) \\
    =&\: \frac{1}{2^{(n+1)/2}} \sum_{q=1}^{n}\Bigg(\bigg( \Big(\frac{1}{n}+\frac{1}{\sqrt{n}}\Big)\ket{0}_{q} + \Big(\frac{1}{n}-\frac{1}{\sqrt{n}}\Big)\ket{1}_{q} \bigg)\:{\sum}_{x\in \{0,1\}^{n-1}}\ket{x}\Bigg),
    \label{phioptpaper}
\end{split}
\end{align}
where $\sum_{x\in \{0,1\}^{n-1}} \ket{x}$ denotes an equal superposition of $(n-1)$-bit strings that exclude the bit in register $q$. 

We note that for $\ket{\Phi_{\text{optimal}}}$, one can efficiently obtain $\frac{\Phi(y)}{\Phi(x)}$ for a given $x,y\in\{0,1\}^{n}$ and one can efficiently sample from $\lvert \Phi(x) \rvert^{2}$: From Eq.~\eqref{phioptpaper}, one can infer a function $\Phi(x)$ ($\{0,1\}^{n} \to \mathbb{R}$) that (efficiently) gives the coefficient of the state $\ket{\Phi_{\text{optimal}}}$ associated with an $n$-bit string $x$:
    $\Phi(x) = 1/2^{(n+1)/2} \Big( \Big(\frac{1}{n}+\frac{1}{\sqrt{n}}\Big)\big(n-|x|\big) + \Big(\frac{1}{n}-\frac{1}{\sqrt{n}}\Big)|x| \Big)$,
so $\Phi(x)$ only depends on the Hamming weight $|x|$ of bit string $x$, i.e. the quantity $\frac{\Phi(y)}{\Phi(x)}$ can be efficiently determined. Furthermore, since $\Phi(x)$ only depends on $n$ and $|x|$, the distribution $\lvert\Phi(x)\rvert^{2}$ also depends solely on these quantities. This implies that one can indeed efficiently sample from this distribution: First, one draws a Hamming weight $|x|$ from the distribution $\lvert\Phi(x)\rvert^{2} = \lvert\Phi(|x|)\rvert^{2}$. Then, given $|x|$, one constructs at random an $n$-bit string with this Hamming weight. This latter step can be efficiently implemented by starting from some $n$-bit string with Hamming weight $|x|$ (such as $\{1\}^{|x|}\{0\}^{n-|x|}$) and then applying a random permutation.

\subsection{Numerical method and results}
We briefly discuss the details of the numerical analysis that is used to obtain the results presented in this section. We use the Monte Carlo and quantum algorithms (where the latter is inefficiently implemented on a classical computer), which are presented in Section \ref{sec:QPE} and summarized in Theorems \ref{thm:im} and \ref{thm:re}, to obtain resp. the imaginary-time and real-time evolution signals for the transverse-field Ising chain. We note that here we estimate the imaginary-time evolution signal using the empirical mean estimator, instead of the (asymptotically superior) median-of-means estimator. Having obtained these signals, we obtain estimates of the eigenvalues using the filtered ESPRIT method: This method corresponds to Algorithm \ref{alg:two} in combination with an additional filtering step. This additional step is required since in principle the number of components in the signal $S$ is not known a priori in the current setting. Therefore, we construct the matrix $\tilde{U}_{S}$ (in Algorithm \ref{alg:two}) by taking the first $S$ columns of $\tilde{U}$, where $S$ is now the number of singular values in the SVD of the Hankel matrix $H(y)$ that exceed ${\sf TF}\:\sigma_{\text{max}}$. ${\sf TF}$ denotes what we call a truncation factor, and $\sigma_{\text{max}}$ denotes the largest singular value of $H(y)$. In this way, the number of components in the signal emerges from the analysis of its Hankel matrix, rather than being a quantity that is known beforehand. By implementing the remainder of Algorithm \ref{alg:two} as usual, we obtain estimates of the $z_{j}$'s. From these estimates of the $z_{j}$'s, we obtain the \textit{spectral estimates} $\tilde{E}_{j}$ for the quantum algorithm and for the Monte Carlo algorithm.

Note that this approach of including a filtering step -- which often resembles more closely the practically encountered scenario when running the algorithms from Lemmas \ref{prop:retimesignalest} and \ref{prop:imtimesignalest} -- differs from that considered in Theorems \ref{thm:osc-esprit-gap} and \ref{thm:final}, where the number of components $S$ in signals is known beforehand. Here, $S$ is a quantity emerging in the analysis and it can even generally occur that components of the signal with very small coefficients -- corresponding to eigenstates with very small overlap with $\ket{\Phi}$ -- are filtered out.

In the results presented in this section, note that the real-time and imaginary-time increments have been chosen such that all $E_{j}$ that are present in the signals lie in $[0,2\pi)$. This does \textit{not} mean that the whole spectrum of the Hamiltonian lies in $[0,2\pi)$, as the majority of its eigenvalues will not be present in the signals.

We note that for the quantum algorithm, the parameters $\{z_{j}\}$ have unit norm. However, due to finite sampling, one determines a noisy version of the signal $g_R(k)$, resulting in estimated eigenvalues of the Trotterized unitary having norms that slightly deviate from unity. To ensure that the estimates $\tilde{E}_{j}$ are real-valued, we take them to be the real parts of $i \log(\tilde{z}_{j})$.

The code that is used to obtain the numerical results presented in this work can be found at \cite{GithubMaarten}.

\bigskip
In Figure \ref{excitedstateMCvQPEpaper}, the Monte Carlo signals $\bra{\Phi} e^{-k H}\ket{\Phi}$ and the real and imaginary parts of the quantum algorithm signals $\bra{\Phi}e^{-i k H}\ket{\Phi}$ for $\ket{\Phi} = \ket{+}^{\otimes n}$ and $\ket{\Phi_{\text{optimal}}}$ are depicted. 

The upper three figures correspond to $\ket{\Phi} = \ket{+}^{\otimes n}$. For this choice of $\ket{\Phi}$, the signals are clearly dominated by a single eigenvalue (the ground state eigenvalue): The Monte Carlo signal decays with a single decay rate and the quantum algorithm signals oscillate with a single frequency. For the quantum algorithm signals, there are also higher-frequency components visible (due to $\ket{+}^{\otimes n}$ not having overlap with \textit{only} the ground state).

For the lower three figures, we take $\ket{\Phi} = \ket{\Phi_{\text{optimal}}}$. For this choice of $\ket{\Phi}$, there are two eigenvalues present in the signals (the ground state and first excited state eigenvalues). For the Monte Carlo signal, the excited state eigenvalue can be seen to die out within a few units of time, after which only the ground state component is left. The quantum algorithm signals can be seen to be composed of a high-frequency (excited-state) component superposed on the ground-state component, where the excited-state component now obviously does \textit{not} die out.

\begin{figure}[t]
    \centering
    \includegraphics[width=0.9\linewidth]{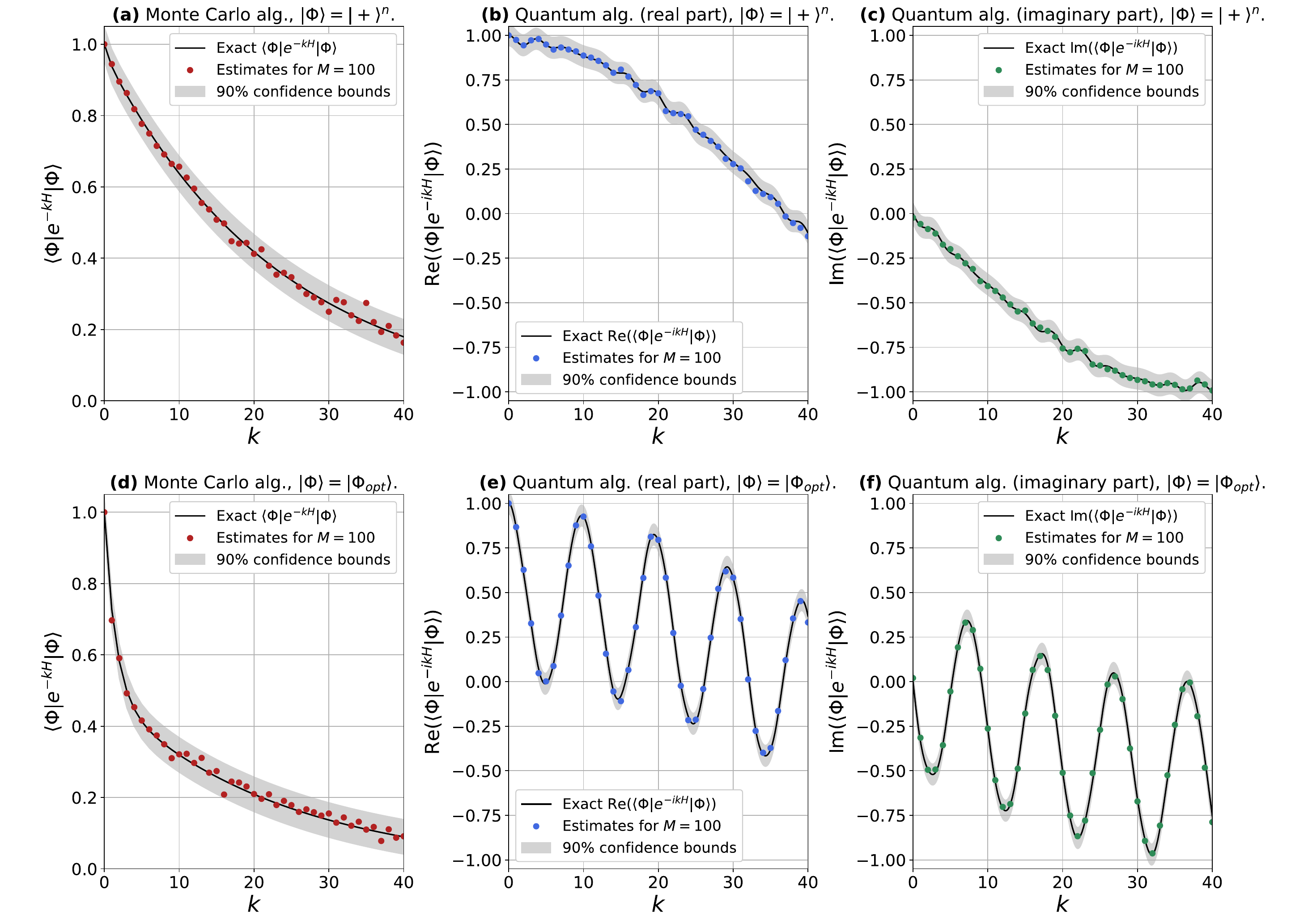}
    \caption{The evolution of the states $\ket{+}^{\otimes n}$ (in \textbf{(a)}, \textbf{(b)} and \textbf{(c)}, for which the ground state is the dominant component in the signal) and $\ket{\Phi_{\text{optimal}}}$ (in \textbf{(d)}, \textbf{(e)} and \textbf{(f)}, for which the ground state and first excited state are the dominant components in the signal) for $n=7$ and $g=4$ in imaginary time (in \textbf{(a)} and \textbf{(d)}) and in real time (in \textbf{(b)}, \textbf{(c)}, \textbf{(e)} and \textbf{(f)}). The signals in \textbf{(a)} and \textbf{(d)} are obtained through the Monte Carlo scheme of Theorem \ref{thm:im}. The signals in \textbf{(b)}, \textbf{(c)}, \textbf{(e)} and \textbf{(f)} are obtained through the quantum algorithm of Theorem \ref{thm:re} (which is inefficiently implemented on a classical computer). The Trotter variable is taken to be $M = 100$ and $|\Sigma|$ is set to be $4200$. }
    \label{excitedstateMCvQPEpaper}
\end{figure}

We now consider the spectral estimates that are obtained by applying ESPRIT to the evolution signals that are produced by the quantum algorithm (from Theorem \ref{thm:re}) and Monte Carlo algorithm (from Theorem \ref{thm:im}). In particular, we determine both time evolution signals at a given total number of measurement points in real/imaginary time. We then determine the spectral estimates from both signals for increasing $K$, by including step-by-step more of the total number of measurement points in the analysis \footnote{For $K=2$; $k = 0,1,2$. For $K=4$; $k = 0,1,2,3,4$. Etc.}. The truncation factor ${\sf TF}$ is taken to be equal to $0.02$ throughout.

The top two plots in Figure \ref{fig:spectralest} depict, for a given $|\Sigma|$, the eigenvalue estimates as a function of $g$ and for several values of $K$. For both the quantum algorithm and Monte Carlo algorithm estimates, it is clear that a smaller spectral gap indeed requires a larger $K$ for the eigenvalues to be obtained accurately. 
Furthermore, for a given $|\Sigma|$ and $K$, it is clear that the error of the estimate for the excited-state eigenvalue obtained from the imaginary-time signal is larger than that obtained from the real-time signal. We conclude furthermore that, in line with Theorems \ref{thm:osc-esprit-gap} and \ref{thm:final}, increasing $K$ beyond a certain threshold does not necessarily reduce the error of the eigenvalue estimates.

\begin{figure}[t]
    \centering
    \includegraphics[width=0.83\linewidth]{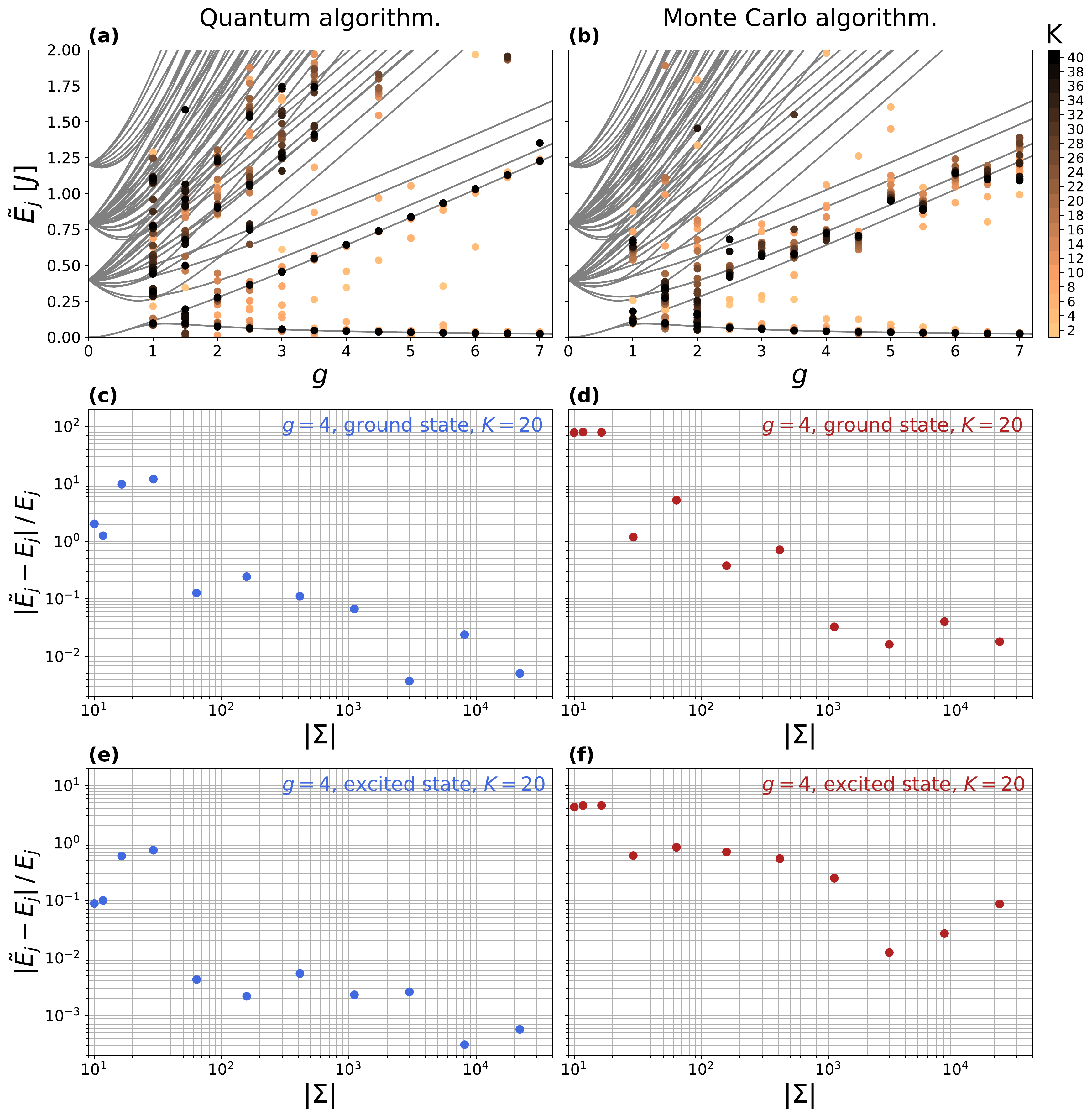}
    \caption{Spectral estimates of the ferromagnetic Ising chain in a transverse field (for $n=7$) obtained through analysis of the evolution of $\ket{\Phi_{\text{optimal}}}$. Plots \textbf{(a)} and \textbf{(b)} depict the spectral estimates (together with the true spectrum) obtained through the quantum algorithm and the Monte Carlo algorithm for $|\Sigma| = 4200$ and $M=100$ for several values of $K$. Plots \textbf{(c)},\textbf{(d)} and plots \textbf{(e)},\textbf{(f)} depict the relative error of the spectral estimates -- i.e. $|\tilde{E}_{j} - E_{j}|/E_{j}$ -- for the resp. ground state and excited state eigenvalues at $g=4$, for $M=400$ and as a function of $|\Sigma|$. The truncation factor is taken to be ${\sf TF} = 0.02$ throughout. The scaling of the error of the ground-state eigenvalue estimates is similar for both methods, while the error for excited-state eigenvalue is larger for the MC algorithm than for the quantum algorithm. The excited-state eigenvalue estimates also converge more quickly as a function of $K$ for the quantum algorithm. }
    \label{fig:spectralest}
\end{figure}

It is apparent that as one approaches the $g=1$ point, more higher-lying eigenvalues emerge from the ESPRIT analysis. This is especially true for the quantum algorithm (note that for the Monte Carlo signal, the larger the eigenvalues are, the quicker the associated components in the signal die out). The appearance of these higher-lying eigenvalues can be attributed to the fact that (for finite $n$) the state $\ket{\Phi_{\text{optimal}}}$ starts to have significant overlap with states other than the two lowest-energy eigenstates in this regime. 

The middle two and bottom two plots in Figure \ref{fig:spectralest} depict the relative error of the spectral estimates -- i.e. $|\tilde{E}_{j} - E_{j}|/E_{j}$ -- for resp. the ground-state eigenvalue and excited-state eigenvalue (at fixed $g=4$). We consider a range of values for $|\Sigma|$. For the ground-state eigenvalue, the scaling of the relative errors as a function of $|\Sigma|$ is similar for the quantum algorithm and the Monte Carlo algorithm. Clearly, the relative errors of the excited-state eigenvalue estimates for the quantum algorithm are smaller than those for the Monte Carlo algorithm.

We have also implemented the matrix pencil method in \cite{Sarkar3, francesco} to estimate the eigenvalues from the real-time and imaginary-time signals. The only significant difference that was found between the estimates obtained through the ESPRIT method and through this matrix pencil method is that -- in the ($K<2S$)-regime -- the matrix pencil method outputs estimates which resemble an average of the eigenvalues in the signal (as can be seen in Figure \ref{MPMtest} in a noiseless setting), while this is not the case generally for the ESPRIT method.


\section{Discussion}
\label{sec:con}

We have considered the problem of obtaining (some) eigenvalues of local stoquastic -- i.e. sign-problem-free -- Hamiltonians and general local Hamiltonians $H$ by means of tracking the evolution of the system state, differentiating between the evolution of the system state in real time and imaginary time. In both cases, we examine the use of the matrix pencil ESPRIT method in extracting eigenvalues of $H$ from the state evolution signal. The real-time (oscillating) evolution signal is obtained through running quantum circuits, while the imaginary-time (decaying) signal for local stoquastic Hamiltonians is obtained through a Monte Carlo scheme (developed in this work) that is implemented in a computationally tractable manner classically. Another type of decaying evolution signal -- from which the ESPRIT method can extract eigenvalues of $H$ -- is obtained through a classical method for general local Hamiltonians that is similar in spirit to `dequantization'.

We have invoked some known performance bounds of the ESPRIT method for the real-time signal and applied and extended bounds for the imaginary-time signal. Our bounds suggest that the ESPRIT method (or matrix pencil methods more generally) performs -- not surprisingly -- worse in extracting (multiple) eigenvalues from an imaginary-time decaying (MC algorithm) signal than from a real-time oscillating  (quantum algorithm) signal in the presence of noise. However, we show that if the input state contains $S=O(1)$ eigenstates and the spectral gap is at least $1/{\rm poly}(n)$, and the right access to the input state is available, the associated eigenvalues can be resolved efficiently (with ${\rm poly}(n)$ classical effort) for local stoquastic as well as for general local Hamiltonians. Even though for $S=O(1)$, the classical effort for stoquastic as well as general Hamiltonians is ${\rm poly}(n)$, the `brute-force' algorithm for general Hamiltonians (in Lemma \ref{lem:SVT}) incurs an exponential cost in $k$ in estimating the signal $g_D(k)$, while for stoquastic Hamiltonians the cost is polynomial in $k$. Despite this difference in cost, the error bounds for the eigenvalue estimates obtained here through analysis of the ESPRIT method applied to a decaying signal ($g_D(k)$ or $g_I(k)$) suggests that letting $k$ grow as some function of $n$ will generally not help.

Even though our results show that for these Hamiltonians, for an input state supported on $S=O(1)$ eigenvalues (separated by an at least $1/{\rm poly}(n)$ gap), these eigenvalues can be estimated with ${\rm poly}(n)$ classical effort, it remains to be better understood how {\em practical} this MC method for stoquastic Hamiltonians or the `dequantization' method in Lemma \ref{lem:SVT} are. The upper bounds for the errors on the eigenvalue estimates in Theorem \ref{thm:final} grow rather fast with $S$ (and the computational effort grows fast with $k$ in Lemma \ref{lem:SVT} for general local Hamiltonians), and it is not clear how much one can improve, say, the ESPRIT bounds.

Indeed, it would be interesting to show that the current bounds of ESPRIT for the imaginary-time decaying signal cannot be improved upon. There are definitely known negative results on the condition number of Vandermonde matrices \cite{pan:bad}, but there might be signal extraction algorithms that have better practical performance on decaying signals, or have looser requirements (such as the requirement that all data is evenly spaced). However, we suspect that the difficulty gap we observe between real-time and imaginary-time signal is universal. One possible way to argue this is through the Cramer-Rao bound (which has been analysed for real-time signals~\cite{stoica1989music} but not for imaginary-time signals), which is a question we leave for further research.

In terms of numerical results, we find that: For a given spectral gap and sample size, the ability to distinguish between two eigenvalues indeed depends on the number of measurement points $K$ at which the real-time and imaginary-time evolution signals are evaluated. The MC algorithm for stoquastic Hamiltonians and the quantum algorithm (in combination with the ESPRIT method) lead to a similar scaling of the relative error of the ground-state eigenvalue as a function of the sample size. However, for an excited-state eigenvalue, the quantum algorithm leads to significantly smaller relative errors than the MC algorithm. More extensive numerical studies, also of models other than the transverse-field Ising chain, may shed further light on whether the Monte Carlo + ESPRIT method is useful in practice. For frustrated stoquastic Hamiltonians, even the smallest eigenvalue may lead to a fast decaying signal, requiring small sampling error and Trotter error in practice.

As for other directions of further research, one can ask whether a hybrid approach in which imaginary-time data from an error-free Monte Carlo algorithm can strengthen the use of real-time data from a quantum algorithm obtained from a noisy quantum circuit. This approach requires combining the data where the poles/nodes $z_j=e^{-i E_j}$ on the unit circle each have a partner pole $z_j'=e^{- E_j}$ (or $z_j'=I-E_j/2\pi$) on the real axis. If the effect of noise can be modeled $z_j=e^{-i E_j} \rightarrow e^{i E_j-\gamma}$ \cite{TomBarbara}, then the imaginary-time data may help in extracting the values for $E_j$. It may also be of interest to consider the case of sampling $k$ for both the quantum circuit and Monte Carlo method at random (instead of picking $k=0,1,\ldots, K$). Another direction of further research is the following. Suppose the input state has overlap with S (here not necessarily $O(1)$) eigenstates of the Hamiltonian, one could asses how well the ESPRIT methods succeeds in extracting e.g. the ground-state eigenvalue by filtering out all other components in the real-time or imaginary-time evolution signals.

\section*{Acknowledgements}
This work is supported by QuTech NWO funding 2020-2024 – Part I “Fundamental Research”, project number 601.QT.001-1, financed by the Dutch Research Council (NWO). JH is supported by the Quantum Software Consortium (NWO Gravitation Grant, project number 024.003.037).
MS and BMT developed the MC method based on unpublished results of Sergey Bravyi, MS implemented the numerics on the transverse field Ising model, JH performed the analysis of the ESPRIT algorithm for decaying signals, BMT supervised the whole project and all authors contributed to the writing. We thank Ingo Roth for pointing out the use of median-of-means estimators for observables whose higher-order moments cannot be upper bounded by a constant. We thank Sergey Bravyi for pointing out \cite{dequant:GG}.

\appendix

\renewcommand{\thetheorem}{\Alph{section}.\arabic{theorem}}

\section{Trotterization}
\label{sec:trotter}

Suppose $H = \sum_{i=1}^{N} H_{i}$ (where $N = \mathcal{O}(\text{poly}(n))$) represents a $k$-local Hamiltonian of a quantum system. $\{H_{i}\}_{i=1}^{N}$ is generally a set of non-commuting terms but can be divided into subsets, such that within each subset all terms commute. For a given set $\{H_{i}\}_{i=1}^{N}$, we denote the minimum possible number of these subsets by $\Gamma$. This number of subsets is at most $N$ and equals $1$ in the trivial case where all $H_{i}$'s commute with each other. The Hamiltonian $H$ can thus be decomposed as $H = \sum_{\gamma = 1}^{\Gamma} H_{\gamma}$, where all $H_{\gamma}$ do not commute with each other, but the terms of which each individual $H_{\gamma}$ is composed do commute. Choosing a decomposition into the minimum number of subsets brings about an additional advantage of parallelizability when implementing the evolution of the systems in imaginary or real time.

The following Lemma (adaptation from \cite{childs+:trotter}) upper bounds the errors of implementing imaginary-time and real-time state evolution through a first-order Trotter decomposition.

\begin{lemma}
\textbf{First-Order Trotter Decomposition.} 
Given a $k$-local Hamiltonian $H = \sum_{i}^{N}H_{i}$. Furthermore, suppose the set $\{H_{i}\}_{i=1}^{N}$ can be divided into a minimum of $\Gamma$ subsets $\{H_{\gamma}\}_{\gamma = 1}^{\Gamma}$, such that within each individual subset all $H_{i}$'s commute. Then the quantities $\bigl\lvert \bra{\Phi}e^{-itH}\ket{\Phi} - \bra{\Phi}\big(\prod_{\gamma}e^{-it H_{\gamma}/M}\big)^{M} \ket{\Phi} \bigr\rvert$ and $\bigl\lvert \bra{\Phi}e^{-\tau H}\ket{\Phi} - \bra{\Phi}\big(\prod_{\gamma}e^{-\tau H_{\gamma}/M}\big)^{M} \ket{\Phi} \bigr\rvert$ (where $\ket{\Phi}$ is a normalized state and $t,\tau \in \mathbb{R}_{+}$) are bounded as follows:
\begin{subequations}
\begin{equation}
    \Bigl\lvert \bra{\Phi}e^{-itH}\ket{\Phi} - \bra{\Phi}\big(\prod_{\gamma}e^{-it H_{\gamma}/M}\big)^{M} \ket{\Phi} \Bigr\rvert \leq \sum_{\gamma'=1}^{\Gamma - 1}\sum_{\gamma > \gamma'}\norm{\:[H_{\gamma'},H_{\gamma}]\:}\:\frac{t^{2}}{2M},
\end{equation}
\begin{equation}
    \Bigl\lvert \bra{\Phi}e^{-\tau H}\ket{\Phi} - \bra{\Phi}\big(\prod_{\gamma}e^{-\tau H_{\gamma}/M}\big)^{M} \ket{\Phi} \Bigr\rvert \leq 3e^{2}\sum_{\gamma'=1}^{\Gamma - 1}\sum_{\gamma > \gamma'}\norm{\:[H_{\gamma'},H_{\gamma}]\:}\:\frac{\tau^{2}}{2M},
\end{equation}
\end{subequations}
where the second inequality holds provided that $\bigl\lvert\bigl\lvert e^{-\tau H/M} \bigr\rvert\bigr\rvert \leq 1$, $\bigl\lvert\bigl\lvert e^{-\tau H_{\gamma}/M} \bigr\rvert\bigr\rvert \leq 1$ ($\: \forall \gamma$) and $\frac{\tau\big( \sum_{\gamma} \norm{H_{\gamma}} \big)}{M} \leq 1$, and $M$ denotes the Trotter variable.
\label{lemma:Trotterpaper}
\end{lemma}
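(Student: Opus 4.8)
The plan is to prove both estimates by (i) bounding the error of a \emph{single} Trotter step in operator norm, (ii) telescoping this over the $M$ steps, and (iii) passing from the operator norm to the matrix element $\bra{\Phi}\cdot\ket{\Phi}$, which is for free since $|\bra{\Phi}(U-V)\ket{\Phi}|\le\norm{U-V}$ for normalized $\ket{\Phi}$. Throughout write $s=t/M$ in the real-time case and $s=\tau/M$ in the imaginary-time case, set $B_{\mathrm R}=e^{-isH}$, $C_{\mathrm R}=\prod_\gamma e^{-isH_\gamma}$, $B_{\mathrm I}=e^{-sH}$, $C_{\mathrm I}=\prod_\gamma e^{-sH_\gamma}$, and let $\Lambda:=\sum_\gamma\norm{H_\gamma}\ge\norm{H}$; below $\sum_{\gamma'<\gamma}$ denotes the sum over ordered pairs $\gamma'<\gamma$, i.e.\ exactly the double sum $\sum_{\gamma'=1}^{\Gamma-1}\sum_{\gamma>\gamma'}$ in the statement.

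For the telescoping step I would use $B^M-C^M=\sum_{j=0}^{M-1}B^{M-1-j}(B-C)C^{j}$, giving $\norm{B^M-C^M}\le M\bigl(\max_{0\le j\le M-1}\norm{B}^{M-1-j}\norm{C}^{j}\bigr)\norm{B-C}$. In the real-time case $B_{\mathrm R},C_{\mathrm R}$ are unitary so the bracket equals $1$. In the imaginary-time case the hypotheses $\norm{e^{-\tau H/M}}\le1$ and $\norm{e^{-\tau H_\gamma/M}}\le1$ make $B_{\mathrm I}$ a contraction and $C_{\mathrm I}$ a product of contractions, so the bracket is again at most $1$; hence $\norm{B^M-C^M}\le M\norm{B-C}$ in both regimes. (These same hypotheses also give $\norm{e^{-vH}}\le1$ and $\norm{e^{-vH_\gamma}}\le1$ for all $0\le v\le s$, since those norms are monotone nondecreasing in $v$; this is used in step (i).)

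For the single-step bound I would follow the exact Trotter-error representation of \cite{childs+:trotter}. Differentiating $C(s)=e^{-sH_\Gamma}\cdots e^{-sH_1}$ (real or imaginary time) gives $\frac{d}{ds}C(s)=-\bigl(\sum_\gamma\widehat{H}_\gamma(s)\bigr)C(s)$ with $\widehat{H}_\gamma(s)$ the conjugate of $H_\gamma$ by the later factors, while $\frac{d}{ds}B(s)=-HB(s)$; subtracting and solving the linear ODE for $E(s)=C(s)-B(s)$ with $E(0)=0$ yields $E(s)=-\int_0^s B(s-u)\,D(u)\,C(u)\,du$ with $D(u)=\sum_\gamma(\widehat{H}_\gamma(u)-H_\gamma)$, and then $\widehat{H}_\gamma(u)-H_\gamma=\int_0^u[\widehat{H}_\gamma(v),\widehat{K}_\gamma(v)]\,dv$ with $\widehat{K}_\gamma(v)$ a corresponding sum of conjugated terms $H_{\gamma'}$, $\gamma'>\gamma$. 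This re-expresses the whole error, to all orders in $s$, as an integral of single commutators $[H_{\gamma'},H_\gamma]$ sandwiched between sub-evolutions. In the real-time case every sub-evolution is unitary, so one gets $\norm{C_{\mathrm R}-B_{\mathrm R}}\le\frac{s^2}{2}\sum_{\gamma'<\gamma}\norm{[H_{\gamma'},H_\gamma]}$; with the telescoping and $s=t/M$ this is the first displayed inequality. In the imaginary-time case the same representation holds, but the conjugations and sub-evolutions are no longer norm-nonincreasing: each factor $\norm{e^{\pm vH_\gamma}}$ and each partial product is bounded by $e^{v\Lambda}\le e^{\tau\Lambda/M}$, which the hypothesis $\tau\Lambda/M\le1$ caps at $e$. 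Tracking that at most two such factors multiply each commutator, and that the remaining combinatorial constant is at most $3$, yields $\norm{C_{\mathrm I}-B_{\mathrm I}}\le 3e^{2}\,\frac{s^2}{2}\sum_{\gamma'<\gamma}\norm{[H_{\gamma'},H_\gamma]}$, and with the telescoping and $s=\tau/M$ this is the second displayed inequality. Finally, applying $|\bra{\Phi}(U-V)\ket{\Phi}|\le\norm{U-V}$ with $U=e^{-itH}$ (resp.\ $e^{-\tau H}$) and $V=\bigl(\prod_\gamma e^{-itH_\gamma/M}\bigr)^M$ (resp.\ $\bigl(\prod_\gamma e^{-\tau H_\gamma/M}\bigr)^M$) turns the operator-norm bounds into the stated matrix-element bounds.

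The main obstacle is step (i): obtaining \emph{commutator scaling} — an error proportional to $\sum_{\gamma'<\gamma}\norm{[H_{\gamma'},H_\gamma]}$ — rather than the much weaker $\Lambda^2$-type bound that a naive second-order Taylor remainder produces; keeping an explicit commutator in every remainder contribution to all orders is exactly what the exact error representation of \cite{childs+:trotter} is designed for. The imaginary-time adaptation adds the wrinkle that these commutators are sandwiched between operators whose norm can exceed $1$; the purpose of the hypothesis $\tau\Lambda/M\le1$ is precisely to keep the resulting exponential blow-up bounded by the constant $e^2$, while $\norm{e^{-\tau H/M}}\le1$ and $\norm{e^{-\tau H_\gamma/M}}\le1$ are what make the telescoping in step (ii) lossless.
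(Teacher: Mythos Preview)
The paper does not actually prove this lemma: it is stated in Appendix~A as an ``adaptation from \cite{childs+:trotter}'' and left without argument. Your proposal supplies precisely the argument the paper is pointing to: the exact first-order error representation of Childs et al.\ (variation-of-parameters on $E(s)=C(s)-B(s)$, yielding a double integral of conjugated commutators $[H_{\gamma'},H_\gamma]$), the telescoping $B^M-C^M=\sum_j B^{M-1-j}(B-C)C^j$ with contracting outer factors, and the trivial step $|\bra{\Phi}(U-V)\ket{\Phi}|\le\norm{U-V}$. You also correctly identify the role of each hypothesis: unitarity in the real-time case makes every sandwiching factor norm-$1$; in the imaginary-time case the contraction hypotheses $\norm{e^{-\tau H/M}}\le 1$ and $\norm{e^{-\tau H_\gamma/M}}\le 1$ keep the telescoping lossless, while $\tau\Lambda/M\le 1$ caps the growing factors $e^{+vH_\gamma}$ appearing in the conjugations by $e$, producing the constant $3e^2$. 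This is the intended proof.

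One small slip: your parenthetical says the norms $\norm{e^{-vH}}$, $\norm{e^{-vH_\gamma}}$ are ``monotone nondecreasing in $v$''. For Hermitian $A$ one has $\norm{e^{-vA}}=e^{-v\lambda_{\min}(A)}$, so the hypothesis $\norm{e^{-sA}}\le 1$ forces $\lambda_{\min}(A)\ge 0$, after which the norm is non\emph{increasing} in $v$ and in particular $\le 1$ on all of $[0,s]$. Your conclusion is correct; only the stated direction of monotonicity is backwards.
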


\begin{figure}[t]
    \centering
    \includegraphics[width=0.7\linewidth]{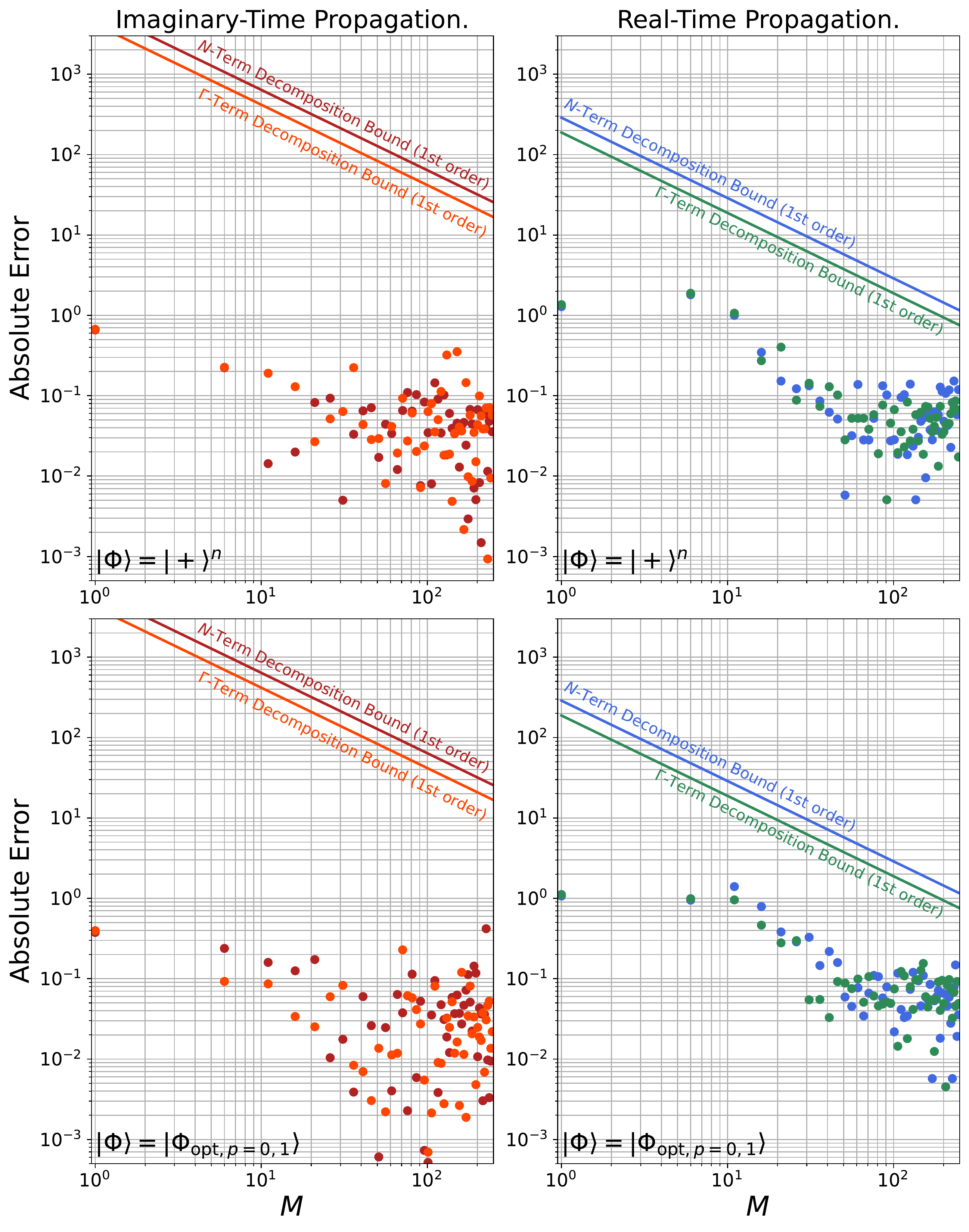}
    \caption{Absolute Trotter error (imposed on the signal estimate) as a function of the Trotter variable $M$ for the imaginary-time and real-time signals. The noisy ($|\Sigma| = 200$) and Trotterized versions of $\bra{\Phi}e^{-\tau H}\ket{\Phi}$ and $\bra{\Phi}e^{-it H}\ket{\Phi}$ for a ferromagnetic Ising chain in a transverse field (for $g=4$ and $n=8$) are evaluated at $\tau = t = 3$ and several values of $M$. The Trotterization schemes are first-order $N$-term and first-order $\Gamma$($=2$)-term schemes. The associated error bounds are included in matching colors.}
    \label{TROTTERCOMP}
\end{figure}

To obtain a better scaling of the errors as a function of the Trotter variable $M$, one can employ higher-order Trotter decompositions. We denote the $p$th-order approximants of $e^{-itH/M}$ and $e^{-\tau H/M}$ by $\mathcal{T}_{M}(p,t)$ and $\mathcal{T}_{M}(p,\tau)$, respectively. We denote $\bigl\lvert \bra{\Phi}e^{-itH}\ket{\Phi} - \bra{\Phi}\mathcal{T}_{M}(p,t)^{M}\ket{\Phi} \bigr\rvert$ and $\bigl\lvert \bra{\Phi}e^{-\tau H}\ket{\Phi} - \bra{\Phi}\mathcal{T}_{M}(p,\tau)^{M}\ket{\Phi} \bigr\rvert$ by $\epsilon_{\text{trot}}$. In \cite{childs+:trotter}, it was shown that, for general $p$, $\epsilon_{\text{trot}}$ is upper bounded as follows:
\begin{subequations}
\begin{align}
    \epsilon_{\text{trot}} \leq &\: \mathcal{O}\Big( \alpha\:t^{p+1}/M^{p} \Big),\quad \text{for real-time evolution},\\
    \epsilon_{\text{trot}} \leq &\: \mathcal{O}\Big( \alpha\:\tau^{p+1}/M^{p} \Big),\quad \text{for imaginary-time evolution},
    \label{eq:trotimag}
\end{align}
\label{highptrotbounds}
\end{subequations}
where $\alpha = \sum_{\gamma_{1},\gamma_{2},...,\gamma_{p+1}=1}^{\Gamma}\norm{\:[H_{\gamma_{p+1}},...,[H_{\gamma_{2}},H_{\gamma_{1}}]...]\:}$ ($\alpha^{1/p}$ is typically $\text{poly}(n)$) and Eq.~\eqref{eq:trotimag} holds provided that $4\tau \Upsilon \big( \sum_{\gamma} \norm{H_{\gamma}} \big)/M\leq 1$ (where $\Upsilon$ corresponds to the number of \textit{stages} of the Trotter decomposition and typically scales exponentially in $p$) \footnote{In the remainder of this discussion it is assumed that this condition is satisfied.}. In \cite{Suzuki}, a widely used scheme is discussed for constructing $p$th-order approximants.

It is important to consider the total number of $k$-local propagation operators $L$ required to simulate $e^{-itH}$ and $e^{-\tau H}$ (for a given order $p$ and Trotter variable $M$). For the scheme in \cite{Suzuki}, the number of these $k$-local propagation operators required to be implemented for the simulation of $e^{-itH}$ and $e^{-\tau H}$ for $p>1$ is $L=2MN\:5^{\frac{p}{2}-1}$ (and for $p=1$ is $MN$). If one wishes to obtain a given $\epsilon_{\text{trot}}$, the number of $k$-local propagation operators into which the evolutions are decomposed scales as $L=\text{poly}(n)\:\mathcal{O}\big(\Upsilon\:t^{1+1/p}\epsilon_{\text{trot}}^{-1/p}\big)$ (for real time) and $L=\text{poly}(n)\:\mathcal{O}\big(\Upsilon\:\tau^{1+1/p}\epsilon_{\text{trot}}^{-1/p}\big)$ (for imaginary time). We thus conclude that for large $p$ (i.e. high-order decompositions), $L$ scales approximately linearly in the evolution time of the system under consideration (for real-time and imaginary-time evolution).

In Figure \ref{TROTTERCOMP}, we have depicted the absolute error of noisy MC (imaginary-time) and QPE (real-time) signals at fixed $\tau = t$ as a function of $M$, obtained through first-order $N$-term and $\Gamma$($=2$)-term Trotterization schemes. We have included the first-order Trotter error bounds. We note that the apparent drastic increase in noise magnitude as a function of $M$ is primarily due to the fact that the absolute error decreases as a function of $M$ and is plotted on a logarithmic scale.


\section{Extension to non-Hermitian propagation operators}\label{AppC}
In this Appendix we prove the following Lemma, extending Lemma \ref{prop:imtimesignalest}:

\begin{lemma}
Let $\mathcal{F}  \equiv \bra{\Phi} G_{1}G_{2}\: ...\: G_{L} \ket{\Phi}$, 
where:
\begin{enumerate}
    \item $\ket{\Phi} = \sum_{x=1}^{2^{n}}\Phi(x)\ket{x}$ is a normalized state of $n$ qubits where $\Phi(x) \in \mathbb{C}$ ($\forall x$) and $\sum_{x}\bigl\lvert \Phi(x) \bigr\rvert^{2} = 1$. We assume that (1) $\frac{\Phi(y)}{\Phi(x)}$ can be efficiently (${\rm poly}(n)$) calculated for a {\em given} $x$ and $y$ and (2) we can efficiently draw samples from the probability distribution $P(x) = \bigl\lvert \Phi(x) \bigr\rvert^{2}$.
    \item Each $G_{l}$ is a $k$-local (possibly non-Hermitian) element-wise nonnegative matrix with singular values in $(0,1]$.
\end{enumerate}
$\mathcal{F}$ can be estimated within error $\epsilon$ with probability at least $1-\delta$ with a classical MC algorithm with runtime $\text{poly}(n)\times \Theta(\epsilon^{-2}\delta^{-1})\times \Theta(L)$. 
\label{prop:imtimesignalest-ext}
\end{lemma}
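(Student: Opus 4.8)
The plan is to follow the proof of Lemma~\ref{prop:imtimesignalest} almost line for line, the only structural change being that the Perron eigenvector, which for a Hermitian nonnegative block is simultaneously a left and a right eigenvector, must now be replaced by the strictly positive \emph{right} Perron eigenvector, which drives the sampling process, while the left Perron eigenvector (now distinct) enters only implicitly in the variance analysis. First I would record the analogue of Proposition~\ref{prop:PF}: since each $G_{l}$ is $k$-local it acts nontrivially only on $O(1)$ qubits, so it can be efficiently brought to its Frobenius normal form, a direct sum (after a basis permutation) of irreducible blocks $G_{l}^{b}$; by the Perron--Frobenius theorem each such block has spectral radius $\rho_{l}^{b}>0$, a simple eigenvalue, with a strictly positive right eigenvector $r_{l}^{b}$ (and a strictly positive left eigenvector $\ell_{l}^{b}\neq r_{l}^{b}$). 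Because $\sigma_{\max}(G_{l})\leq 1$ we have $\rho_{l}^{b}\leq\rho(G_{l})\leq\sigma_{\max}(G_{l})\leq 1$, so $\rho_{l}^{b}\in(0,1]$, and $G_{l}^{b}$, $\rho_{l}^{b}$ and $r_{l}^{b}$ are all computable in $O(1)$ time per block.

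Next I would set up the Monte Carlo estimator exactly as in Lemma~\ref{prop:imtimesignalest}, using the right eigenvectors: put $P_{l}(x_{l-1}\to x_{l})=\frac{1}{\rho_{l}^{b}}\bra{x_{l-1}}G_{l}\ket{x_{l}}\frac{r_{l}^{b}(x_{l})}{r_{l}^{b}(x_{l-1})}$, with $b$ the block of $x_{l-1},x_{l}$. Nonnegativity is immediate, and normalization $\sum_{x_{l}}P_{l}(x_{l-1}\to x_{l})=1$ follows from $G_{l}^{b}\ket{r_{l}^{b}}=\rho_{l}^{b}\ket{r_{l}^{b}}$ by the same computation as in Lemma~\ref{prop:imtimesignalest}. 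With $\Pi(\boldsymbol{x})=\lvert\Phi(x_{0})\rvert^{2}\prod_{l}P_{l}(x_{l-1}\to x_{l})$ and $\mathcal{R}(\boldsymbol{x})=\frac{\Phi(x_{L})}{\Phi(x_{0})}\prod_{l=1}^{L}\rho_{l}^{b(l)}\frac{r_{l}^{b(l)}(x_{l-1})}{r_{l}^{b(l)}(x_{l})}$ one gets $\mathcal{F}=\mathbb{E}_{\Pi}(\mathcal{R})$, and Algorithm~\ref{algorithm1paper} applies verbatim with $\phi\to r$; since each step involves only $O(1)$ qubits, producing one sample costs $\mathrm{poly}(n)\times\Theta(L)$ classical operations.

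The crux, and the step where Hermiticity was used, is the second-moment bound $\mathrm{Var}\big(\mathrm{Re}(\mathcal{R})\big)\leq\mathbb{E}(\lvert\mathcal{R}\rvert^{2})=\sum_{\boldsymbol{x}}\lvert\Phi(x_{L})\rvert^{2}\prod_{l}Q_{l}(x_{l-1},x_{l})$ with $Q_{l}(x,y)=\bra{x}G_{l}\ket{y}\,\rho_{l}^{b}\frac{r_{l}^{b}(x)}{r_{l}^{b}(y)}$ (cf.\ Eqs.~\eqref{eq:varboundR}--\eqref{eq:unitybound}). In Lemma~\ref{prop:imtimesignalest} one had the exact identity $\sum_{x}Q_{l}(x,y)=(\lambda_{l}^{b})^{2}\leq 1$ precisely because $\phi_{l}^{b}$ was simultaneously the left eigenvector; now $\sum_{x}Q_{l}(x,y)=\frac{\rho_{l}^{b}}{r_{l}^{b}(y)}\big(G_{l}^{b\,\top}r_{l}^{b}\big)_{y}$ and $G_{l}^{b\,\top}r_{l}^{b}\neq\rho_{l}^{b}r_{l}^{b}$, so individual "column sums'' of $Q_{l}$ can exceed $1$ and the naive $L$-fold telescoping no longer closes. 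I expect this to be the main obstacle. The route I would take is to write $\mathbb{E}(\lvert\mathcal{R}\rvert^{2})=\mathbf{1}^{\top}\hat{Q}_{1}\cdots\hat{Q}_{L}\,v$, where $v(x)=\lvert\Phi(x)\rvert^{2}$ and $\hat{Q}_{l}=\rho_{l}^{b}D_{l}G_{l}^{b}D_{l}^{-1}$ with $D_{l}=\mathrm{diag}(r_{l}^{b})$, and then exploit that $\hat{Q}_{l}$ is similar to $\rho_{l}^{b}G_{l}^{b}$ (hence has spectral radius $(\rho_{l}^{b})^{2}\leq 1$) together with $\sigma_{\max}(G_{l}^{b})\leq\sigma_{\max}(G_{l})\leq 1$ to bound this product by an $O(1)$ constant (one depending only on the locality $k$, not on $n$ or $L$); it should be enough to exhibit positive "potential'' vectors along the chain, e.g.\ built from the left Perron eigenvectors $\ell_{l}^{b}$, that dominate the partial sums. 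Given such a bound $\mathbb{E}(\lvert\mathcal{R}\rvert^{2})\leq C=O(1)$, and since the higher moments of $\mathrm{Re}(\mathcal{R})$ are not obviously constant-bounded, I would finish with Chebyshev's inequality applied to the empirical mean $\tilde{\mathcal{F}}=\frac{1}{\lvert\Sigma\rvert}\sum_{\boldsymbol{x}\in\Sigma}\mathrm{Re}(\mathcal{R}(\boldsymbol{x}))$, giving $\mathrm{Pr}\big(\lvert\tilde{\mathcal{F}}-\mathcal{F}\rvert\leq\epsilon\big)\geq 1-\frac{C}{\lvert\Sigma\rvert\epsilon^{2}}$ and hence $\lvert\Sigma\rvert=\Theta(\epsilon^{-2}\delta^{-1})$ (a median-of-means variant as in Lemma~\ref{prop:imtimesignalest} would sharpen the dependence to $\log(\delta^{-1})$). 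Collecting the per-sample cost, the total runtime is $\mathrm{poly}(n)\times\Theta(\epsilon^{-2}\delta^{-1})\times\Theta(L)$. One further technicality to handle: for non-symmetric $G_{l}$ the Frobenius normal form is block-triangular rather than block-diagonal, so the walk can leak from one communicating class to another; I would dispatch this either by checking that the $G_{l}$ actually decompose into a direct sum of irreducible blocks in the cases of interest, or by restricting the walk to the communicating class reached from $x_{0}$ and using the strictly positive Perron eigenvector there.
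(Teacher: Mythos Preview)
Your approach is genuinely different from the paper's, and the gap you yourself flag is real and not closed by what you wrote. The paper does \emph{not} attempt to redo the Perron--Frobenius analysis with separate left/right eigenvectors. Instead it introduces a single ancillary qubit and defines $(k{+}1)$-local \emph{Hermitian} operators
\[
F_{l}=G_{l}\otimes\ketbra{0}{1}+G_{l}^{\dagger}\otimes\ketbra{1}{0}\quad(\text{$l$ odd, and the swapped version for $l$ even}),
\]
checks that $F_{l}$ is elementwise nonnegative with eigenvalues equal to $\pm$ the singular values of $G_{l}$ (hence in $(0,1]$ after the obvious shift), and proves by induction the telescoping identity $G_{1}\cdots G_{L}=\bra{0}F_{1}\cdots F_{L}\ket{L\bmod 2}$. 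This reduces the non-Hermitian statement \emph{verbatim} to Lemma~\ref{prop:imtimesignalest}, so the variance bound $\mathbb{E}(|\mathcal{R}|^{2})\leq 1$ is inherited for free and there is nothing new to prove.

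By contrast, in your scheme the crucial step \eqref{eq:propQpaper} genuinely fails: with $Q_{l}(x,y)=\bra{x}G_{l}\ket{y}\,\rho_{l}^{b}\,r_{l}^{b}(x)/r_{l}^{b}(y)$ one has $\sum_{x}Q_{l}(x,y)=\rho_{l}^{b}\,(G_{l}^{\top}r_{l}^{b})(y)/r_{l}^{b}(y)$, which is not $\leq 1$ unless $r_{l}^{b}$ is also a left eigenvector. Your proposed rescue---bounding $\mathbf{1}^{\top}\hat{Q}_{1}\cdots\hat{Q}_{L}\,v$ using that $\hat{Q}_{l}=\rho_{l}^{b}D_{l}G_{l}D_{l}^{-1}$ has spectral radius $(\rho_{l}^{b})^{2}\leq 1$ and $\sigma_{\max}(G_{l})\leq 1$---does not go through: spectral radius $\leq 1$ for each factor does not control a product of \emph{different} non-normal matrices, and the similarity by $D_{l}$ can inflate singular values by the condition number of $D_{l}$, which is governed by ratios of Perron-eigenvector entries that the hypotheses (nonnegativity, $\sigma_{\max}\leq 1$) do not bound. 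The ``potential vectors built from the $\ell_{l}^{b}$'' idea would need those potentials to be compatible across different $l$'s, which again is not guaranteed. The block-triangular Frobenius form issue you raise at the end is a further complication on top of this. The ancilla-symmetrization trick avoids all of these difficulties at once; I would replace the direct attack with that reduction.
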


\smallbreak
\begin{proof}
In addition to the $n$-qubit register, we exploit a single ancillary qubit. The matrices $G_{l}$ are still element-wise non-negative. The state $\ket{a}$ denotes the state of the single ancillary qubit. By making use of the single ancillary qubit, the propagation operators can be symmetrized as follows: 
\begin{equation}
    F_{l} \equiv \begin{cases}
               G_{l}\otimes \ketbra{0}{1} + G_{l}^{\dagger}\otimes \ketbra{1}{0}\:,\quad \text{if $l$ is odd}\\
               G_{l}\otimes \ketbra{1}{0} + G_{l}^{\dagger}\otimes \ketbra{0}{1}\:,\quad \text{if $l$ is even.}
            \end{cases}
    \label{eq:definitionFpaper}
\end{equation}
In this form, $F_{l}$ (the `new' propagation operator) is element-wise non-negative, $k+1$-local and Hermitian and hence one can apply Lemma \ref{prop:imtimesignalest} to $\bra{\Phi} F_1 F_2 \ldots F_L \ket{\Phi}$, provided that its eigenvalues lie in $(0,1]$. The eigenvalues $\lambda$ of $F_{l}$ (for $l$ odd) can be found by solving:
\begin{equation}
    \det \begin{pmatrix}
    -\lambda \mathbb{1} & G_{l} \\
    G_{l}^{\dagger} & -\lambda \mathbb{1}
    \end{pmatrix} = \det\Big(\lambda^{2}\mathbb{1}- G_{l}G_{l}^{\dagger}\Big) = \det\Big( G_{l}G_{l}^{\dagger} - \lambda^{2}\mathbb{1}\Big) = 0,
\end{equation}
where we have used that $G_{l}^{(\dagger)}$ commutes with $\mathbb{1}$ and that $G_{l}^{(\dagger)}$ is of even dimensionality. The eigenvalues of the Hermitian and positive semi-definite matrix $G_{l}G_{l}^{\dagger}$ are thus equal to $\lambda^{2}$. Since the singular values of $G_{l}$ are equal to the square root of the eigenvalues of $G_{l}G_{l}^{\dagger}$, the eigenvalues of $F_{l}$ will lie in $(0,1]$ if the singular values of $G_{l}$ lie in $(0,1]$. This can be similarly shown for $l$ even and this statement thus holds for all $l$.

What is left to prove is that estimating the signal for the string of $F_{l}$'s is equivalent to estimating the signal for the string of $G_{l}$'s. Specifically, we want to prove the following identity: $G_{1}G_{2}...G_{L} = \bra{0}F_{1}F_{2}...F_{L} \ket{L\:\text{mod}\:2}$, for $L\in \mathbb{Z}_{+}$. This is done below by means of induction.
\begin{itemize}
    \item For $L=1$: 
    \begin{equation}
    \begin{split}
    \bra{0}F_{1} \ket{1} &=  \bra{0}\Big( G_{1}\otimes \ketbra{0}{1} + G_{1}^{\dagger}\otimes \ketbra{1}{0} \Big)\ket{1} \\
    &= G_{1} \braket{0}{0}\braket{1}{1}+G_{1}^{\dagger} \braket{0}{1}\braket{0}{1} \\
    &= G_{1},
    \end{split}
    \end{equation}
    \item Assuming  $G_{1}G_{2}...G_{L} =  \bra{0}F_{1}F_{2}...F_{L} \ket{L\:\text{mod}\:2}$ holds for $L$, it holds for $L+1$ as well:
    \noindent
    Making use of the definition in Eq.~\eqref{eq:definitionFpaper}, we write $F_{L+1}$ as follows:
    \begin{equation}
        F_{L+1} = G_{L+1}\otimes \ketbra{L\:\text{mod}\:2}{L+1\:\text{mod}\:2} + G_{L+1}^{\dagger}\otimes \ketbra{L+1\:\text{mod}\:2}{L\:\text{mod}\:2}.
    \end{equation}
    The quantity of interest -- in the case of the length of the operator string being $L+1$ -- can now be rewritten as follows: 
    \begin{equation}
            \bra{0}F_{1}F_{2}...F_{L}F_{L+1}\ket{L+1\:\text{mod}\:2} =  \bra{0}F_{1}F_{2}...F_{L} \ket{L\:\text{mod}\:2}G_{L+1} =
             G_{1}G_{2}...G_{L}G_{L+1},
    \end{equation}
    which finishes the proof.
\end{itemize}
\end{proof}

\section{Median-of-means estimator}
\label{app:MOM}
The MC scheme described in Section \ref{sec:QPE} produces a set of $|\Sigma|$ samples $\{\boldsymbol{x}\}$ which are distributed according to $\Pi(\boldsymbol{x})$. For each sample, $\text{Re}(\mathcal{R}(\boldsymbol{x}))$ can be evaluated and subsequently an estimate of $\mathcal{F}$ can be obtained. Only the first and second moments of the random variable $\text{Re}(\mathcal{R}(\boldsymbol{x}))$ can be upper bounded in general. Therefore, if one would use the empirical mean $\text{Re}(\tilde{\mathcal{F}}) = \frac{1}{|\Sigma|}\sum_{\boldsymbol{x}\in\Sigma}\text{Re}(\mathcal{R}(\boldsymbol{x}))$ as a mean estimator for $\mathcal{F}$, then the best achievable scaling of $|\Sigma|$ such that
\begin{equation}
    \text{Pr}\Big( \bigl\lvert \text{Re}(\tilde{\mathcal{F}}) - \mathcal{F} \bigr\rvert \leq \epsilon \Big) \geq 1-\delta,
\end{equation}
is $|\Sigma| = \Theta(\epsilon^{-2}\delta^{-1})$ (by means of Chebyshev's inequality).

Taking the median-of-means estimator \cite{MOM} as estimator (instead of the empirical mean), one can obtain a more convenient scaling of $|\Sigma|$ w.r.t. $\delta$ (despite the fact that only the first two moments of $\text{Re}(\mathcal{R}(\boldsymbol{x}))$ can be upper bounded). The median-of-means estimator can be constructed as follows: Partition the set of MC samples $\Sigma$ into $q$ groups $s_1,\ldots,s_q$ of size approximately $|\Sigma|/q$. One then computes the empirical mean of $\text{Re}(\mathcal{R}(\boldsymbol{x}))$ over the samples in each group separately (giving $q$ unbiased estimators of $\mathcal{F}$) and takes the median of these empirical means. We denote the empirical mean for each group by $f_{j} = \frac{1}{|s_{j}|}\sum_{\boldsymbol{x}\in s_{j}}\text{Re}(\mathcal{R}(\boldsymbol{x}))$ (for $j \in \{1,\ldots,q\}$) and denote the median of these empirical means by $\hat{\mathcal{F}} = \mathrm{M}(f_{1},\ldots,f_{q})$. The estimator $\hat{\mathcal{F}}$ is the median-of-means estimator.

We define the median of $q$ real numbers $a_{1},\ldots,a_{q}$ as $\mathrm{M}(a_{1},\ldots,a_{q}) = a_i$ with $a_i$ such that
\begin{equation}
    |\{j:a_{j}\leq a_i\}|\geq q/2 \quad \wedge \quad |\{j:a_{j}\geq a_i\}|\geq q/2,
\end{equation}
where we take the smallest $i$ if multiple $i$s obey this condition.

$\{\text{Re}(\mathcal{R}(\boldsymbol{x}))\}$ are i.i.d. random variables with mean $\mathcal{F}$ and variance $\text{Var}\big(\text{Re}(\mathcal{R}(\boldsymbol{x}))\big)\leq 1$. Let $q$ and $|\Sigma|/q$ be positive integers, then
\begin{equation}
    \text{Pr}\Big( \bigl\lvert \hat{\mathcal{F}} - \mathcal{F} \bigr\rvert \leq \sqrt{4q/|\Sigma|} \Big) \geq 1-e^{-q/8}.
    \label{eq:dev}
\end{equation}
So for $q=8\:\text{log}(\delta^{-1})$ and $|\Sigma| = 4\:q\epsilon^{-2} = 32\:\text{log}(\delta^{-1})\epsilon^{-2}$, we have:
\begin{equation}
    \text{Pr}\Big( \bigl\lvert \hat{\mathcal{F}} - \mathcal{F} \bigr\rvert \leq \epsilon \Big) \geq 1-\delta.
\label{confidencebound}
\end{equation}
Note that the estimator $\hat{\mathcal{F}} = \mathrm{M}(f_{1},\ldots,f_{q})$ depends explicitly on the confidence since $q$ scales with $\delta$. Given that indeed $q = \Theta\big(\text{log}(\delta^{-1})\big)$, the number of samples required to obtain Eq.~\eqref{confidencebound} is $|\Sigma|=\Theta(\text{log}(\delta^{-1})\epsilon^{-2})$ (which is an exponentially better scaling w.r.t. $\delta$ compared to that for the empirical mean estimator).

To see why Eq.~\eqref{eq:dev} is true
, see \cite{MOM}, note that one can apply Chebyshev's inequality to each of the empirical means $f_{j}$: with probability at least $3/4$, we have $\bigl\lvert f_{j} - \mathcal{F} \bigr\rvert \leq \sqrt{4q/|\Sigma|}$. If $\bigl\lvert \hat{\mathcal{F}}-\mathcal{F} \bigr\rvert \geq \sqrt{4q/|\Sigma|}$, then, by definition of $\hat{\mathcal{F}}$, at least $q/2$ of the empirical means $f_{j}$ satisfy $\bigl\lvert f_{j} - \mathcal{F} \bigr\rvert \geq \sqrt{4q/|\Sigma|}$. Hence the probability that $\bigl\lvert \hat{\mathcal{F}}-\mathcal{F} \bigr\rvert \geq \sqrt{4q/|\Sigma|}$ is upper bounded by the probability that a binomially distributed random variable with $q$ draws and success probability $1/4$ exceeds $q/2$:
\begin{equation}
    \text{Pr}\Big( \bigl\lvert \hat{\mathcal{F}}-\mathcal{F} \bigr\rvert \geq \sqrt{4q/|\Sigma|} \Big) \leq \text{Pr}\Big( \text{Bin}(q,1/4)\geq q/2 \Big) = \text{Pr}\Big( \text{Bin}(q,1/4) - \mathbb{E}\big( \text{Bin}(q,1/4) \big) \geq q/4 \Big) \\ \leq e^{-q/8},
\end{equation}
where we have used $\mathbb{E}\big( \text{Bin}(q,1/4) \big) = q/4$ and Hoeffding's inequality.

\section{Performance of ESPRIT on the imaginary-time (decaying) signal}
\label{app:MPM}

In this section we prove a series of Lemmas that characterize the behaviour of the ESPRIT algorithm (Algorithm \ref{alg:two}) on an imaginary-time signal obtained with finite error. They are direct generalisations of the work done in \cite{li2019superresolution}, which leads up to Theorem \ref{thm:osc-esprit-gap} for oscillatory signals, to signals composed of real exponential decays. We will see that the guarantees on the algorithm will be substantially weaker in this case. The end goal of this section is Theorem \ref{thm:final} in the main text. 

The argument decomposes roughly into two halves. In the first half we argue that the behaviour of ESPRIT is controlled by the smallest non-zero singular value of the Vandermonde matrix $V_{L}$. In the second half we argue that that this smallest nonzero singular value can be controlled in terms of a gap condition on the energy eigenvalues of the imaginary-time signal.\\

We start by proving a short result on the smallest nonzero singular values of products of matrices. 

\begin{lemma}\label{lem:small_sing}
Let the smallest \emph{nonzero} singular value of a matrix $X$ be $\sigma_{\rm min}(X)$. For any matrix, $X$ we have $\sigma_{\rm min}(X):= \norm{X^+}^{-1}$, where $X^+$ is the Moore-Penrose pseudo-inverse of $X$, i.e. through the SVD, we have $\sigma_{\rm min}^{-1}(X) = \norm{X^+}$, where $\norm{X}$ is the operator norm (the largest singular value). Let $A,B$ be (non-square) matrices such that $(AB)^+ = B^+A^+$. Then we have that
\begin{equation}
    \sigma_{\rm min}(AB) \geq \sigma_{\rm min}(A)\sigma_{\rm min}(B).
\end{equation}
\end{lemma}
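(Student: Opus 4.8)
The statement is essentially immediate from the characterization $\sigma_{\rm min}(X)=\norm{X^+}^{-1}$ recalled in the statement together with submultiplicativity of the operator norm. Concretely, the plan is: first, record that since $A$, $B$ (and hence $AB$, as soon as it is nonzero) have well-defined smallest nonzero singular values, the quantities $\norm{A^+}$, $\norm{B^+}$ and $\norm{(AB)^+}$ are all finite and nonzero. Second, invoke the hypothesis $(AB)^+ = B^+A^+$ to write
\begin{equation}
    \sigma_{\rm min}(AB)^{-1} = \norm{(AB)^+} = \norm{B^+A^+} \leq \norm{B^+}\,\norm{A^+} = \sigma_{\rm min}(B)^{-1}\,\sigma_{\rm min}(A)^{-1},
\end{equation}
where the inequality is just submultiplicativity of the operator (spectral) norm. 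Third, take reciprocals of both sides (all quantities being positive) to conclude $\sigma_{\rm min}(AB) \geq \sigma_{\rm min}(A)\,\sigma_{\rm min}(B)$.

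There is essentially no obstacle here; the only thing worth a sentence of care is the degenerate case. If $AB=0$ then $(AB)^+=0$ and $\sigma_{\rm min}(AB)$ is conventionally $0$ (or undefined); one should simply note that the hypothesis $(AB)^+=B^+A^+$ is assumed precisely in the regime where it is meaningful — it holds, for instance, whenever $A$ has full column rank and $B$ has full row rank, which is exactly the situation in which this lemma will later be applied to Vandermonde-type factorizations, and in that case $AB$ is nonzero and $\sigma_{\rm min}(AB)>0$. So the proof is the three-line chain of (in)equalities above, prefaced by the remark that all pseudo-inverse norms appearing are finite and positive under the stated hypothesis.

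I would present it in exactly that order: (i) recall $\sigma_{\rm min}(X)=\norm{X^+}^{-1}$ and note finiteness/positivity, (ii) the displayed chain using the hypothesis plus submultiplicativity, (iii) invert. No routine computation is needed beyond this.
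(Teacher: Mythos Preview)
Your proposal is correct and matches the paper's proof essentially line for line: the paper also writes the single chain $\sigma_{\rm min}(AB) = \norm{(AB)^+}^{-1} = \norm{B^+A^+}^{-1} \geq (\norm{B^+}\norm{A^+})^{-1} = \sigma_{\rm min}(A)\sigma_{\rm min}(B)$ via submultiplicativity of the operator norm. Your extra remark on the degenerate $AB=0$ case is not in the paper but is harmless.
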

\begin{proof}
By sub-multiplicativity of the operator norm, we have that
\begin{equation}
    \sigma_{\rm min}(AB) = \big(\norm{(AB)^+}\big)^{-1} = \big(\norm{B^+A^+}\big)^{-1}\geq  \big(\norm{B^+}\norm{A^+}\big)^{-1} = \sigma_{\rm min}(B)\sigma_{\rm min}(A).
\end{equation}
\end{proof}
We note that the product property on the Moore-Penrose pseudo-inverse does not hold for all matrices (unlike for the regular inverse). We will make use of the following sufficient condition:
\begin{lemma}[\cite{greville1966note}]
Let $A,B$ be matrices and let $A$ have full column rank, and $B$ have full row rank. Then we have $(AB)^+ = B^+A^+$.
\end{lemma}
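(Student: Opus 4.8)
The plan is to verify directly that $B^{+}A^{+}$ satisfies the four defining (Moore--Penrose) conditions for the pseudo-inverse of $AB$, and then invoke uniqueness of the pseudo-inverse. First I would fix notation: say $A$ is $m\times k$ with $\operatorname{rank}(A)=k$ (full column rank) and $B$ is $k\times n$ with $\operatorname{rank}(B)=k$ (full row rank), so that $AB$ is $m\times n$ and $B^{+}A^{+}$ is $n\times m$, the correct shape for a candidate pseudo-inverse. The key elementary facts I would record up front are that for a full-column-rank matrix one has $A^{+}A = I_{k}$ (equivalently $A^{+} = (A^{*}A)^{-1}A^{*}$), and for a full-row-rank matrix one has $BB^{+} = I_{k}$ (equivalently $B^{+} = B^{*}(BB^{*})^{-1}$); both follow immediately from the (thin) SVD.

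With those in hand the verification is a short computation. For the first Moore--Penrose condition, $(AB)(B^{+}A^{+})(AB) = A(BB^{+})(A^{+}A)B = A\,I_{k}\,I_{k}\,B = AB$; for the second, $(B^{+}A^{+})(AB)(B^{+}A^{+}) = B^{+}(A^{+}A)(BB^{+})A^{+} = B^{+}A^{+}$. For the two Hermiticity conditions, I would compute $(AB)(B^{+}A^{+}) = A(BB^{+})A^{+} = AA^{+}$, which is Hermitian because $A^{+}$ is the genuine pseudo-inverse of $A$, and $(B^{+}A^{+})(AB) = B^{+}(A^{+}A)B = B^{+}B$, Hermitian for the same reason. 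Hence $B^{+}A^{+}$ satisfies all four conditions, and by uniqueness of the Moore--Penrose inverse, $(AB)^{+} = B^{+}A^{+}$.

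There is essentially no hard step here — the entire content is the collapse $BB^{+} = A^{+}A = I_{k}$, which is exactly where the full-rank hypotheses enter and are needed (without them these products are merely orthogonal projectors, not identities, and the middle factors fail to cancel). The one point that warrants care is that the inner dimension must coincide with the common rank $k$, so that $I_{k}$ is literally the identity of the right size in both cancellations; this is automatic from the hypotheses, since full column rank of $A$ forces $\operatorname{rank}(A)=k$ and full row rank of $B$ forces $\operatorname{rank}(B)=k$. I would also remark that some care is needed not to claim the reverse-order law in general, since it can fail when the rank hypotheses are dropped.
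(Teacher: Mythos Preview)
Your proof is correct and is the standard direct verification via the four Moore--Penrose axioms. Note, however, that the paper does not actually prove this lemma: it simply quotes it with a citation to Greville (1966) and uses it as a black box, so there is no ``paper's own proof'' to compare against. Your argument supplies exactly what the paper omits, and is the natural self-contained justification.
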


Next, we argue that a small perturbation in the imaginary-time signal does not impact the space spanned by the the first $S$ left singular vectors of the Hankel matrix $H(g)$ too strongly, see the ESPRIT Algorithm \ref{alg:two}.  It is a compressed version of Lemmas $4$ and $5$ in \cite{li2019superresolution} (which are formulated for real-time signals only, but hold more generally).
To state this Lemma we need to consider a freedom of choice in $U_S$ and $\tilde{U}_S$ with $\tilde{U}_S$ as defined in the ESPRIT Algorithm \ref{alg:two} and $U_S$ its noise-free version. It is possible that $U_S$ and $\tilde{U}_S$ are far apart as operators, even if the spaces they span are close together. 

We solve this by not considering $U_S$ proper, but rather a rotated version of $U_S$. As we will see, this rotation will not impact the actual output of ESPRIT which are the eigenvalues of the signal matrix $\tilde{\Psi}$. The rotated version of $U_S$ is given through the $S\times S$ unitary operator $(O_2O_1)^\dagger$, which is defined via the singular value decomposition of $U_S^\dagger \tilde{U}_S$, i.e.  
\begin{align}
    O_1 U_S^\dagger \tilde{U}_S O_2 = D,
    \label{eq:defD}
\end{align}
with $I_S\geq D\geq 0$ and $D$ diagonal. The  diagonal elements of the matrix $D$ are cosines of the so-called \emph{canonical angles}.  We note that this internal rotation is performed implicitly in \cite{li2019superresolution}, whereas we have chosen to make it explicit at all times. 
\begin{lemma}\label{lem:U_pert}
Let $(g+\eta)(k)$ be an imaginary-time signal with $g(k) = \sum_{i=1}^{S}c_iz_i^k$ and $\eta(k)$ a small noise vector.  Consider the associated Hankel matrices $H(g)$ and $H(g+\eta)$, with singular value decompositions $H(g) = U\Sigma W$ and $H(g+\eta) = \tilde{U}\tilde{\Sigma} \tilde{W}$, and label the matrix of the first $S$ columns of $U$ (resp. $\tilde{U}$) as $U_S$ (resp. $\tilde{U}_S$). Finally, let $O_1 U_S^\dagger \tilde{U}_S O_2 = D$ with $I_S\geq D\geq 0$ be the singular value decomposition of $U_S^\dagger \tilde{U}_S$. 
If 
\begin{equation}
    \norm{H(\eta)}\leq \sigma_{\rm min}(H(g))/2,
\end{equation}
then
\begin{equation}
    \norm{U_S(O_2 O_1)^\dagger-\tilde{U}_S}\leq \frac{2\sqrt{2S}\norm{H(\eta)}}{\sigma_{\rm min}(H(g))}.
\end{equation}
\end{lemma}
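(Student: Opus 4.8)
The plan is to reduce the statement to a standard singular-subspace perturbation estimate (the Wedin $\sin\Theta$ theorem, the singular-value analogue of Davis--Kahan) together with an elementary identity relating the canonical angles between the column spaces of $U_S$ and $\tilde U_S$ to the norm of the optimally aligned difference $U_S(O_2O_1)^\dagger-\tilde U_S$.

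First I would record the structural fact that makes the spectral gap appearing in Wedin's theorem large. By the Vandermonde decomposition $H(g)=V_LCV_{K-L}^T$ of Eq.~\eqref{eq:Vandermondedecomp}, and since the $z_i$ are distinct and the $c_i$ positive, $H(g)$ has rank exactly $S$; hence $\sigma_{S+1}(H(g))=0$ and $\sigma_S(H(g))=\sigma_{\rm min}(H(g))$. Applying Weyl's inequality to $H(g+\eta)=H(g)+H(\eta)$ gives $\sigma_{S+1}(H(g+\eta))\le\sigma_{S+1}(H(g))+\norm{H(\eta)}=\norm{H(\eta)}$, so the relevant gap $\delta:=\sigma_S(H(g))-\sigma_{S+1}(H(g+\eta))$ satisfies $\delta\ge\sigma_{\rm min}(H(g))-\norm{H(\eta)}\ge\tfrac12\sigma_{\rm min}(H(g))>0$, using the hypothesis $\norm{H(\eta)}\le\sigma_{\rm min}(H(g))/2$. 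Wedin's theorem for the leading-$S$ left singular subspaces then yields, for the canonical angles $\Theta=\mathrm{diag}(\theta_1,\dots,\theta_S)$ between $\operatorname{col}(U_S)$ and $\operatorname{col}(\tilde U_S)$ (equivalently, $D=\cos\Theta$ in the notation $O_1U_S^\dagger\tilde U_SO_2=D$ of Eq.~\eqref{eq:defD}), the bound $\norm{\sin\Theta}\le\norm{H(\eta)}/\delta\le 2\norm{H(\eta)}/\sigma_{\rm min}(H(g))$.

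Second, I would translate this subspace bound into the claimed norm bound. Since $O_2$ is unitary, $\norm{U_S(O_2O_1)^\dagger-\tilde U_S}=\norm{U_SO_1^\dagger-\tilde U_SO_2}=\norm{P-Q}$, where $P:=U_SO_1^\dagger$ and $Q:=\tilde U_SO_2$ both have orthonormal columns and $P^\dagger Q=D$. A direct computation gives $\norm{P-Q}_F^2=\operatorname{tr}\big((P-Q)^\dagger(P-Q)\big)=2S-2\operatorname{tr}(D)=4\sum_i\sin^2(\theta_i/2)\le 2\sum_i\sin^2\theta_i\le 2S\norm{\sin\Theta}^2$, where I used $\sin^2(\theta/2)\le\tfrac12\sin^2\theta$ valid for $\theta\in[0,\pi/2]$ (the canonical angles lie in this range because $0\le D\le I_S$). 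Since $\norm{\cdot}\le\norm{\cdot}_F$, combining with the previous step gives $\norm{U_S(O_2O_1)^\dagger-\tilde U_S}\le\sqrt{2S}\,\norm{\sin\Theta}\le 2\sqrt{2S}\,\norm{H(\eta)}/\sigma_{\rm min}(H(g))$, as claimed.

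I do not expect a genuine obstacle: this is classical matrix perturbation theory. The two places needing care are (i) verifying the exact rank-$S$ property so that the Wedin gap is $\sigma_S(H(g))-\sigma_{S+1}(H(g+\eta))$ rather than something smaller, and (ii) the bookkeeping in the last step to land precisely the factor $\sqrt{2S}$ --- in particular, passing through the Frobenius norm (which costs a $\sqrt S$) and invoking the half-angle inequality. As the lemma statement already notes, replacing $U_S$ by its internally rotated version $U_S(O_2O_1)^\dagger$ is harmless, since this rotation leaves invariant the eigenvalues of the ESPRIT signal matrix $\tilde\Psi$ in Algorithm~\ref{alg:two}, which are the quantities ESPRIT ultimately returns.
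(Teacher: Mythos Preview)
Your proposal is correct and follows essentially the same route as the paper: both invoke Weyl plus Wedin's $\sin\Theta$ theorem to bound $\norm{\sin\Theta}$ (equivalently $\norm{(U_S^\perp)^\dagger\tilde U_S}$) by $2\norm{H(\eta)}/\sigma_{\rm min}(H(g))$, and then pass through the Frobenius norm to pick up the $\sqrt{2S}$ factor. Your half-angle inequality $\sin^2(\theta/2)\le\tfrac12\sin^2\theta$ is precisely the paper's step $\tr(D)\ge\tr(D^2)$ (i.e.\ $D\ge D^2$) written in angle notation.
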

\begin{proof}
First, we can observe that indeed $I_S \geq D$ as $\norm{O_1 U_S^{\dagger} \tilde{U}_S O_2} \leq \norm{O_1}\, \norm{U_S^{\dagger}}\, \norm{\tilde{U}_S}\,\norm{O_2} \leq 1$.

The proof follows from Wedin's  $\sin\Theta$ theorem for perturbations of singular subspaces as well as Weyl's perturbation theorem for singular values, see e.g. \cite{bhatia2013matrix,stewart1991perturbation}. From this latter theorem we know that $|\sigma_i(H(g+\eta))-\sigma_i(H(g))|\leq \norm{H(\eta)}\leq \sigma_{\rm min}(H(g))/2$ where $\sigma_i$ is the $i$th singular value (in order and some singular values can be zero). Let $\sigma_{\rm min}(H(g+\eta))>0$ be the $k$th singular value, and thus 
\begin{align}
    \sigma_{\rm min}(H(g+\eta))\geq  \sigma_k(H(g))-\sigma_{\rm min}(H(g))/2 \geq \sigma_{\rm min}(H(g))/2,
\end{align}
where the last inequality holds as $\sigma_k(H(g))> 0$ and hence is at least $\sigma_{\rm min}(H(g))$.
Hence we can use Wedin's theorem on singular values (Theorem 3.4 in \cite{li1998relative}, setting $\delta=\alpha = \sigma_{\rm min}(H(g))/2$) to conclude that
\begin{equation}
   \norm{(U_S^{\perp})^\dagger\tilde{U}_S} \leq \frac{2\norm{H(\eta)}}{\sigma_{\rm min}(H(g))}, 
\end{equation}
where $U_S^\perp$ is the matrix formed from the $L+1-S$ other (besides $U_S$) columns of the noiseless $U$. 
To connect this to $U_S(O_2 O_1)^\dagger-\tilde{U}_S$ we can make the following long calculation: 
\begin{align}
    \norm{U_{S}(O_2 O_1)^\dagger-\tilde{U}_{S}}&\leq \norm{U_{S}(O_2 O_1)^\dagger-\tilde{U}_{S}}_F \notag \\
    &= \left[ \tr(U_S(O_2 O_1)^\dagger(O_2 O_1)U_S^{\dagger}) + \tr(\tilde{U}_S\tilde{U}_S^\dagger) - \tr\big((U_S(O_2 O_1)^\dagger\tilde{U}_S^\dagger+ \tilde{U}_S (O_2 O_1)U_S^\dagger)\big)\right]^{1/2} \notag \\
    &=\left[ \tr(U_SU_S^{\dagger}) + \tr(\tilde{U}_S\tilde{U}_S^\dagger) - 2\tr(D)\right]^{1/2} \notag \\
    &\leq\left[ 2\tr(\tilde{U}_S \tilde{U}_S^{\dagger})  - 2\tr\big(DD^\dagger\big)\right]^{1/2} \notag \\
    &=\sqrt{2}\left[ \tr(\tilde{U}_S \tilde{U}_S^{\dagger}) -\tr\big((O_1 U_S^\dagger\tilde{U}_S O_2)(O_1 U_S^\dagger\tilde{U}_S O_2)^\dagger\big)\right]^{1/2} \notag \\
    &= \sqrt{2}\left[ \tr(\tilde{U}_S \tilde{U}_S^{\dagger}) - \tr\big(U_S^\dagger\tilde{U}_S\tilde{U}_S^\dagger U_S \big)\right]^{1/2} \notag \\
    &= \sqrt{2}\left[ \tr(\tilde{U}_S \tilde{U}_S^{\dagger})  - \tr\big(U_S U_S^\dagger\tilde{U}_S\tilde{U}_S^\dagger \big)\right]^{1/2} \notag \\
    & = \sqrt{2}\left[ \tr\big(\tilde{U}_S \tilde{U}_S^{\dagger}) - \tr(\tilde{U}_S\tilde{U}_S^\dagger)  + \tr\big(U_S^{\perp} (U_S^{\perp})^\dagger\tilde{U}_S\tilde{U}_S^\dagger \big)\right]^{1/2} \notag \\
    & = \sqrt{2}\norm{(U_S^{\perp})^\dagger\tilde{U}_S}_F
    \leq \sqrt{2S} \norm{(U_S^{\perp})^\dagger\tilde{U}_S}
\end{align}
In the second inequality we used that $\tr(U_SU_S^{\dagger})=\tr(\tilde{U}_S\tilde{U}_S^\dagger)=S$ since $U_S$ as well as $\tilde{U}_S$ consist of $S$ orthonormal columns, and $D\geq D^2 = DD^\dagger$, since $I\geq D\geq 0$. In addition, at the end we use that $U U^{\dagger}=U_S U_S^{\dagger}+U_S^{\perp}(U_S^{\perp})^{\dagger}=I$ as $U$ is unitary.
\end{proof}

The next step is to bound the deviation of the ESPRIT signal matrix $\tilde{\Psi} = \tilde{U}_0^+ \tilde{U}_1$ from the rotated version of its noiseless variant $(O_2 O_1)\Psi(O_2 O_1)^\dagger = (O_2 O_1){U}_0^+ {U}_1(O_2 O_1)^\dagger$ in terms of $\norm{U_S(O_2 O_1)^\dagger-\tilde{U}_S}$. 
Recall that $U_0$ (resp. $U_1$) are constructed by removing respectively the first or last row from the matrix $U_S$.
 Note also that only the eigenvalues of the signal matrix $\Psi$ matter in the ESPRIT Algorithm \ref{alg:two} and the additional unitary rotations $O_2 O_1$ do not alter these eigenvalues.
We first establish some intermediate result:

\begin{lemma}\label{lem:inverse_pert}
Let $A,B$ be matrices such that $\mathrm{rank}(A) = \mathrm{rank}(B)$. If $\norm{A-B} \leq \sigma_{\rm min}(A)/2$ then 
\begin{equation}
    \norm{A^+ - B^+} \leq \frac{1+ \sqrt{5}}{2} \norm{A - B}\norm{A^+}^2 = \frac{1+ \sqrt{5}}{2} \frac{\norm{A - B}}{\sigma_{\rm min}^2(A)}.
    \label{eq:sqrt5}
\end{equation}
\end{lemma}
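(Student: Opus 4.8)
The plan is to combine two standard ingredients: a lower bound on the smallest nonzero singular value of $B$ (so that $\norm{B^+}$ is controlled by $\norm{A^+}$), obtained from Weyl's perturbation inequality together with the rank hypothesis, and the classical ``three-term'' identity for the perturbation of a Moore--Penrose pseudo-inverse due to Wedin. The hypothesis $\norm{A-B}\le\sigma_{\rm min}(A)/2$ is exactly what keeps $\sigma_{\rm min}(B)$ bounded away from $0$ and makes the golden-ratio constant come out.

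First I would record the singular-value estimate. Write $r:=\mathrm{rank}(A)=\mathrm{rank}(B)$ and order singular values decreasingly (padding with zeros), so that by Lemma \ref{lem:small_sing} $\sigma_{\rm min}(A)=\sigma_r(A)=\norm{A^+}^{-1}$ and likewise for $B$. Weyl's inequality gives $|\sigma_r(A)-\sigma_r(B)|\le\norm{A-B}$, hence $\sigma_{\rm min}(B)=\sigma_r(B)\ge\sigma_{\rm min}(A)-\norm{A-B}\ge\sigma_{\rm min}(A)/2>0$; in particular $\norm{B^+}=\sigma_{\rm min}(B)^{-1}\le 2\norm{A^+}$, and more sharply $\norm{B^+}\le\norm{A^+}/(1-\norm{A^+}\norm{A-B})$.

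Next I would invoke the pseudo-inverse perturbation identity (Wedin; see also Stewart--Sun): with $E:=B-A$,
\[
B^+-A^+ = -B^+EA^+ + (B^*B)^+E^*(I-AA^+) + (I-B^+B)E^*(AA^*)^+.
\]
Taking operator norms, using $\norm{(B^*B)^+}=\norm{B^+}^2$, $\norm{(AA^*)^+}=\norm{A^+}^2$ and that $I-AA^+$ and $I-B^+B$ are orthogonal projectors, gives a preliminary bound $\norm{B^+-A^+}\le\norm{E}\big(\norm{A^+}\norm{B^+}+\norm{B^+}^2+\norm{A^+}^2\big)$. The refinement that produces the golden ratio rather than a crude constant uses that rank preservation forces $(I-AA^+)B=(I-AA^+)E$ and $(I-B^+B)B^*=0$, so the ``orthogonal-complement'' pieces of $E$ appearing in the last two terms are governed by the canonical-angle sines between the column (resp.\ row) spaces of $A$ and $B$, which are themselves of size $\lesssim\norm{A^+}\norm{E}\norm{B^+}$; feeding this back together with $\norm{B^+}\le\norm{A^+}/(1-\norm{A^+}\norm{E})$ leaves a self-consistent quadratic estimate whose relevant constant is the positive root of $t^2=t+1$, i.e.\ $\varphi=(1+\sqrt5)/2$. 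Substituting $\norm{A^+}=1/\sigma_{\rm min}(A)$ then reads off $\norm{A^+-B^+}\le\varphi\,\norm{A-B}/\sigma_{\rm min}^2(A)$, which is Eq.~\eqref{eq:sqrt5}.

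The main obstacle is pinning down the precise constant $\tfrac{1+\sqrt5}{2}$: a bare triangle inequality on the three-term identity together with $\norm{B^+}\le 2\norm{A^+}$ overshoots (one gets a constant of order a few), and the improvement to $\varphi$ requires Wedin's careful bookkeeping---tracking the canonical angles between the column and row spaces of $A$ and $B$, using $\norm{A-B}\le\sigma_{\rm min}(A)/2$ to bound them, and solving the resulting quadratic---whereas the rest of the argument (Weyl's inequality, the identity itself, the norm estimates) is entirely routine. If one is willing to accept a slightly weaker constant, the proof collapses to the two displayed steps above.
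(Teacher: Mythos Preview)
Your proposal is correct in its architecture but takes a genuinely different route from the paper. The paper does not re-derive anything: it simply invokes Wedin's perturbation theory for pseudo-inverses as a black box. Specifically, Theorem~4.1 in \cite{wedin1973perturbation} already supplies
\[
\norm{A^+-B^+}\le \frac{1+\sqrt5}{2}\,\norm{A-B}\,\norm{A^+}\,\norm{B^+}
\]
under the equal-rank hypothesis, and Lemma~3.1 in the same reference gives $\norm{B^+}\le \norm{A^+}/(1-\norm{A^+}\norm{A-B})$; combining the two yields the stated bound in two lines. Your plan instead unpacks Wedin's three-term identity and attempts to re-derive the golden-ratio constant from scratch.

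There is a genuine gap in your re-derivation, and you essentially flag it yourself. The preliminary triangle-inequality bound on the three-term identity is fine, but the passage to $\varphi=(1+\sqrt5)/2$ is not carried out: the claim that ``feeding this back \dots\ leaves a self-consistent quadratic estimate whose relevant constant is the positive root of $t^2=t+1$'' is a narrative, not a proof. In Wedin's argument the constant $\varphi$ does not arise from a fixed-point quadratic in the way you suggest; it comes from a careful optimization over the decomposition of $E=B-A$ into components lying in the four subspaces determined by the range/null projectors of $A$ and $B$, together with sharp angle bounds. Simply bounding the canonical angles by $\norm{A^+}\norm{E}\norm{B^+}$ and substituting $\norm{B^+}\le\norm{A^+}/(1-\norm{A^+}\norm{E})$ does not by itself produce $\varphi$. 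So as written, your argument rigorously delivers only a bound with a constant of order a few (as you concede), not the stated $\tfrac{1+\sqrt5}{2}$. If the exact constant matters---and it is what the lemma claims---you should cite Wedin directly, as the paper does, rather than sketch a re-derivation that stops short of the sharp step.
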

\begin{proof}
From Theorem 4.1 in \cite{wedin1973perturbation} we get that
\begin{equation}
    \norm{A^+ - B^+} \leq \frac{1+ \sqrt{5}}{2} \norm{A - B}\norm{A^+}\norm{B^+}.
\end{equation}
Furthermore, since
$\norm{A-B} \leq \frac{\sigma_{\rm min}(A)}{2}< \frac{1}{\norm{A^+}}$, we have by Lemma 3.1 in \cite{wedin1973perturbation} that 
\begin{equation}
    \norm{B^+} \leq \frac{\norm{A^+}}{1- \norm{A^+}\norm{A-B}}\leq \norm{A^+},
\end{equation}
leading to the first inequality in Eq.~\eqref{eq:sqrt5} and the Lemma follows.
\end{proof}
The next Lemma establishes that if a (non-square) matrix is full rank, a sufficiently small perturbation does not \emph{decrease} the rank (and hence rank is preserved). Note that full-rankness is really required, as an arbitrarily small perturbation can always \emph{increase} the rank.
\begin{lemma}\label{lem:equal_rank}
Let $A$ be an $m\times n$ $(m\leq n)$ matrix of rank $m$ and let $B$ an $m\times n$ matrix s.t. $\norm{A-B}\leq \sigma_{\rm min}(A)/2$. Then $\mathrm{rank}(A) = \mathrm{rank}(B)$. 
\end{lemma}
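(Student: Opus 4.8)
The plan is to deduce this directly from Weyl's perturbation theorem for singular values, which was already invoked in the proof of Lemma \ref{lem:U_pert}. Recall that $\sigma_{\rm min}(A)$ here denotes the smallest \emph{nonzero} singular value; since $A$ is $m\times n$ with $m\le n$ and has rank $m$, it has exactly $m$ singular values $\sigma_1(A)\ge \cdots \ge \sigma_m(A)$, all strictly positive, and $\sigma_m(A)=\sigma_{\rm min}(A)$. The matrix $B$ is also $m\times n$, so it likewise has $m$ singular values $\sigma_1(B)\ge\cdots\ge\sigma_m(B)\ge 0$, and we want to show the smallest one is nonzero, which forces $\mathrm{rank}(B)=m=\mathrm{rank}(A)$ (using that the rank of an $m\times n$ matrix is at most $m$).

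First I would apply Weyl's inequality $|\sigma_i(A)-\sigma_i(B)|\le \norm{A-B}$, valid for every index $i=1,\ldots,m$ since both matrices have the same dimensions. Taking $i=m$ gives
\begin{equation}
    \sigma_m(B)\ge \sigma_m(A)-\norm{A-B}\ge \sigma_{\rm min}(A)-\frac{\sigma_{\rm min}(A)}{2}=\frac{\sigma_{\rm min}(A)}{2}>0,
\end{equation}
where the hypothesis $\norm{A-B}\le\sigma_{\rm min}(A)/2$ was used in the second step, and $\sigma_{\rm min}(A)>0$ because $A$ has full row rank. Hence all $m$ singular values of $B$ are strictly positive, so $\mathrm{rank}(B)=m=\mathrm{rank}(A)$, as claimed.

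There is essentially no serious obstacle here; the only point requiring a little care is bookkeeping about which singular value is meant by $\sigma_{\rm min}$ and ensuring that Weyl's inequality is applied to the \emph{ordered} list of singular values of two matrices of identical shape (no zero-padding ambiguity arises precisely because $A$ and $B$ are both $m\times n$). Note also that, as the statement remarks, full-rankness of $A$ is genuinely needed: if $A$ were rank-deficient one could perturb a zero singular value into a nonzero one and increase the rank, so only the inequality $\mathrm{rank}(B)\ge \mathrm{rank}(A)$ would survive — but in our setting $\mathrm{rank}(B)\le m = \mathrm{rank}(A)$ automatically, giving equality.
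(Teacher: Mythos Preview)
Your argument is correct and is essentially identical to the paper's own proof: both use Weyl's singular value perturbation theorem to bound $\sigma_m(B)\ge \sigma_{\rm min}(A)-\norm{A-B}\ge \sigma_{\rm min}(A)/2>0$, forcing $B$ to have full row rank. The only difference is presentational --- you spell out the bookkeeping about the ordered singular values a bit more carefully than the paper does.
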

\begin{proof}
By construction, we have $\mathrm{rank}(A) \geq \mathrm{rank}(B)$. 
Moreover we have that the smallest singular value of $B$ is at least $\sigma_{\rm min}(A) - \norm{A-B}$, by Weyl's singular value perturbation theorem \cite{stewart1991perturbation}. Thus by construction the smallest singular value of $B$ is at least $\sigma_{\rm min}(A)/2$ which is strictly larger than 0 as $A$ is full rank and thus $B$ is also full rank, and thus $\mathrm{rank}(A) = \mathrm{rank}(B)$. 
\end{proof}

Finally, we will require a bound on the smallest nonzero singular value of $U_0$.
This is the only Lemma where we deviate significantly from the work done in \cite{li2019superresolution}, where the corresponding result, Lemma 3 in \cite{li2019superresolution}, makes explicit use of the fact that in their scenario all poles $z_j$ lie on the unit circle (what we call the real-time, oscillatory signal).
The bound we present here is simpler and more general and thus applies to both imaginary (decaying) as well as real-time (oscillatory) signals, but leads to a suboptimal dependence on the condition number of the Vandermonde matrix $V_L$ defined in Eq.~\eqref{eq:V} (in particular $\sigma_{\rm min}(V_L)$). However, it is sufficient for our purpose.
The Lemma will use some essential properties of how the ESPRIT Algorithm \ref{alg:two} works which we review first. Key to the functioning of ESPRIT is the fact that $H(g)$ has two decompositions
\begin{equation}
    H(g) = U\Sigma W = V_{L}  C V_{K-L}^{T}, 
\end{equation}
where $V_L$ is the $(L+1)\times S$ Vandermonde matrix defined in Eq.~\eqref{eq:V} and the coefficient matrix $C$ is given in Eq.~\eqref{def:matrixC}. When $S\leq L \leq K+1-S$ (requiring $K+1 \geq 2S$), $V_{L}$ and $V_{K-L}$ are full rank.
Then $V_{L}$ and $U_{S}$ have an image of the same dimension, which means there exists an invertible matrix $A$ such that
\begin{equation}
    U_S = V_{L}A,
    \label{eq:defA}
\end{equation}
and thus 
\begin{equation}
    U_0=V_{L-1}A, U_1=V_{L-1} Z A.
    \end{equation} 
with $Z={\rm diag}(z_1,\ldots, z_S)$.
This implies that 
\begin{align}
    \Psi=U_0^+ U_1=A^{-1} V_{L-1}^+ V_{L-1} Z A=A^{-1} Z A.
    \label{eq:signal-c}
\end{align}
and hence the eigenvalues of $\Psi$ are the poles $z_i$.

\begin{lemma}\label{lem:small_sing_U_0}
Let $U_0$ be the $L\times S $ matrix obtained from $U_S$ by removing the last row. If the associated Vandermonde matrix $V_{L-1}$ is of (full) rank $S$  then so is $U_0$, and moreover 
\begin{equation}
    \sigma_{\rm min}(U_0)\geq \frac{\sigma_{\rm min}(V_{L-1})}{\norm{V_L}}.
\end{equation}
\end{lemma}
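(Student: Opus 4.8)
The plan is to route everything through the factorization $U_S = V_L A$ recorded in Eq.~\eqref{eq:defA}, where $A$ is an invertible $S\times S$ matrix. Deleting the last row of both sides of this identity gives $U_0 = V_{L-1}A$. Since $V_{L-1}$ has full column rank $S$ by hypothesis and $A$ is invertible, the product $V_{L-1}A$ also has rank $S$, which already settles the first claim. It remains to produce the quantitative bound on $\sigma_{\rm min}(U_0)$.

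The key observation is that $A$ is tightly constrained by the orthonormality of the $S$ columns of $U_S$. Indeed,
\begin{equation}
    I_S = U_S^\dagger U_S = A^\dagger (V_L^\dagger V_L) A,
\end{equation}
and using invertibility of $A$ this rearranges to $A A^\dagger = (V_L^\dagger V_L)^{-1}$. Both sides are $S\times S$ positive definite matrices (here one uses that $V_L$ has full column rank, which holds under the standing assumption $S\le L\le K+1-S$), so reading off singular values shows that the squared singular values of $A$ are precisely the reciprocals of those of $V_L$; equivalently $A^{-1}(A^{-1})^\dagger = V_L^\dagger V_L$, so $\norm{A^+}=\norm{A^{-1}} = \sigma_{\rm max}(V_L) = \norm{V_L}$, i.e.\ $\sigma_{\rm min}(A) = 1/\norm{V_L}$. (If the Hankel matrix has repeated nonzero singular values the columns of $U_S$, and hence $A$, are only determined up to a right unitary factor, but this leaves $AA^\dagger$ and thus all of the above unchanged.)

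Finally I would combine the two pieces. Since $V_{L-1}$ has full column rank and $A$ has full row rank, the product rule for the Moore--Penrose pseudo-inverse from \cite{greville1966note} applies, so Lemma~\ref{lem:small_sing} gives $\sigma_{\rm min}(U_0) = \sigma_{\rm min}(V_{L-1}A)\geq \sigma_{\rm min}(V_{L-1})\,\sigma_{\rm min}(A)$. Substituting $\sigma_{\rm min}(A) = 1/\norm{V_L}$ yields the claimed bound $\sigma_{\rm min}(U_0)\geq \sigma_{\rm min}(V_{L-1})/\norm{V_L}$. There is no genuine obstacle in this argument: the only points needing a little care are verifying the rank hypotheses for the pseudo-inverse product identity (exactly the full-column-rankness of $V_{L-1}$ assumed in the Lemma, together with invertibility of $A$) and noting that the identity $\sigma_{\rm min}(A) = 1/\norm{V_L}$ relies on $V_L^\dagger V_L$ being invertible.
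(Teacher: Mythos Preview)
Your proposal is correct and follows essentially the same route as the paper: factor $U_0=V_{L-1}A$ via Eq.~\eqref{eq:defA}, use $U_S^\dagger U_S=I_S$ to identify the singular spectrum of $A$ with the inverse of that of $V_L$ (in particular $\norm{A^{-1}}=\norm{V_L}$), and then apply the pseudo-inverse product rule together with Lemma~\ref{lem:small_sing} (equivalently, submultiplicativity of $\norm{\cdot}$ on $A^{-1}V_{L-1}^+$). The only cosmetic difference is that the paper phrases the last step directly as $\norm{U_0^+}=\norm{A^{-1}V_{L-1}^+}\le \norm{A^{-1}}\norm{V_{L-1}^+}$ rather than invoking Lemma~\ref{lem:small_sing} by name.
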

\begin{proof}
We have
\begin{equation}
    I_S  = U_S^{\dagger}U_S = (V_{L}A)^\dagger V_{L}A = A^\dagger V_L ^\dagger V_L A,
\end{equation}
which means that the singular values of $A$ are precisely inverse to those of $V_L$, or equivalently that $A^+$ has the same singular spectrum as $V_L$.
Moreover, by assumption $V_{L-1}$ has full column rank, and $A$ is invertible so
\begin{equation}
    \sigma_{\rm min}^{-1}(U_0)=\norm{U_0^+} = \norm{( V_{L-1}A)^+} = \norm{A^{-1} V_{L-1}^+}\leq \norm{A^{-1}}\norm{ V_{L-1}^+}= \frac{\norm{V_L}}{\sigma_{\rm min}(V_{L-1})},
\end{equation}
which is the inverse of the Lemma statement. 
\end{proof}
With Lemmas \ref{lem:inverse_pert}, \ref{lem:equal_rank} and \ref{lem:small_sing_U_0} in hand, we can give a perturbation bound for the signal matrix $\tilde{\Psi}$. We will show that if $\tilde{U}_0$ does not deviate strongly from the rotated version of $U_0$, namely $U_0(O_2O_1)^\dagger$, then the noisy signal matrix is also close to the ideal (rotated) version. 
\begin{lemma}\label{lem:U_inv_bound}
Let $\Psi: = U_0^+U_1,\tilde{\Psi} = \tilde{U}_0^+ \tilde{U}_1$ be the ideal and perturbed version of the signal matrix, respectively. Now assume that $\norm{ U_0(O_2O_1)^\dagger-\tilde{U}_0}\leq \sigma_{\rm min}(U_0)/2$, where $O_2O_1$ is defined through the singular value decomposition of $U_S^\dagger \tilde{U}_S$ (i.e. $O_1 U_S^\dagger\tilde{U}_S O_2 = D$). With this assumption we have
\begin{equation}
    \norm{(O_2 O_1)\Psi(O_2 O_1)^\dagger-\tilde{\Psi}}\leq 5\norm{U_S(O_2 O_1)^\dagger-\tilde{U}_S} \frac{\norm{V_{L}}^2}{\sigma_{\rm min}^2(V_{L-1})}.
\end{equation}

\end{lemma}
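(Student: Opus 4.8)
The plan is to absorb the unitary freedom $O_2O_1$ into a change of basis and thereby reduce the claim to the same telescoping perturbation estimate used for the oscillatory case in \cite{li2019superresolution}, now fed with the Vandermonde bound of Lemma \ref{lem:small_sing_U_0}. Write $R:=O_2O_1$, an $S\times S$ unitary, and set $\hat{U}_S:=U_S R^\dagger$, $\hat{U}_0:=U_0 R^\dagger$, $\hat{U}_1:=U_1 R^\dagger$, $\hat{\Psi}:=\hat{U}_0^+\hat{U}_1$. Since $R^\dagger$ is unitary (full row rank) and $U_0$ has full column rank $S$ (as in Lemma \ref{lem:small_sing_U_0}, using that $V_{L-1}$ is full rank), the pseudo-inverse product rule applies and gives $(U_0R^\dagger)^+ = R U_0^+$, hence
\begin{equation}
  (O_2O_1)\Psi(O_2O_1)^\dagger = R U_0^+ U_1 R^\dagger = (U_0R^\dagger)^+(U_1R^\dagger)=\hat{\Psi}.
\end{equation}
Right-multiplication by a unitary does not change singular values, so $\hat{U}_0$ has full rank $S$ and $\sigma_{\rm min}(\hat{U}_0)=\sigma_{\rm min}(U_0)$; moreover $\hat{U}_0-\tilde{U}_0$ (resp. $\hat{U}_1-\tilde{U}_1$) is just $\hat{U}_S-\tilde{U}_S = U_S(O_2O_1)^\dagger-\tilde{U}_S$ with its last (resp. first) row deleted, so both have operator norm at most $\norm{U_S(O_2O_1)^\dagger-\tilde{U}_S}$.

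Next I would split
\begin{equation}
  \hat{\Psi}-\tilde{\Psi} = \hat{U}_0^+(\hat{U}_1-\tilde{U}_1) + (\hat{U}_0^+-\tilde{U}_0^+)\tilde{U}_1
\end{equation}
and bound each piece by sub-multiplicativity. For the first term $\norm{\hat{U}_0^+}=\sigma_{\rm min}^{-1}(U_0)$, and the row-deletion remark above controls $\norm{\hat{U}_1-\tilde{U}_1}$. For the second term, $\tilde{U}_1$ is a row-submatrix of $\tilde{U}_S$, which has orthonormal columns, so $\norm{\tilde{U}_1}\le 1$. The one genuinely non-routine input — and the reason the hypothesis of the Lemma is phrased the way it is — is the bound on $\norm{\hat{U}_0^+-\tilde{U}_0^+}$: the assumption $\norm{U_0(O_2O_1)^\dagger-\tilde{U}_0}=\norm{\hat{U}_0-\tilde{U}_0}\le\sigma_{\rm min}(U_0)/2=\sigma_{\rm min}(\hat{U}_0)/2$ lets me apply Lemma \ref{lem:equal_rank} (to $\hat{U}_0$ and $\tilde{U}_0$, transposing to the wide orientation if needed) to get $\mathrm{rank}(\hat{U}_0)=\mathrm{rank}(\tilde{U}_0)=S$, and then Lemma \ref{lem:inverse_pert} with $A=\hat{U}_0$, $B=\tilde{U}_0$ gives $\norm{\hat{U}_0^+-\tilde{U}_0^+}\le\tfrac{1+\sqrt{5}}{2}\,\norm{\hat{U}_0-\tilde{U}_0}/\sigma_{\rm min}^2(U_0)$.

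Assembling the two terms, and using $\sigma_{\rm min}(U_0)\le\norm{U_0}\le\norm{U_S}=1$ so that $\sigma_{\rm min}^{-1}(U_0)\le\sigma_{\rm min}^{-2}(U_0)$, yields
\begin{equation}
  \norm{\hat{\Psi}-\tilde{\Psi}}\le\Big(1+\tfrac{1+\sqrt{5}}{2}\Big)\frac{\norm{U_S(O_2O_1)^\dagger-\tilde{U}_S}}{\sigma_{\rm min}^2(U_0)},
\end{equation}
and since $1+\tfrac{1+\sqrt{5}}{2}=\tfrac{3+\sqrt{5}}{2}<5$, the claim follows from the bound $\sigma_{\rm min}^{-1}(U_0)\le\norm{V_L}/\sigma_{\rm min}(V_{L-1})$ of Lemma \ref{lem:small_sing_U_0}. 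I do not expect a real obstacle: the only points requiring care are the pseudo-inverse product identity $(U_0R^\dagger)^+=RU_0^+$ and checking that the single hypothesis of the Lemma is precisely what both Lemma \ref{lem:equal_rank} and Lemma \ref{lem:inverse_pert} need after the rotation; beyond that it is the argument of \cite{li2019superresolution} carried over verbatim to the decaying case.
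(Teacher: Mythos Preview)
Your proof is correct and essentially identical to the paper's: the same telescoping split $\hat{\Psi}-\tilde{\Psi}=\hat{U}_0^+(\hat{U}_1-\tilde{U}_1)+(\hat{U}_0^+-\tilde{U}_0^+)\tilde{U}_1$, the same invocation of Lemmas \ref{lem:equal_rank} and \ref{lem:inverse_pert} under the stated hypothesis, the same use of $\sigma_{\rm min}(U_0)\le 1$, and the same closing appeal to Lemma \ref{lem:small_sing_U_0}. Your presentation is if anything slightly cleaner in that you explicitly justify the pseudo-inverse identity $(U_0R^\dagger)^+=RU_0^+$ (which the paper uses tacitly) and you get the sharper constant $(3+\sqrt{5})/2$ rather than the paper's $3/2+\sqrt{5}$, both of course below $5$.
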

\begin{proof}
Following \cite{li2019superresolution} we have
\begin{align}
    \norm{(O_2 O_1)\Psi(O_2 O_1)^\dagger-\tilde{\Psi}} \leq \norm{(O_2 O_1)U_0^+}\norm{U_1(O_2 O_1)^\dagger-\tilde{U}_1} +\norm{(O_2 O_1)U_0^+-\tilde{U}_0^+}\norm{\tilde{U_1}}.
\end{align}
We have $\norm{\tilde{U}_1}\leq \norm{\tilde{U}_S} = 1$, since $\tilde{U}_S^{\dagger} \tilde{U}_S=I_S$ and removing a row vector decreases the operator norm.
Similarly we have $\norm{U_1(O_2 O_1)^\dagger-\tilde{U}_1}\leq \norm{U_S(O_2 O_1)^\dagger-\tilde{U}_S}$.
Now note that by our initial assumption and Lemma \ref{lem:equal_rank} (with $A=U_0(O_2 O_1)^{\dagger}$ and $B=\tilde{U}_0$) we have $\mathrm{Rank}(U_0(O_2 O_1)^\dagger)= \mathrm{Rank}(\tilde{U}_0)$.
This means that we can use Lemma \ref{lem:inverse_pert} to conclude that
\begin{equation}
    \norm{(O_2 O_1)U_0^+-\tilde{U}_0^+}\leq \frac{1+ \sqrt{5}}{2}\frac{\norm{U_0(O_2 O_1)^\dagger-\tilde{U}_0}}{\sigma_{\rm min}^2(U_0)}\leq \frac{1+ \sqrt{5}}{2}\frac{\norm{U_S(O_2 O_1)^\dagger-\tilde{U}_S}}{\sigma_{\rm min}^2(U_0)}.
\end{equation}
Hence we get
\begin{align}
    \norm{(O_2 O_1)\Psi(O_2 O_1)^\dagger-\tilde{\Psi}}&\leq \frac{\norm{U_S(O_2 O_1)^\dagger-\tilde{U}_S}}{\sigma_{\rm min}(U_0)} + \frac{1+ \sqrt{5}}{2}\frac{\norm{U_S(O_2 O_1)^\dagger-\tilde{U}_S}}{\sigma_{\rm min}^2(U_0)} \notag \\
    &\leq (\sigma_{\rm min}(U_0) +(1 + \sqrt{5})/2 )\frac{\norm{U_S(O_2 O_1)^\dagger-\tilde{U}_S}}{\sigma_{\rm min}^2(U_0)} \notag \\
    &\leq  \left(\frac{3}{2}+ \sqrt{5} \right)\frac{\norm{U_S(O_2 O_1)^\dagger-\tilde{U}_S}}{\sigma_{\rm min}^2(U_0)},
\end{align}
where we used $\sigma_{\rm min}(U_0)\leq \norm{U_0}\leq 1$.
Plugging in the lower bound on $\sigma_{\rm min}(U_0)$ (Lemma \ref{lem:small_sing_U_0}) and noting that $\frac{3}{2}+ \sqrt{5}\leq 5$ we obtain the Lemma statement.
\end{proof}

Combining all of this we get the following general theorem. From now on we restrict ourselves to the case $L=K/2$:

\begin{theorem}
Let $(g+\eta)(k)$ be the signal with $g(k) = \sum_{i=1}^{S}c_i z_i^k$ and $\eta(k)$ a small noise vector to which we apply the ESPRIT algorithm \ref{alg:two}. Consider then the associated Hankel matrices $H(g)$ and $H(g+\eta)$, with singular value decompositions $H(g) = U\Sigma W$ and $H(g+\eta) = \tilde{U}\tilde{\Sigma} \tilde{W}$, and label the matrix of the first $S$ columns of $U$ (resp. $\tilde{U}$) as $U_S$ (resp. as $\tilde{U}_S$). Denote by $U_0$ (resp. $U_1$) the submatrix of $U_S$ with the last row (resp. first row) removed and define the signal matrix $\Psi = U_0^+ U_1$ (similarly for $\tilde{\Psi}$). Let $L=K/2$ and $K+1\geq 2S$. Now, if 
\begin{equation}
    4 \norm{H(\eta)}\leq  c_{\rm min}\sigma_{\rm min}^2(V_{K/2})\sigma_{\rm min}(V_{K/2-1})\norm{V_{K/2}}^{-1},
\end{equation}
 then
\begin{equation}
    \norm{(O_2 O_1)\Psi(O_2 O_1)^\dagger-\tilde{\Psi}} \leq \frac{10\sqrt{2S}\norm{H(\eta)}\norm{V_{K/2}}^2}{\sigma_{\rm min}^4(V_{K/2-1})}c_{\rm min}^{-1} .
\end{equation}
where $O_2O_1$ is defined through the singular value decomposition of $U_S^\dagger \tilde{U}_S$, Eq.~\eqref{eq:defD}. 
\label{thm:signalbound}
\end{theorem}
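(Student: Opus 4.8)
The plan is to assemble the three perturbation estimates proved above -- Lemma \ref{lem:U_pert} (stability of the leading left singular subspace), Lemma \ref{lem:small_sing_U_0} (a lower bound on $\sigma_{\rm min}(U_0)$), and Lemma \ref{lem:U_inv_bound} (stability of the signal matrix) -- after rewriting the abstract singular-value quantities that enter their hypotheses in terms of $c_{\min}$ and the singular values of the Vandermonde matrices $V_{K/2}$ and $V_{K/2-1}$. Since $K+1\geq 2S$ and $K$ is even we have $K/2\geq S$, so both $V_{K/2}$ and $V_{K/2-1}$ have full column rank $S$; these full-rank facts are needed to legitimately split pseudo-inverses of products.

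First I would lower bound $\sigma_{\rm min}(H(g))$. From the Vandermonde decomposition $H(g)=V_{K/2}CV_{K/2}^{T}$ (Eq.~\eqref{eq:Vandermondedecomp} with $L=K/2$), using that $V_{K/2}$ has full column rank, $V_{K/2}^{T}$ has full row rank and $C$ is invertible, the product rule for the Moore--Penrose inverse from \cite{greville1966note} applies (twice), so Lemma \ref{lem:small_sing} gives
\begin{equation}
    \sigma_{\rm min}(H(g))\geq \sigma_{\rm min}(V_{K/2})\,\sigma_{\rm min}(C)\,\sigma_{\rm min}(V_{K/2})=c_{\min}\,\sigma_{\rm min}^{2}(V_{K/2}).
\end{equation}
Because deleting a row cannot increase the operator norm, $\sigma_{\rm min}(V_{K/2-1})\leq \norm{V_{K/2-1}}\leq \norm{V_{K/2}}$, so the standing hypothesis forces $\norm{H(\eta)}\leq \tfrac14 c_{\min}\sigma_{\rm min}^{2}(V_{K/2})\leq \tfrac12\sigma_{\rm min}(H(g))$, which is exactly the precondition of Lemma \ref{lem:U_pert}. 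Hence
\begin{equation}
    \norm{U_S(O_2O_1)^\dagger-\tilde U_S}\leq \frac{2\sqrt{2S}\,\norm{H(\eta)}}{\sigma_{\rm min}(H(g))}\leq \frac{2\sqrt{2S}\,\norm{H(\eta)}}{c_{\min}\,\sigma_{\rm min}^{2}(V_{K/2})},
\end{equation}
where $O_1U_S^\dagger\tilde U_S O_2=D$ as in Eq.~\eqref{eq:defD}.

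Next I would verify the precondition of Lemma \ref{lem:U_inv_bound}, namely $\norm{U_0(O_2O_1)^\dagger-\tilde U_0}\leq \sigma_{\rm min}(U_0)/2$. Deleting the last row only shrinks the operator norm, so the left side is at most $\norm{U_S(O_2O_1)^\dagger-\tilde U_S}$ and is therefore controlled by the previous display; for the right side, Lemma \ref{lem:small_sing_U_0} gives $\sigma_{\rm min}(U_0)\geq \sigma_{\rm min}(V_{K/2-1})/\norm{V_{K/2}}$ (using $U_S=V_{K/2}A$ with $A$ invertible, Eq.~\eqref{eq:defA}). Comparing the two and invoking the hypothesis on $\norm{H(\eta)}$ confirms the precondition -- this is the place where one must be careful with the constants and with the mild $\sqrt{S}$-type slack coming from the operator-versus-Frobenius step inside Lemma \ref{lem:U_pert}. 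Lemma \ref{lem:U_inv_bound} then yields
\begin{equation}
    \norm{(O_2O_1)\Psi(O_2O_1)^\dagger-\tilde\Psi}\leq 5\,\norm{U_S(O_2O_1)^\dagger-\tilde U_S}\,\frac{\norm{V_{K/2}}^{2}}{\sigma_{\rm min}^{2}(V_{K/2-1})}.
\end{equation}
Substituting the bound on $\norm{U_S(O_2O_1)^\dagger-\tilde U_S}$ and then using $\sigma_{\rm min}(V_{K/2})\geq \sigma_{\rm min}(V_{K/2-1})$ (appending a row to a full-column-rank matrix cannot decrease the least singular value) to pass from $\sigma_{\rm min}^{2}(V_{K/2})$ in the denominator to the smaller $\sigma_{\rm min}^{2}(V_{K/2-1})$ produces the stated bound $\tfrac{10\sqrt{2S}\,\norm{H(\eta)}\,\norm{V_{K/2}}^{2}}{\sigma_{\rm min}^{4}(V_{K/2-1})}\,c_{\min}^{-1}$.

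The argument contains no new idea beyond those in the preceding lemmas; the only real obstacle is ensuring that the single noise-level hypothesis is strong enough to trigger the preconditions of \emph{both} Lemma \ref{lem:U_pert} and Lemma \ref{lem:U_inv_bound} at once, which is a matter of carefully tracking constants together with the full-rankness of $V_{K/2}$ and $V_{K/2-1}$ guaranteed by $K+1\geq 2S$.
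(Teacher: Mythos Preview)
Your proposal is correct and follows essentially the same route as the paper: bound $\sigma_{\rm min}(H(g))$ via the Vandermonde factorisation, invoke Lemma~\ref{lem:U_pert} to control $\norm{U_S(O_2O_1)^\dagger-\tilde U_S}$, use Lemma~\ref{lem:small_sing_U_0} to verify the hypothesis of Lemma~\ref{lem:U_inv_bound}, and finish by substituting and weakening $\sigma_{\rm min}(V_{K/2})\geq\sigma_{\rm min}(V_{K/2-1})$. Your cautionary remark about the $\sqrt{2S}$ slack when checking the precondition of Lemma~\ref{lem:U_inv_bound} is apt: the paper's write-up silently drops this factor, so strictly speaking the hypothesis $4\norm{H(\eta)}\leq\cdots$ is a bit too weak by a factor $\sqrt{2S}$ to force $\norm{U_S(O_2O_1)^\dagger-\tilde U_S}\leq\sigma_{\rm min}(U_0)/2$ via the chain of inequalities used; you were right to flag it.
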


\begin{proof}
We start from the requirement in Lemma \ref{lem:U_inv_bound} that 
$\norm{U_0(O_2O_1)^\dagger - \tilde{U}_0 } \leq \norm{U_S(O_2O_1)^\dagger - \tilde{U}_S }\leq \sigma_{\rm min}(U_0)/2$. By Lemma \ref{lem:small_sing_U_0}, this is certainly satisfied if $\norm{U_S(O_2 O_1)^\dagger - \tilde{U}_S}\leq \sigma_{\rm min}(V_{K/2-1})\norm{V_{K/2}}^{-1}$/2. Moreover, from Lemma \ref{lem:U_pert} we know that $\norm{U_S(O_2 O_1)^\dagger - \tilde{U}_S} \leq 2\sqrt{2S}\norm{H(\eta)} \sigma_{\rm min}^{-1}(H(g))$ so now let's upperbound $\sigma_{\rm min}^{-1}(H(g))$ in terms of the smallest singular value of $V_{K/2}$. We have
\begin{equation}
    \sigma_{\rm min}^{-1}(H(g)) = \norm{H(g)^+} =\norm{(V_{K/2}^T)^+C^{-1}V_{K/2}^+} \leq \sigma_{\rm min}^{-2}(V_{K/2}) c_{\rm min}^{-1},
\end{equation}
where we used $c_{\rm min}=\min_i c_i$. This means that the condition 
\begin{equation}
    4\norm{H(\eta)}\leq  c_{\rm min}\sigma_{\rm min}^2(V_{K/2}) \sigma_{\rm min}(V_{K/2-1})\norm{V_{K/2}}^{-1},
\end{equation}
allows us to use Lemma \ref{lem:U_inv_bound}
\begin{equation}
    \norm{(O_2 O_1)\Psi(O_2 O_1)^\dagger-\tilde{\Psi}}\leq 10\sqrt{2S}\frac{\norm{H(\eta)}\norm{V_{K/2}}^2}{\sigma_{\rm min}^2(V_{K/2})\sigma_{\rm min}^2(V_{K/2-1})} c_{\rm min}^{-1} \leq \frac{10\sqrt{2S}\norm{H(\eta)}\norm{V_{K/2}}^2}{\sigma_{\rm min}^4(V_{K/2-1})}c_{\rm min}^{-1},
\end{equation}
where we also used the general fact about Vandermonde matrices \cite[theorem 1]{bazan} that $\sigma_{\rm min}(V_{K/2})  \geq \sigma_{\rm min}(V_{K/2-1})$  (i.e. the smallest non-zero singular value of $V_L$ grows monotonically with $L$).
\end{proof}

 We wish to translate the bound in Theorem \ref{thm:signalbound} to a theorem on the distance between the inferred eigenvalues $z_i$ and $\tilde{z}_i$. 
 The argument is as follows. We know from the Bauer-Fike theorem and the fact that $(O_2 O_1)\Psi(O_2 O_1)^\dagger$ is diagonalizable (see \cite[Exercise VIII.3.2]{bhatia2013matrix}) that the operator norm bound on $(O_2 O_1)\Psi(O_2 O_1)^\dagger$ implies a matching bound distance on its eigenvalues $z_i$, using Eq.~\eqref{eq:signal-c}. That is, we have
\begin{equation}
   d( \{z_i\},\{\tilde{z}_j\}):=  \max_{\pi\in \text{Perm}_S}\min_i|z_{\pi(i)}- \tilde{z}_i| \leq (2S-1)\kappa(A (O_2 O_1)^{\dagger})\norm{(O_2 O_1)\Psi(O_2 O_1)^\dagger-\tilde{\Psi}},
\end{equation}
where $\kappa(A (O_2 O_1)^{\dagger})=\kappa(A):= \norm{A}\norm{A^{-1}}$ is the condition number of the invertible matrix $A$ in 
Eq.~\eqref{eq:defA}. We have $A=V_{K/2}^+ U_S$ and since $U_S$ is an isometry we know that
\begin{equation}
    \kappa(A) = \norm{A}\norm{A^+} = \norm{V_{K/2}}\norm{V_{K/2}^+} = \frac{\norm{V_{K/2}}}{\sigma_{\rm min}(V_{K/2})},
\end{equation}
and hence we get a bound on the matching distance of the eigenvalues in terms of known quantities, as expressed in the following Theorem:

\begin{theorem}\label{thm:general-intermed}
Let $y(k) = (g+\eta)(k)$ ($k = 0,\ldots,K$) be the signal with $g(k) = \sum_{i=1}^{S}c_i z_i$, let $\eta(k)$ a small noise vector, and $K+1 \geq 2S$ ($L=K/2$). Let $\tilde{z}_i$ be the output of the ESPRIT algorithm. Then under the noise condition:
\begin{equation}
    4\norm{H(\eta)}\leq c_{\rm min}\sigma_{\rm min}^2(V_{K/2})\sigma_{\rm min}(V_{K/2-1})\norm{V_{K/2}}^{-1},
    \label{eq:noise-cond}
\end{equation}
we have
\begin{equation}
    d(\{z_i\}, \{\tilde z_j\})  \leq (2S-1)\frac{10\sqrt{2S}\norm{H(\eta)} \norm{V_{K/2}}^3}{\sigma_{\rm min}^5(V_{K/2-1})}c_{\rm min}^{-1}.
    \label{eq:upper-gen}
\end{equation}
\end{theorem}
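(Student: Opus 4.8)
The plan is to assemble the desired eigenvalue bound by combining the operator-norm bound of Theorem~\ref{thm:signalbound} on the (rotated) signal matrix with a matching-distance version of the Bauer--Fike perturbation theorem. First I would recall from Eq.~\eqref{eq:signal-c} that the noiseless signal matrix factors as $\Psi = A^{-1} Z A$ with $Z = \mathrm{diag}(z_1,\ldots,z_S)$, so its rotated version $(O_2 O_1)\Psi(O_2 O_1)^\dagger = \big(A(O_2 O_1)^\dagger\big)^{-1} Z \big(A(O_2 O_1)^\dagger\big)$ is diagonalizable with eigenvalues exactly the poles $z_i$ and eigenvector matrix $P := A(O_2 O_1)^\dagger$. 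Since conjugation by the unitary $O_2 O_1$ leaves the spectrum of $\Psi$ unchanged, and ESPRIT (Algorithm~\ref{alg:two}) returns the eigenvalues $\tilde z_i$ of $\tilde\Psi$, the target quantity $d(\{z_i\},\{\tilde z_j\})$ is precisely the optimal matching distance between the spectra of $(O_2 O_1)\Psi(O_2 O_1)^\dagger$ and $\tilde\Psi$.

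Next I would invoke the matching form of the Bauer--Fike theorem for diagonalizable matrices (\cite[Exercise VIII.3.2]{bhatia2013matrix}): for an $S\times S$ diagonalizable $M = P D P^{-1}$ and any $S\times S$ matrix $\tilde{M}$, the optimal matching distance between the spectra of $M$ and $\tilde{M}$ is at most $(2S-1)\,\kappa(P)\,\norm{M-\tilde{M}}$. Applying this with $M = (O_2 O_1)\Psi(O_2 O_1)^\dagger$, $\tilde{M} = \tilde\Psi$, and $P = A(O_2 O_1)^\dagger$, and bounding $\norm{M-\tilde{M}}$ by Theorem~\ref{thm:signalbound} (whose hypothesis is exactly the noise condition~\eqref{eq:noise-cond}), it only remains to evaluate $\kappa(P) = \kappa(A)$, since the unitary factor $O_2 O_1$ cancels in the condition number. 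From $U_S = V_{K/2} A$ (Eq.~\eqref{eq:defA}) and $I_S = U_S^\dagger U_S = A^\dagger V_{K/2}^\dagger V_{K/2} A$, the nonzero singular values of $A$ are the reciprocals of those of $V_{K/2}$, hence $\norm{A} = \sigma_{\rm min}^{-1}(V_{K/2})$ and $\norm{A^{-1}} = \sigma_{\rm max}(V_{K/2}) = \norm{V_{K/2}}$, so $\kappa(A) = \norm{V_{K/2}}/\sigma_{\rm min}(V_{K/2})$.

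Finally I would put the pieces together: under~\eqref{eq:noise-cond}, Theorem~\ref{thm:signalbound} gives $\norm{M-\tilde{M}} \le 10\sqrt{2S}\,\norm{H(\eta)}\,\norm{V_{K/2}}^2\,\sigma_{\rm min}^{-4}(V_{K/2-1})\,c_{\rm min}^{-1}$, so
\[
 d(\{z_i\},\{\tilde z_j\}) \le (2S-1)\,\frac{\norm{V_{K/2}}}{\sigma_{\rm min}(V_{K/2})}\cdot \frac{10\sqrt{2S}\,\norm{H(\eta)}\,\norm{V_{K/2}}^2}{\sigma_{\rm min}^4(V_{K/2-1})}\,c_{\rm min}^{-1},
\]
and then I would use the monotonicity $\sigma_{\rm min}(V_{K/2}) \ge \sigma_{\rm min}(V_{K/2-1})$ (\cite[Theorem 1]{bazan}) to replace the single factor $\sigma_{\rm min}(V_{K/2})$ in the denominator, obtaining the claimed bound~\eqref{eq:upper-gen}. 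The only genuinely delicate point — and the reason the earlier Lemmas \ref{lem:U_pert}--\ref{lem:U_inv_bound} and Theorem~\ref{thm:signalbound} were set up as they were — is the bookkeeping of the canonical-angle rotation $O_2 O_1$: one must track that it is this rotated version of $\Psi$ (not $\Psi$ itself) that Theorem~\ref{thm:signalbound} controls, that the rotation leaves both the spectrum and $\kappa(A)$ invariant, and that the rank and smallness hypotheses feeding into those Lemmas are all implied by~\eqref{eq:noise-cond}; with that verified, the remaining steps are routine norm estimates.
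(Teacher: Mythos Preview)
Your proposal is correct and follows essentially the same route as the paper: apply the matching-distance Bauer--Fike theorem (\cite[Exercise VIII.3.2]{bhatia2013matrix}) to the diagonalizable matrix $(O_2O_1)\Psi(O_2O_1)^\dagger$, bound the perturbation by Theorem~\ref{thm:signalbound}, compute $\kappa(A(O_2O_1)^\dagger)=\kappa(A)=\norm{V_{K/2}}/\sigma_{\rm min}(V_{K/2})$ from $U_S=V_{K/2}A$, and finish with the monotonicity $\sigma_{\rm min}(V_{K/2})\ge\sigma_{\rm min}(V_{K/2-1})$. The only cosmetic difference is that the paper obtains $\kappa(A)$ by writing $A=V_{K/2}^+U_S$ and using that $U_S$ is an isometry, whereas you argue via $I_S=A^\dagger V_{K/2}^\dagger V_{K/2}A$; these are the same observation.
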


This theorem thus holds for both the decaying as well as oscillatory signal. \\

In order to use the Theorem, one has to lower bound $\sigma_{\rm min}(V_{L-1})$ (and more trivially, upper bound $\norm{V_{L}}$ as well) for $L=K/2$. For complex poles $z_i$ on the unit circle one can get very good bounds, assuming a gap, see Eq.~\eqref{eq:gap-del}. 

To lower bound $\sigma_{\rm min}(V_{L-1})$ for a purely decaying signal, we start with the following characterization of \emph{square} Vandermonde matrices with real poles due to Gautschi:
\begin{theorem}[Theorem 1 in \cite{gautschi1962inverses}]
Let $V_{S-1}$ be a square $S \times S$ Vandermonde matrix with $S$ (unequal) real positive poles $z_1,\ldots,z_S$. Then $\infty$ norm of $V_{S-1}^{-1}$ is
\begin{equation}
    \norm{V_{S-1}^{-1}}_{\infty}:= \max_{i}\sum_{j}\left|\big(V_{S-1}^{-1}\big)_{ij}\right| = \max_{i\in \{1,\ldots,S\}} \prod_{j=1, j\neq i}^{S}\frac{1+z_i}{|z_j-z_i|}.
    \label{eq:gaut}
\end{equation}
\end{theorem}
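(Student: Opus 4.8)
The plan is to make the inverse $V_{S-1}^{-1}$ explicit via Lagrange interpolation and then read off its maximal absolute row sum. First I would observe that the transpose $V_{S-1}^{T}$ carries the coefficient vector $(a_0,\ldots,a_{S-1})$ of a polynomial $p(x)=\sum_m a_m x^m$ of degree at most $S-1$ to the vector $(p(z_1),\ldots,p(z_S))$ of its samples at the (distinct) nodes. Its inverse is therefore polynomial interpolation: the unique such $p$ with prescribed values $y_j$ at $z_j$ is $p(x)=\sum_{i=1}^{S} y_i\, \ell_i(x)$ with $\ell_i(x)=\prod_{j\neq i}\frac{x-z_j}{z_i-z_j}$. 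Comparing coefficients shows that the $(i,k)$ entry of $V_{S-1}^{-1}$ equals the coefficient of $x^{k-1}$ in $\ell_i$; in particular the $i$-th row of $V_{S-1}^{-1}$ is exactly the coefficient vector of $\ell_i$, and hence $\norm{V_{S-1}^{-1}}_{\infty}=\max_i \sum_{m=0}^{S-1}\bigl|(\ell_i)_m\bigr|$, where $(\ell_i)_m$ denotes the coefficient of $x^m$ in $\ell_i$.

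The second step is to evaluate $\sum_m|(\ell_i)_m|$ using the sign structure forced by positivity of the nodes. Write $\ell_i(x)=q_i(x)/D_i$ with $q_i(x)=\prod_{j\neq i}(x-z_j)=\sum_m c_m x^m$ and $D_i=\prod_{j\neq i}(z_i-z_j)$. By Vieta's formulas $c_m=(-1)^{S-1-m}\,e_{S-1-m}\bigl(\{z_j\}_{j\neq i}\bigr)$, where $e_r$ is the $r$-th elementary symmetric polynomial of the remaining $S-1$ nodes. Since every $z_j>0$ we have $e_r\geq 0$, so the $c_m$ alternate in sign and $|c_m|=e_{S-1-m}$. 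Summing, $\sum_m|c_m|=\sum_{r=0}^{S-1}e_r\bigl(\{z_j\}_{j\neq i}\bigr)=\prod_{j\neq i}(1+z_j)$ (equivalently, $|q_i(-1)|$). Dividing by $|D_i|=\prod_{j\neq i}|z_j-z_i|$ gives $\sum_m|(\ell_i)_m|=\prod_{j\neq i}\frac{1+z_j}{|z_j-z_i|}$, and taking the maximum over $i$ produces the stated identity~\eqref{eq:gaut}.

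I do not expect a serious obstacle: once the interpolation form of the inverse is in place the rest is elementary. The two points that require care are, first, getting the transpose conventions right so that the \emph{row} sums of $V_{S-1}^{-1}$ (and not the combinations $\sum_i|(\ell_i)_m|$, which do not collapse to a product) are what enter the $\infty$-norm; and second, the alternating-sign observation, which is the only place the hypothesis that the $z_j$ are real and positive is genuinely used --- dropping it leaves only the upper bound $\sum_m|c_m|\le\prod_{j\neq i}(1+|z_j|)$, and it is the common sign of the nodes that turns this into an equality. An alternative, interpolation-free route is to substitute the classical closed form for the entries of a Vandermonde inverse in terms of the $e_r$'s and node differences and to sum the absolute values of one row directly; it yields the same product but with heavier bookkeeping.
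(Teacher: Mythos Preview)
The paper does not prove this statement; it is quoted verbatim from Gautschi (1962) and used as a black box. Your Lagrange-interpolation argument is exactly the classical route to this identity and is correct: the $i$th row of $V_{S-1}^{-1}$ (in the paper's convention $(V_{S-1})_{ij}=z_j^{\,i-1}$) is the coefficient vector of $\ell_i$, and positivity of the nodes forces alternating signs so that $\sum_m|(\ell_i)_m|=|\ell_i(-1)|=\prod_{j\neq i}(1+z_j)/|z_i-z_j|$.

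One point worth flagging: your computation yields
\[
\norm{V_{S-1}^{-1}}_{\infty}=\max_i\prod_{j\neq i}\frac{1+z_j}{|z_j-z_i|},
\]
with $1+z_j$ in the numerator, whereas the statement as printed in the paper has $1+z_i$. Your version is the correct one (a quick check with $z=(1,2,3)$ gives $\norm{V_2^{-1}}_{\infty}=8$, which matches $\prod_{j\neq 2}(1+z_j)/|z_j-z_2|$ but not the $1+z_i$ variant). This is evidently a typo in the paper's transcription of Gautschi's result; it is harmless for the downstream bounds in Appendix~D since there one only uses $1+z_\bullet\le 2$ regardless of the index.
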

Based on this Theorem we can work out a very similar statement for non-square Vandermonde matrices $V_{L-1}$ where $L$ is a multiple of $S$. Note that this Lemma does not depend on any gap. 
\begin{lemma}\label{lem:non-sq}
Let $V_{ST-1}$ be an $ST\times S$ Vandermonde matrix (where $T$ is a positive integer) with $S$ (unequal) real positive poles $z_1,\ldots,z_S\leq 1$. Then we have
\begin{equation}
    \norm{V_{ST-1}^{+}}_{\infty} \leq 2 \norm{V_{S-1}^{-1}}_{\infty}.
\end{equation}
\end{lemma}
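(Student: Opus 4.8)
The plan is to reduce the tall Vandermonde matrix to the square one that already sits inside it, and then control the pseudo-inverse by a perturbation argument anchored on Gautschi's square-case formula.

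\emph{Block structure and reduction.} First I would split the $ST$ rows of $V_{ST-1}$ into $T$ consecutive blocks of $S$ rows. A direct computation from Eq.~\eqref{eq:V} shows that block $t$ (for $t=0,\dots,T-1$) equals $V_{S-1}D_t$ with $D_t=\text{diag}(z_1^{tS},\dots,z_S^{tS})$. Since $0<z_i\le 1$ for all $i$, each $D_t$ is diagonal with entries in $(0,1]$ and in particular $D_0=I$, so the square Vandermonde matrix $V_{S-1}$ is literally the first block of $V_{ST-1}$. Writing $\Pi_0=[\,I_S\mid 0\mid\dots\mid 0\,]$ for the selector of that block, $L_0:=V_{S-1}^{-1}\Pi_0$ is a left inverse of $V_{ST-1}$ with $\norm{L_0}_\infty=\norm{V_{S-1}^{-1}}_\infty$; and since $V_{ST-1}^{+}$ vanishes on $\mathrm{col}(V_{ST-1})^{\perp}$, one has $V_{ST-1}^{+}=L_0P=V_{S-1}^{-1}\,(\text{first }S\text{ rows of }P)$, where $P=V_{ST-1}V_{ST-1}^{+}$ is the orthogonal projector onto $\mathrm{col}(V_{ST-1})$. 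Equivalently $V_{ST-1}^{+}-V_{S-1}^{-1}\Pi_0=-V_{S-1}^{-1}\Pi_0(I-P)$, so it suffices to bound the $\infty$-norm of this correction by $\norm{V_{S-1}^{-1}}_\infty$.

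\emph{Polynomial dictionary.} Next I would translate both objects into polynomials in coefficient form. Row $i$ of $V_{S-1}^{-1}$ is the coefficient vector of the Lagrange polynomial $\ell_i$ ($\deg\le S-1$, $\ell_i(z_j)=\delta_{ij}$), and since the $z_j$ are positive its coefficients strictly alternate in sign, so its $\ell_1$-norm equals $|\ell_i(-1)|$, which is precisely the content of Gautschi's formula, Eq.~\eqref{eq:gaut}. Row $i$ of $V_{ST-1}^{+}$ is the coefficient vector of the \emph{unique minimal-$\ell_2$-coefficient-norm} polynomial $q_i$ of degree $\le ST-1$ with $q_i(z_j)=\delta_{ij}$; the solution set is $q_i=\ell_i+m\,g_i$ with $m(x)=\prod_{j=1}^{S}(x-z_j)$ and $g_i$ a correction of degree $\le S(T-1)-1$ fixed by the normal equations. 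Thus Lemma~\ref{lem:non-sq} reduces to showing $\max_i\norm{q_i}_{\ell_1}\le 2\max_i\norm{\ell_i}_{\ell_1}$.

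\emph{Getting the constant, and the main obstacle.} The lever for the factor $2$ is the Loewner domination
\begin{equation}
V_{ST-1}^{\,T}V_{ST-1}=\sum_{t=0}^{T-1}D_tGD_t\ \succeq\ G:=V_{S-1}^{\,T}V_{S-1},
\end{equation}
coming from the $t=0$ term (with $D_t\preceq I$ giving the matching upper side), which through $V_{ST-1}^{+}=(V_{ST-1}^{T}V_{ST-1})^{-1}V_{ST-1}^{T}$ controls the least-squares correction $g_i$; one then bounds $\norm{q_i}_{\ell_1}$ by combining $|q_i(-1)|=|\ell_i(-1)+m(-1)g_i(-1)|$ with an estimate of the ``non-alternating excess'' of $q_i$ (its coefficients, unlike those of $\ell_i$, need not alternate because $q_i$ carries $S(T-1)$ extra, possibly complex, roots). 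A convenient equivalent reformulation I would try first writes $V_{ST-1}=\left(\begin{smallmatrix}I\\ Q\end{smallmatrix}\right)V_{S-1}$ with $Q$ the vertical stack of the real matrices $V_{S-1}D_tV_{S-1}^{-1}$, $t=1,\dots,T-1$, so that $V_{ST-1}^{+}=V_{S-1}^{-1}(I+Q^{T}Q)^{-1}(I\mid Q^{T})$ and $I+Q^{T}Q=V_{S-1}^{-T}\big(\sum_tD_tGD_t\big)V_{S-1}^{-1}$ is again dominated below by $I$. The main obstacle is that the two factors $V_{S-1}^{-1}$ (whose entries can be enormous, as Gautschi's formula shows when nodes are close) and the damping operator cannot be decoupled: the crude estimate $\norm{V_{ST-1}^{+}}_\infty\le\norm{V_{S-1}^{-1}}_\infty\,\norm{\text{top }S\text{ rows of }P}_\infty$ is far too lossy, since a row of the orthogonal projector onto truncated geometric sequences need not be $\ell_1$-concentrated, and only yields a $\sqrt{ST}$-type factor. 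Pinning down the constant $2$ therefore requires extracting a cancellation between the large entries of $V_{S-1}^{-1}$ and the spread-out rows of $P$ — equivalently a sharp normal-equations estimate on $g_i$ — and this delicate step is where I expect most of the work to lie.
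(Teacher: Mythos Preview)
Your proposal has a genuine gap: you have set up the problem correctly but are chasing a much harder target than necessary, and the ``delicate step'' you flag is never actually needed. The paper does not attempt to control the true Moore--Penrose pseudo-inverse through projector bounds or normal-equation estimates. Instead, it writes down a \emph{different, explicit} left inverse of $V_{ST-1}$ and computes its $\infty$-norm in one line.

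You already noted that block $t$ of $V_{ST-1}$ is $V_{S-1}Z^{tS}$ with $Z=\mathrm{diag}(z_1,\ldots,z_S)$, and you built the left inverse $L_0=V_{S-1}^{-1}\Pi_0$ using only the first block. The paper instead uses \emph{all} blocks with diagonal weights: it takes
\[
M \;=\; (I-Z^{2S})(I-Z^{2ST})^{-1}\,\big[\,V_{S-1}^{-1}\;\big|\;Z^{S}V_{S-1}^{-1}\;\big|\;\cdots\;\big|\;Z^{(T-1)S}V_{S-1}^{-1}\,\big],
\]
and a geometric-series computation gives $MV_{ST-1}=I_S$. Because every prefactor here is diagonal, row $i$ of $M$ is just row $i$ of $V_{S-1}^{-1}$, repeated across the $T$ column-blocks with scalar weights $\tfrac{1-z_i^{2S}}{1-z_i^{2ST}}\,z_i^{tS}$. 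Summing these weights over $t$ gives exactly $\tfrac{1+z_i^{S}}{1+z_i^{ST}}\le 2$, so $\norm{M}_\infty\le 2\norm{V_{S-1}^{-1}}_\infty$ with no further work: no projectors, no Loewner order, no analysis of the non-alternating excess of $q_i$.

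Two remarks. First, the paper writes $M=V_{ST-1}^{+}$, but $M$ is in fact not the Moore--Penrose pseudo-inverse (one can check that $V_{ST-1}M$ fails to be symmetric already for $S=T=2$); it is only a left inverse. This is harmless for the downstream use, since the sole consumer of the lemma is the bound $\sigma_{\min}(V_{ST-1})\ge \norm{M}^{-1}\ge (\sqrt{S}\,\norm{M}_\infty)^{-1}$, which holds for \emph{any} left inverse $M$. Second, the obstacle you diagnosed --- that decoupling $V_{S-1}^{-1}$ from the rows of $P$ in $\infty$-norm is lossy --- is a real obstruction if one insists on the genuine Moore--Penrose matrix; the paper simply sidesteps it by not using that matrix.
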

\begin{proof}
Note that 
\begin{equation}
    V_{ST-1} =  \begin{pmatrix} V_{S-1}^T & Z^S V_{S-1}^T & Z^{2S} V_{S-1}^T& \cdots & Z^{(T-1)S}V_{S-1}^T\end{pmatrix}^T,
\end{equation}
with $Z = \mathrm{diag}(z_1,\ldots, z_S)$, using $(V_{S-1})_{ij}=z_j^{i-1}$ and $(Z^S V_{S-1}^T)_{ij}=z_i^{s+j-1}$.

The pseudo-inverse $V_{ST-1}^+$ of size $S \times ST$ can be directly calculated as
\begin{equation}
    V_{ST-1}^+ =   (I- Z^{2S})(I-Z^{2ST})^{-1}  \begin{pmatrix} V_{S-1}^{-1} & Z^S V_{S-1}^{-1} & Z^{2S}V_{S-1}^{-1} & \cdots & Z^{(T-1)S}V_{S-1}^{-1} \end{pmatrix} ,
\end{equation}
using a geometric series.
Hence $\norm{V_{ST-1}^{+}}_{\infty}$ can be calculated to be
\begin{align}
    \max_{i\in \{1,\ldots,S\}} \frac{1- z_i^{2S}}{1-z_i^{2ST}}\sum_{p=0}^{T-1} z_i^{pS} \prod_{j=1, j\neq i}^{S}\frac{1+z_i}{|z_j-z_i|}\leq \max_i \left[
    \frac{1- z_i^{2S}}{1-z_i^{2ST}}\frac{1-z_i^{ST}}{1-z_i^S}\right]
    \norm{V_{S-1}^{-1}}_{\infty}= \notag \\ \max_i \left[\frac{1+z_i^S}{1+z_i^{ST}}\right] \norm{V_{S-1}^{-1}}_{\infty}\leq 2 \norm{V_{S-1}^{-1}}_{\infty}.
    \end{align}
which gives the Lemma statement. 
\end{proof}

To apply this Lemma, we use that $\sigma_{\rm min}^{-1}(V_{ST-1}) = \norm{V_{ST-1}^+}\leq  \sqrt{S} \norm{V_{ST-1}^{+}}_{\infty}$ so that 
\begin{align}
    \sigma_{\rm min}(V_{ST-1})\geq (2 \sqrt{S})^{-1} \norm{V_{S-1}^{-1}}_{\infty}^{-1},
\end{align}
for any $T$. Unlike the lower bound for the real-time signal which explicitly uses the gap $\Delta$ in Eq.~\eqref{eq:gap-del}, this bound does not improve with $T$.
We will now use the gap to upper bound $\norm{V_{S-1}^{-1}}_{\infty}$ given in Eq.~\eqref{eq:gaut}, thus lower bounding $\sigma_{\rm min}(V_{ST-1})$ for any $T$. This is done in the proof of our final Theorem \ref{thm:final} (and its slight adaptation Theorem \ref{thm:final-D}) restated here:

\begin{tada}
Let $(g+\eta)(k)$ be an imaginary-time decaying signal (of length $K$) with $g(k) = \sum_{i=1}^{S}c_iz_i^k$, $c_i > 0\: \forall i$, $c_{\min}=\min_i c_i$ and $\eta(k)$ a small noise vector. Let $z_i = e^{-E_i}$ with $E_{i} \in [0, 2\pi)$ and given eigenvalue gap $\Delta < 1$ in Eq.~\eqref{eq:gap}, and $\{\tilde{E}_i\}$ the energy estimates of ESPRIT with $L=K/2$. Let $K+1 \geq 2S$, $K$ even and $K=TS$ for some positive integer $T$. 
If we have
\begin{equation}
    \norm{H(\eta)}\leq \frac{c_{\rm min}}{\sqrt{K}} g_1(S,\Delta),
    \label{eq:suff-cond-app}
\end{equation}
with 
\begin{align}
    g_1(S,\Delta)=\frac{1}{32 S^2}\, (e^{-2\pi}\pi \Delta)^{3(S-1)}, 
\end{align}
then 
\begin{equation}
    d(\{\tilde{E}_i\}, \{E_j\}) \leq  \norm{H(\eta)}\,c_{\min}^{-1} K\sqrt{K} g_2(S,\Delta),
    \label{eq:upper-app}
\end{equation}
with
\begin{align}
g_2(S, \Delta)= e^{2\pi} \sqrt{2} 640  \,S^{5.5} \,  (e^{-2\pi}\pi \Delta)^{-5(S-1)}.
\end{align}
\end{tada}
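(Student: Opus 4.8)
The plan is to feed the general ESPRIT perturbation bound of Theorem~\ref{thm:general-intermed} with concrete estimates for the Vandermonde quantities $\norm{V_{K/2}}$ and $\sigma_{\rm min}(V_{K/2-1})$ that are tailored to purely decaying poles $z_i=e^{-E_i}\in(e^{-2\pi},1]$, and then translate the resulting bound on the pole-matching distance $d(\{z_i\},\{\tilde z_j\})$ into a bound on the energy-matching distance via $E=-\log z$. Concretely: (i) upper bound $\norm{V_{K/2}}$ using $|z_i|\le 1$; (ii) lower bound $\sigma_{\rm min}(V_{K/2-1})$ by reducing to a \emph{square} Vandermonde matrix with Lemma~\ref{lem:non-sq}, then invoking Gautschi's inverse formula Eq.~\eqref{eq:gaut}, then the eigenvalue gap Eq.~\eqref{eq:gap}; (iii) verify that substituting these into the noise hypothesis Eq.~\eqref{eq:noise-cond} of Theorem~\ref{thm:general-intermed} yields precisely the condition $\norm{H(\eta)}\le c_{\rm min}K^{-1/2}g_1(S,\Delta)$ of Eq.~\eqref{eq:suff-cond-app}; (iv) substitute the same estimates into the conclusion Eq.~\eqref{eq:upper-gen} and collect powers of $S$, $K$, $\Delta$ to read off $g_2$; (v) perform the $z\mapsto E$ conversion to reach Eq.~\eqref{eq:upper-app}.

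For step (i), since every entry of $V_L$ has modulus at most $1$, $\norm{V_{K/2}}\le\norm{V_{K/2}}_F\le\sqrt{S(K/2+1)}\le\sqrt{SK}$. For step (ii), set $\delta_0:=e^{-2\pi}\pi\Delta$. By monotonicity of the smallest nonzero singular value in the number of rows (\cite[Theorem 1]{bazan}), I may replace $V_{K/2-1}$ by $V_{ST'-1}$ for the largest positive integer $T'$ with $ST'\le K/2$ (which exists since $K\ge 2S$), and then Lemma~\ref{lem:non-sq} together with $\sigma_{\rm min}^{-1}(V_{ST'-1})=\norm{V_{ST'-1}^+}\le\sqrt{S}\,\norm{V_{ST'-1}^+}_\infty$ gives $\sigma_{\rm min}(V_{K/2-1})\ge (2\sqrt S)^{-1}\norm{V_{S-1}^{-1}}_\infty^{-1}$. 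Gautschi's theorem Eq.~\eqref{eq:gaut} evaluates $\norm{V_{S-1}^{-1}}_\infty=\max_i\prod_{j\ne i}\frac{1+z_i}{|z_j-z_i|}$; I bound the numerator by $1+z_i\le 2$ and, using the mean value theorem together with $E_i,E_j\in[0,2\pi)$ and the gap definition, bound the denominator by $|z_j-z_i|=|e^{-E_j}-e^{-E_i}|\ge e^{-2\pi}|E_j-E_i|\ge 2\pi e^{-2\pi}\Delta$. Hence $\norm{V_{S-1}^{-1}}_\infty\le(\pi e^{-2\pi}\Delta)^{-(S-1)}=\delta_0^{-(S-1)}$ — this product over $S-1$ factors is exactly the source of the exponential-in-$S$ blow-up — and so $\sigma_{\rm min}(V_{K/2-1})\ge \tfrac12 S^{-1/2}\delta_0^{\,S-1}$.

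For step (iii), in Eq.~\eqref{eq:noise-cond} use $\sigma_{\rm min}(V_{K/2})\ge\sigma_{\rm min}(V_{K/2-1})$, so its right-hand side is at least $c_{\rm min}\big(\tfrac12 S^{-1/2}\delta_0^{\,S-1}\big)^3(SK)^{-1/2}=c_{\rm min}\,\delta_0^{3(S-1)}/(8S^2\sqrt K)$; thus Eq.~\eqref{eq:noise-cond} is implied by $\norm{H(\eta)}\le c_{\rm min}\,\delta_0^{3(S-1)}/(32S^2\sqrt K)=c_{\rm min}K^{-1/2}g_1(S,\Delta)$, as stated. For step (iv), in Eq.~\eqref{eq:upper-gen} substitute $\norm{V_{K/2}}^3\le(SK)^{3/2}$, $\sigma_{\rm min}^{-5}(V_{K/2-1})\le 32\,S^{5/2}\delta_0^{-5(S-1)}$, $2S-1\le 2S$ and $10\sqrt{2S}=10\sqrt2\,S^{1/2}$; collecting constants $2\cdot10\sqrt2\cdot32=640\sqrt2$ and powers $S^{1+1/2+3/2+5/2}=S^{5.5}$, $K^{3/2}$, I get $d(\{z_i\},\{\tilde z_j\})\le 640\sqrt2\,S^{5.5}K^{3/2}\delta_0^{-5(S-1)}\norm{H(\eta)}c_{\rm min}^{-1}$. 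For step (v), writing $E_i=-\log z_i$, $\tilde E_i=-\log\tilde z_i$, the mean value theorem gives $|E_i-\tilde E_i|=|z_i-\tilde z_i|/\xi_i$ with $\xi_i$ between $z_i$ and $\tilde z_i$; since $z_i\ge e^{-2\pi}$ (and, in the regime where the bound is non-vacuous, the matched estimates $\tilde z_i$ remain bounded away from the origin by a comparable amount) one has $|E_i-\tilde E_i|\le 2\pi e^{2\pi}|z_i-\tilde z_i|$, so after dividing by $2\pi$ per Eq.~\eqref{eq:matching_error} one obtains $d(\{\tilde E_i\},\{E_j\})\le e^{2\pi}\,d(\{z_i\},\{\tilde z_j\})$, i.e.\ Eq.~\eqref{eq:upper-app} with $g_2(S,\Delta)=e^{2\pi}\sqrt2\cdot640\,S^{5.5}\delta_0^{-5(S-1)}$. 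Theorem~\ref{thm:final-D} follows by the identical argument with $z_i=1-E_i/(2\pi)\in[\tfrac12,1]$, where now $|z_j-z_i|=|E_j-E_i|/(2\pi)\ge\Delta$ directly (no $e^{-2\pi}$ loss) and the conversion $E=2\pi(1-z)$ has $|dE/dz|=2\pi$, which cancels the $1/(2\pi)$ in Eq.~\eqref{eq:matching_error} and thereby removes the $e^{2\pi}$ and replaces $\delta_0$ by $\Delta$ throughout.

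The main obstacle is step (ii): extracting a usable lower bound on $\sigma_{\rm min}(V_{K/2-1})$ from Gautschi's exact formula for the inverse of a square real Vandermonde matrix is where the decaying case diverges qualitatively from the oscillatory one. The extra factor $e^{-2\pi}$ in $|z_j-z_i|\ge 2\pi e^{-2\pi}\Delta$ — which is simply absent when the poles sit on the unit circle — compounds over the $S-1$ factors of the product in Eq.~\eqref{eq:gaut} into the $\delta_0^{-(S-1)}$ factor, hence the $\delta_0^{-5(S-1)}$ in $g_2$ and the impossibility of taking $S=\mathrm{poly}(n)$. The remaining work — handling the non-square matrix via Lemma~\ref{lem:non-sq} and checking that the monotonicity reduction is legitimate for every admissible $K=TS$ — is routine bookkeeping; a secondary, cosmetic point is that the final $z\mapsto E$ conversion is literally meaningful only where the right-hand side of Eq.~\eqref{eq:upper-app} is small, which is exactly the $S=O(1)$, $\norm{H(\eta)}=1/\mathrm{poly}(n)$ regime in which the theorem is applied.
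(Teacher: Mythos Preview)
Your proof is correct and follows essentially the same route as the paper: bound $\norm{V_{K/2}}\le\sqrt{SK}$, lower bound $\sigma_{\rm min}(V_{K/2-1})$ by reducing (via monotonicity and Lemma~\ref{lem:non-sq}) to the square case and applying Gautschi's formula Eq.~\eqref{eq:gaut} together with the pole-gap estimate $|z_i-z_j|\ge 2\pi e^{-2\pi}\Delta$, plug both into Theorem~\ref{thm:general-intermed}, and then convert poles to energies. Your bookkeeping of constants and powers of $S,K$ matches the paper exactly, and you are in fact more explicit than the paper about the monotonicity reduction to $V_{ST'-1}$ and about the caveat that the $z\mapsto E$ step presumes the matched $\tilde z_i$ stay bounded away from $0$ (the paper simply asserts $\tilde z_i\in(e^{-2\pi},1]$ without justification).
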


\begin{proof}
First, we can lift the bound on the eigenvalue distance $d(\{z_i\}, \{\tilde z_j\})$ to one on energies $d(\{E_i\}, \{\tilde E_j\})$ defined in Eq.~\eqref{eq:matching_error}, with $\tilde{E}_i := -\log(\tilde z_i)$ and $E_i \in [0, 2\pi)$ by noting that 
\begin{align}
    \frac{1}{2\pi}|E_{\pi(i)}- \tilde{E}_i| = \frac{1}{2\pi}|\log(z_{\pi(i)})- \log(\tilde{z}_i)|
    =\frac{1}{2\pi} |\log\big(1 - (\tilde{z}_i -z_{\pi(i)})/\tilde{z}_i\big)| \notag \\
    \leq |(\tilde{z}_i -z_{\pi(i)})/\tilde{z}_i\big)| 
    \leq e^{2\pi}|(\tilde{z}_i -z_{\pi(i)})|,
    \label{eq:convEz}
\end{align}
using that $z_i \in (e^{-2\pi},1]$ and $\tilde{z}_i \in (e^{-2\pi},1]$. In particular, for the first inequality, let $x=(\tilde{z}_i -z_{\pi(i)})/\tilde{z}_i$. If $x < 0$, $|\log (1-x)|\leq |x|$. If $x> 0$, since $z_i\in (e^{-2\pi},1]$, we have $x \leq 1-e^{-2\pi}$, so that $|\log(1-x)|\leq 2\pi |x|$. 

Second, let us now use the gap condition $|E_i - E_j|\geq 2\pi \Delta$ in Eq.~\eqref{eq:gap}. This leads to a gap condition on the $z_i$ themselves through (assuming w.l.o.g. that $z_i\geq z_j$):
\begin{align}
2\pi \Delta \leq|E_i - E_j| =|\log(z_i/z_j)|,
\end{align}
and thus
\begin{equation}
    e^{2\pi \Delta} \leq \frac{z_i}{z_j} = \frac{z_i-z_j}{z_j} +1,
\end{equation}
which gives
\begin{equation}
|z_i-z_j| = z_i-z_j \geq z_j (e^{2\pi \Delta}-1) \geq e^{-2\pi} 2 \pi \Delta.
\end{equation}
This implies through Eq.~\eqref{eq:gaut} that
\begin{align}
\norm{V_{S-1}^{-1}}_{\infty}\leq (\pi \Delta)^{-(S-1)} e^{2\pi(S-1)},
\end{align}
so that 
\begin{align}
\sigma_{\rm min}(V_{ST-1}) \geq (2\sqrt{S})^{-1} (\pi \Delta e^{-2\pi})^{(S-1)},
\end{align}
for any integer $T$. We note that this lower bound on $\sigma_{\rm min}$ is exponentially small in $S$ as $\Delta < 1$.

Third, we need an upper bound on $\norm{V_L}$ in order to use Theorem \ref{thm:general-intermed}. For $z_j \in (0,1]$, we have $\norm{V_L}\leq \norm{V_L}_F=\left(\sum_{i,j} z_j^{2(i-1)}\right)^{1/2}\leq S^{1/2} \left(\sum_{i=1}^{L+1} z_{\rm max}^{2(i-1)}\right)^{1/2}=S^{1/2}\left(\frac{1-z_{\rm max}^{2(L+1)}}{1-z_{\rm max}^2}\right)^{1/2}$ which can tend to $(S(L+1))^{1/2}$ when $z_{\rm max} \rightarrow 1$, so we use the simple upper bound $(S(L+1))^{1/2}\leq (SK)^{1/2}$. Putting all this together allows to translate Eq.~\eqref{eq:upper-gen} to Eq.~\eqref{eq:upper-app}. 
The condition on $\norm{H(\eta)}$ in Eq.~\eqref{eq:noise-cond} then translates to the sufficient condition in Eq.~\eqref{eq:suff-cond-app} using that $\sigma_{\rm min}(V_{K/2})\geq \sigma_{\rm min}(V_{K/2-1})$, the lower bound on $\sigma_{\rm min}(V_{K/2-1})$, and the upper bound on $\norm{V_{K/2}}$.
\end{proof}

The adapted version, Theorem \ref{thm:final-D}, is proved almost identically (but requires that all $E_i$ are in principle bounded away from $2\pi$):

\begin{tada2}
Let $(g+\eta)(k)$ be an imaginary-time decaying signal (of length $K$) with $g(k) = \sum_{i=1}^{S}c_iz_i^k$, $c_i > 0, \:\forall i$, $c_{\min}=\min_i c_i$, and $\eta(k)$ a small noise vector. Let $z_i = 1-E_i/2\pi$ with $E_{i} \in [0, \pi]$ and given eigenvalue gap $\Delta < 1$ in Eq.~\eqref{eq:gap}, and $\{\tilde{E}_i\}$ the energy estimates of ESPRIT with $L=K/2$. Let $K+1 \geq 2S$, $K$ even and $K=TS$ for some positive integer $T$. 
If we have
\begin{equation}
    \norm{H(\eta)}\leq \frac{c_{\rm min}}{\sqrt{K}} \tilde{g}_1(S,\Delta),
    \label{eq:suff-cond}
\end{equation}
with 
\begin{align}
    \tilde{g}_1(S,\Delta)=\frac{1}{32 S^2}\,  \Delta^{3(S-1)}, \end{align}
then 
\begin{equation}
    d(\{\tilde{E}_i\}, \{E_j\}) \leq  \norm{H(\eta)}\,c_{\min}^{-1} K\sqrt{K} \tilde{g}_2(S,\Delta),
    \label{eq:upper-d}
\end{equation}
with
\begin{align}
\tilde{g}_2(S, \Delta)= 640 \sqrt{2}  \,S^{5.5} \,  \Delta^{-5(S-1)}.
\end{align}
\end{tada2}

\begin{proof}
First, we convert the eigenvalue distance $d(\{z_i\}, \{\tilde z_j\})$ to one on energies $d(\{E_i\}, \{\tilde E_j\})$ defined in Eq.~\eqref{eq:matching_error}, with $\tilde{E}_i := 2\pi(1-\tilde{z}_i)$, so 
\begin{align}
    \frac{1}{2\pi}|E_{\pi(i)}- \tilde{E}_i| = |z_{\pi(i)}-\tilde{z}_i|.
    \label{eq:convEz-D}
\end{align}
Second, let us now use the gap condition $|E_i - E_j|\geq 2\pi \Delta$ in Eq.~\eqref{eq:gap}. This leads to a gap condition on the $z_i$ themselves through:
\begin{equation}
|z_i-z_j| \geq \Delta.
\end{equation}
This implies through Eq.~\eqref{eq:gaut} that
\begin{align}
\norm{V_{S-1}^{-1}}_{\infty}\leq \Delta^{-(S-1)} 
\end{align}
so that 
\begin{align}
\sigma_{\rm min}(V_{ST-1}) \geq (2\sqrt{S})^{-1}\Delta ^{(S-1)},
\end{align}
for any integer $T$. Following identical steps as in the proof of the previous Theorem then leads to the final statements.
\end{proof}

\printbibliography

\end{document}